\renewenvironment{abstract}{
\hfill\begin{minipage}{0.95\textwidth}
\rule{\textwidth}{1pt}}
{\par\noindent\rule{\textwidth}{1pt}\end{minipage}}
\renewcommand\@biblabel[1]{}
\renewcommand\@maketitle{
\hfill
\begin{minipage}{0.95\textwidth}
\vskip 2em
\let\footnote\thanks 
{\LARGE \@title \par }
\vskip 1.5em
{\large \@author \par}
\end{minipage}
\vskip 1em \par
}
\newtheorem{theorem}{Theorem}
\newtheorem{lemma}{Lemma}
\title{\textbf{{\large ON APPROXIMATE LEAST SQUARES ESTIMATORS OF PARAMETERS OF ONE-DIMENSIONAL CHIRP SIGNAL}}}
\author[$\dagger$]{Rhythm Grover}
\author[$\dagger$,$\ddagger$]{Debasis Kundu}
\author[$\dagger$]{Amit Mitra}
\affil[$\dagger$]{Department of Mathematics, Indian Institute of Technology Kanpur, Kanpur - 208016, India}
\affil[$\ddagger$]{Corresponding author. Email: kundu@iitk.ac.in}
\date{}
\begin{document}
\maketitle

\begin{abstract}
\textbf{Abstract:} Chirp signals are quite common in many natural and man-made systems like audio signals, sonar, radar etc. Estimation of the unknown parameters of a signal is a fundamental problem in statistical signal processing. Recently, Kundu and Nandi \cite{2008} studied the asymptotic properties of least squares estimators of the unknown parameters of a simple chirp signal model under the assumption of stationary noise. In this paper, we propose periodogram-type estimators called the approximate least squares estimators to estimate the unknown parameters and study the asymptotic properties of these estimators under the same error assumptions. It is observed that the approximate least squares estimators are strongly consistent and asymptotically equivalent to the least squares estimators. Similar to the periodogram estimators, these estimators can also be used as initial guesses to find the least squares estimators of the unknown parameters. We perform some numerical simulations to see the performance of the proposed estimators and compare them with the least squares estimators and the estimators proposed by Lahiri et al., \cite{2013}. We have analysed two real data sets for illustrative purposes.
\\ \textbf{Key Words and Phrases: }Chirp signals, stationary, least squares estimators, approximate least squares estimators, consistent.
\end{abstract}

\section{Introduction}\label{section:1}
In this paper we consider the following multiple component \emph{chirp signal model}: \\
\begin{equation} \label{eq:1}
y(t) = \sum_{k=1}^{p}(A_k^0\cos(\alpha_k^0 t + \beta_k^0 t^2 ) + B_k^0\sin(\alpha_k^0 t + \beta_k^0 t^2)) + X(t);  \quad p \geq 1, \\
\end{equation}
\vspace{-\baselineskip} 
\justify
for \  $t = 1, \cdots, n$. Here $y(t)$ is the real valued signal observed at $t = 1, \cdots, n$. $A_k^0$s, $B_k^0$s are real valued \emph{amplitudes} and $\alpha_k^0$s, $\beta_k^0$s are the\emph{ frequencies} and the \emph{frequency rates}, respectively and $p$ is the number of components of the model. Here, $\{X(t)\}$ is a sequence of error random variables with mean zero and finite fourth moment. The explicit assumption on the error structure is provided in {Section~\ref{section:2}}.\\ \par
Unlike the sinusoidal signal, a chirp signal has a frequency that changes with time. These signals occur in many physical phenomena of interest in science and engineering. Chirp model has its roots in radar signal modelling and is used in various forms for modelling trajectories of moving objects. Also many estimation procedures have been proposed in the literature, for the estimation of the unknown parameters of chirp signals, which is of primary interest. See Bello \cite{1960}, Kelly \cite{1961}, Abatzoglou \cite{1986}, Djuric and Kay \cite{1990_1}, Peleg and Porat \cite{1991}, Shamsunder et al., \cite{1995}, Ikram et al., \cite{1997},  Besson et al., \cite{1999}, Saha and Kay \cite{2002}, Nandi and Kundu \cite{2004}, Kundu and Nandi \cite{2008} and references cited therein. For recent references, see Lahiri et al., \cite{2014},  \cite{2015} and Mazumder \cite{2016}. \\ \par

Least squares estimators (LSEs) are a reasonable choice for estimating the unknown parameters of a linear or a non-linear model. The theoretical properties of the LSEs for a chirp signal model, were first obtained by Nandi and Kundu \cite{2004} under the assumption that the additive errors are independently and identically distributed (i.i.d.) random variables with mean zero and finite variance. They proved that if the errors are i.i.d normal, the asymptotic variances attain the Cramer Rao lower bound. Since in practice, the errors may not be independent, so to make the model more realistic, Kundu and Nandi \cite{2008} assumed stationarity of the error component to incorporate the dependence structure and studied the properties of the LSEs of the same model. It is observed that dispersion matrix of the asymptotic distribution of the LSEs turns out to be quite complicated. Using a number theoretic result of Vinogradov \cite{1954}, Lahiri et al., \cite{2015} provided a simplified structure of this dispersion matrix. \\ \par
 
Although the LSEs have nice theroetical properties, finding the  least squares estimates is computationally quite demanding. For instance, for the sinusoidal model, it has been observed by Rice and Rosenblatt \cite{1988}, that the least squares surface has several local minima near the true parameter value (see Fig. 1, page 481) and due to this reason most of the iterative procedures, even when they converge, often converge to a local minimum rather than a global minimum. The same problem is observed for the chirp model. Thus a very good set of initial values are required for any iterative method to work. \\ \par

One of the most popular estimators for finding the initial values for the frequencies of the sinusoidal model are the periodogram estimators (PEs). These are obtained by maximizing the following periodogram function:  
\begin{equation}\label{eq:2}
I(\omega) = \frac{1}{n}\Bigg|\sum_{t=1}^{n} y(t)e^{-i(\omega t )}\Bigg| ^2
\end{equation}
at the Fourier frequencies, namely at $\displaystyle{\frac{\pi j}{n}}$; $j = 1, \cdots, n-1$. It has been proved that if the periodogram function $I(\omega)$ is maximised over the entire range $(0, \pi)$, the estimators obtained, called the approximate least squares estimators (ALSEs),  are consistent and asymptotically equivalent to the least squares estimators (see Whittle \cite{1952}, Walker \cite{1971}). In this paper, we study the behaviour of the periodogram-type estimators, of the unknown parameters of the chirp model and see how they compare with the corresponding least squares estimators theoretically.
Analogous to the periodogram function $I(\omega)$ for the sinusoidal model, a periodogram-type function for the chirp model can be defined as follows:
\begin{equation}\label{eq:3}
I(\alpha, \beta) = \frac{2}{n}\Bigg|\sum_{t=1}^{n} y(t)e^{-i(\alpha t + \beta t^2)}\Bigg| ^2.
\end{equation} 

Corresponding to the Fourier frequencies at which $I(\omega)$ is maximised for the sinusoidal model, it seems reasonable that for the chirp model, we maximise $I(\alpha, \beta)$ at $\displaystyle{\bigg(\frac{\pi j}{n}, \frac{\pi k}{n^2}\bigg)}$; $j = 1, \cdots, n-1$, $k = 1, \cdots, n^2-1$ to obtain the initial guesses for the frequency and frequency rate parameters, respectively. \\ \par
Consider the periodogram-like function defined in equation~(\ref{eq:3}), which can also be written as:
\begin{equation}\label{eq:4}
I(\alpha,\beta) = \frac{2}{n}\Bigg\{\Bigg(\sum\limits_{t=1}^{n}y(t)\cos(\alpha t + \beta t^2)\Bigg)^2  + \Bigg(\sum\limits_{t=1}^{n}y(t)\sin(\alpha t + \beta t^2)\Bigg)^2\Bigg\}. \\
\end{equation}
\justify
The ALSEs of $\alpha$ and $\beta$ are obtained by maximising $I(\alpha,\beta)$ with respect to $\alpha$ and $\beta$ simultaneously. Our primary focus is to estimate the non-linear parameters $\alpha$ and $\beta$, and once we estimate these parameters efficiently, the linear parameters $A$ and $B$ can be obtained by separable linear regression technique of Richards \cite{1961_2}. \\ \par

In this paper, we prove that the ALSEs are strongly consistent. As a matter of fact, the consistency of the ALSEs of the linear parameters $A$ and $B$ is obtained under slightly weaker conditions than that of the LSEs, as we do not require their parameter space to be bounded in this case. Also the rate of convergence of the ALSEs of the linear parameters is $n^{-1/2}$ and those of the frequency and frequency rate are $n^{-3/2}$ and $n^{-5/2}$, respectively. The convergence rates of ALSEs are thus same as that of their corresponding LSEs. We show that the asymptotic distribution of the ALSEs is equivalent to that of the LSEs. \\ \par

Recently, Lahiri et al., \cite{2013}, proposed an efficient algorithm to compute the estimators of the unknown parameters of the chirp model. We perform numerical simulations to compare the proposed ALSEs with the LSEs and the estimators obtained by the efficient algorithm. We observe that for most of the cases, although the LSEs provide the best results, the time taken by the ALSEs is comparatively less. Among the three estimators, the estimators computed using the efficient algorithm, takes the least amount of time, though the biases and MSEs increase as compared to the other two estimators. \\ \par
 
The rest of the paper is organised as follows. In section~\ref{section:2}, we prove the consistency of the ALSEs and their asymptotic equivalence to the LSEs. In section~\ref{section:3}, we discuss about the parameter estimation for the multiple component chirp model. In section~\ref{section:4} we present some simulation results and in section~\ref{section:5}, we analyze some real life data sets for illustrative purposes. Finally, in section~\ref{section:6} we conclude the paper. All the proofs have been provided in the appendices.

\section{Main Results for the One Component Chirp Model}\label{section:2} 
In this section, we study the asymptotic properties of the following one component chirp model: 
\begin{equation}\label{eq:5}
y(t) = A^0 \cos(\alpha^0 t + \beta^0 t^2) + B^0 \sin(\alpha^0 t + \beta^0 t^2) + X(t).
\end{equation}
\justify
We will use the following notations:
$\boldsymbol{\theta}$ = ($A$, $B$, $\alpha$, $\beta$), $\boldsymbol{\theta^0}$ = ($A^0$, $B^0$, $\alpha^0$, $\beta^0$), $\boldsymbol{\hat{\theta}}$ = ($\hat{A}$, $\hat{B}$, $\hat{\alpha}$, $\hat{\beta}$), the LSE of $\boldsymbol{\theta^0}$, and 
$\boldsymbol{\tilde{\theta}}$ = $(\tilde{A}, \tilde{B}, \tilde{\alpha}, \tilde{\beta})$, the ALSE of $\boldsymbol{\theta^0}$.
The following assumptions are made on the error component $X(t)$ of model~(\ref{eq:5}):
\justify
$\textsc{Assumption 1.}$ Let $Z$ be the set of integers. $\{X(t)\}$ is a stationary linear process with the following form:
\begin{equation}\label{eq:6}
X(t) = \sum_{j = -\infty}^{\infty} a(j)e(t-j),
\end{equation}
where $\{e(t); t \in Z\}$ is a sequence of  i.i.d random variables with $E(e(t)) = 0$, $V(e(t)) = \sigma^2$, and $a(j)$s are real constants such that 
\begin{equation}\label{eq:7}
\sum\limits_{j= - \infty}^{\infty}|a(j)| < \infty.
\end{equation} 
\justify
This is a standard assumption for a stationary linear process. Any finite dimensional stationary MA, AR or ARMA process can be represented as~(\ref{eq:6}) when the coefficients $a(j)$s satisfy condition~(\ref{eq:7}) and hence this covers a large class of stationary random variables. \\
\justify
Let $\tilde{A}$, $\tilde{B}$, $\tilde{\alpha}$ and $\tilde{\beta}$, be the ALSEs of $A^0$, $B^0$, $\alpha^0$ and $\beta^0$, respectively. First we find $\tilde{\alpha}$ and $\tilde{\beta}$ by maximising $I(\alpha, \beta)$, as defined in~(\ref{eq:4}) with respect to $\alpha$ and $\beta$ and once we obtain $\tilde{\alpha}$ and $\tilde{\beta}$, the ALSEs of the linear parameters $A$ and $B$ can be obtained as follows: \\ \\
\begin{flalign}\label{eq:8}
\begin{split}
\tilde{A} = \frac{2}{n} \sum_{t=1}^{n} y(t)\cos(\tilde{\alpha} t + \tilde{\beta} t^2)\textmd{ and }
\tilde{B} = \frac{2}{n} \sum_{t=1}^{n} y(t)\sin(\tilde{\alpha} t + \tilde{\beta} t^2). \\ 
\end{split}
\end{flalign}
In the following two theorems, we state the consistency of the ALSE, $\boldsymbol{\tilde{\theta}}$. \\

\begin{theorem}\label{theorem:1}
Let ($\alpha^0$, $\beta^0$) be an interior point of [0, $\pi$] $\times$ [0, $\pi$]. If \{X(t)\} satisfies Assumption 1, then the ALSEs  $\tilde{\alpha}$ and $\tilde{\beta}$ are strongly consistent estimators of $\alpha^0$ and $\beta^0$, respectively.
\end{theorem}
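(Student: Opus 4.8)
The plan is to argue by contradiction, in the spirit of Wu's consistency theorem for nonlinear least squares but working directly with the periodogram-type criterion $I(\alpha,\beta)$ of~(\ref{eq:4}). Since $(\tilde\alpha,\tilde\beta)$ maximises $I$ over a compact parameter set, everything reduces to the almost sure limit of $n^{-1}I(\alpha,\beta)$: I will show this limit equals $\tfrac12\big((A^0)^2+(B^0)^2\big)$ at $(\alpha^0,\beta^0)$ and is strictly smaller on the complement of any neighbourhood, so that a maximiser bounded away from $(\alpha^0,\beta^0)$ is impossible. One structural caveat must be recorded at the outset: because $e^{-i\pi t(t+1)}=1$ for every integer $t$, one has the exact identity $I(\alpha,\beta)=I(\pi-\alpha,\pi-\beta)$, so on the full square $[0,\pi]^2$ the maximiser occurs in reflected pairs and can be pinned to a single point only after restricting to a fundamental domain $D$ of this reflection containing $(\alpha^0,\beta^0)$ in its interior; I take such a restriction to be implicit, and maximise over $D$ throughout.

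First I would substitute $y(t)=A^0\cos(\alpha^0 t+\beta^0 t^2)+B^0\sin(\alpha^0 t+\beta^0 t^2)+X(t)$ into the inner sum and use $A^0\cos\phi+B^0\sin\phi=\operatorname{Re}\big[(A^0-iB^0)e^{i\phi}\big]$ to write
\begin{equation*}
\frac1n\sum_{t=1}^n y(t)\,e^{-i(\alpha t+\beta t^2)}
=\frac{A^0-iB^0}{2}\cdot\frac1n\sum_{t=1}^n e^{i((\alpha^0-\alpha)t+(\beta^0-\beta)t^2)}+R_n(\alpha,\beta),
\end{equation*}
where $R_n$ collects the conjugate-frequency signal term (phase $(\alpha^0+\alpha)t+(\beta^0+\beta)t^2$) and the weighted noise $n^{-1}\sum_t X(t)e^{-i(\alpha t+\beta t^2)}$. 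At $(\alpha^0,\beta^0)$ the displayed average is identically $1$ while both parts of $R_n$ tend to $0$, giving $n^{-1}I(\alpha^0,\beta^0)\to\tfrac12\big((A^0)^2+(B^0)^2\big)$ almost surely.

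The crux is to show that for each fixed $\epsilon>0$,
\begin{equation*}
\limsup_{n\to\infty}\ \sup_{(\alpha,\beta)\in S_\epsilon\cap D}\frac1n I(\alpha,\beta)<\tfrac12\big((A^0)^2+(B^0)^2\big)\quad\text{a.s.},
\end{equation*}
where $S_\epsilon=\{(\alpha,\beta):|\alpha-\alpha^0|\ge\epsilon\ \text{or}\ |\beta-\beta^0|\ge\epsilon\}$. This rests on two ingredients: a uniform strong law for the weighted noise, $\sup_{\alpha,\beta}n^{-1}|\sum_t X(t)e^{-i(\alpha t+\beta t^2)}|\to0$ a.s., which I would obtain from the linear-process representation~(\ref{eq:6}) together with the summability~(\ref{eq:7}); and the uniform decay over $S_\epsilon\cap D$ of the deterministic chirp averages $n^{-1}\sum_t e^{i(\gamma t+\delta t^2)}$ with $(\gamma,\delta)\ne(0,0)$. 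Interiority of $(\alpha^0,\beta^0)$ is used to keep the conjugate-frequency average non-degenerate on $D$ and to permit two-sided neighbourhoods.

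Finally I would close the argument: since $(\tilde\alpha,\tilde\beta)$ maximises $I$, $n^{-1}I(\tilde\alpha,\tilde\beta)\ge n^{-1}I(\alpha^0,\beta^0)$ for all $n$; were the ALSEs not strongly consistent, there would be, on an event of positive probability, an $\epsilon>0$ with $(\tilde\alpha,\tilde\beta)\in S_\epsilon\cap D$ infinitely often, and the two displays would then force $\tfrac12\big((A^0)^2+(B^0)^2\big)\le\limsup_n\sup_{S_\epsilon\cap D}n^{-1}I<\tfrac12\big((A^0)^2+(B^0)^2\big)$, a contradiction once $(A^0)^2+(B^0)^2>0$. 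The two genuinely delicate points are exactly those flagged above: first, the reflection symmetry, which is what makes a restriction of the parameter space unavoidable for a unique limit; and second, upgrading the pointwise Weyl-sum limits to a bound uniform over $S_\epsilon\cap D$, since the quadratic phase makes these sums behave like Weyl sums rather than geometric series and their decay deteriorates both as $\beta\to\beta^0$ and near the exceptional rational configurations where such averages fail to vanish. Controlling them uniformly is where the number-theoretic estimates of Vinogradov type (cf.\ Lahiri et al.~\cite{2015}) are essential.
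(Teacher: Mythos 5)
Your plan is, in skeleton, the paper's own proof: your closing maximiser-contradiction argument is exactly the paper's Lemma~\ref{lemma:3}; your uniform strong law for the weighted noise is the paper's Lemma~\ref{lemma:1} (taken from Kundu and Nandi \cite{2008}); and the chirp-average (Weyl-sum) limits you invoke are the paper's Lemma~\ref{lemma:2} (taken from Lahiri et al., \cite{2015}). The paper then evaluates $\limsup_n \sup_{S_c}\frac1n\left[I(\alpha,\beta)-I(\alpha^0,\beta^0)\right]$, claims it is strictly negative a.s., and concludes via Lemma~\ref{lemma:3}, just as you propose.

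Where you genuinely depart from the paper is in your two caveats, and both are substantive rather than pedantic. First, the reflection symmetry is real: $t+t^2=t(t+1)$ is even, so $e^{-i\pi(t+t^2)}=1$ and hence $I(\pi-\alpha,\pi-\beta)=I(\alpha,\beta)$ identically; indeed the signal with parameters $(A^0,B^0,\alpha^0,\beta^0)$ coincides with that of $(A^0,-B^0,\pi-\alpha^0,\pi-\beta^0)$. The paper is silent on this, and its key display is false as literally stated: unless $(\alpha^0,\beta^0)=(\pi/2,\pi/2)$, the set $S_c$ contains the reflected point, at which $I$ equals $I(\alpha^0,\beta^0)$, forcing $\limsup\sup_{S_c}\frac1n\left[I(\alpha,\beta)-I(\alpha^0,\beta^0)\right]\geqslant 0$ rather than $-\frac14\left({A^0}^2+{B^0}^2\right)$. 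Your restriction to a fundamental domain (equivalently, reading consistency only up to this reflection) is precisely the repair that the theorem and the paper's proof both need. Second, your concern about uniformity is also justified: Lemma~\ref{lemma:2} gives pointwise limits off a countable exceptional set, which does not by itself control a supremum over $S_c$, and the paper passes over this with ``using Lemmas 1 and 2, it can be shown.'' Note, however, that on this point your proposal is in the same position as the paper --- you defer the uniform Weyl-sum bound to Vinogradov-type estimates rather than proving it --- so what you have is a more careful and more honest account of the same argument, not a completed strengthening of it.
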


\begin{proof}
See \nameref{appendix:A}.
\end{proof}

\begin{theorem}\label{theorem:2}
Under the conditions of Theorem~\ref{theorem:1}, the ALSEs $\tilde{A}$ and $\tilde{B}$ of the linear parameters $A^0$ and $B^0$ are strongly consistent estimators.
\end{theorem}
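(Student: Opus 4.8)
The plan is to work directly from the closed-form expressions in~(\ref{eq:8}), reducing the problem to the behaviour of averaged trigonometric sums evaluated at the random arguments $\tilde\alpha,\tilde\beta$, which by Theorem~\ref{theorem:1} converge almost surely to $\alpha^0,\beta^0$. First I would substitute the model~(\ref{eq:5}) for $y(t)$ into $\tilde A$ and split the result into a signal part and a noise part,
\begin{equation*}
\begin{split}
\tilde A ={}& \frac{2A^0}{n}\sum_{t=1}^n \cos(\alpha^0 t + \beta^0 t^2)\cos(\tilde\alpha t + \tilde\beta t^2) + \frac{2B^0}{n}\sum_{t=1}^n \sin(\alpha^0 t + \beta^0 t^2)\cos(\tilde\alpha t + \tilde\beta t^2) \\
&+ \frac2n\sum_{t=1}^n X(t)\cos(\tilde\alpha t + \tilde\beta t^2),
\end{split}
\end{equation*}
and treat $\tilde B$ symmetrically, with the outer cosine replaced by a sine.

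The noise term is disposed of first: since $(\tilde\alpha,\tilde\beta)$ is random I would bound its contribution by $\sup_{\alpha,\beta}\big|\frac2n\sum_{t=1}^n X(t)\cos(\alpha t + \beta t^2)\big|$ and invoke the uniform almost-sure convergence to zero of such weighted sums of a summable linear process, which is one of the lemmas of \nameref{appendix:A}. For the two signal terms I would apply product-to-sum identities, writing each summand as a half-sum of a \emph{sum-frequency} term with argument $(\alpha^0+\tilde\alpha)t + (\beta^0+\tilde\beta)t^2$ and a \emph{difference-frequency} term with argument $(\alpha^0-\tilde\alpha)t + (\beta^0-\tilde\beta)t^2$. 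Because $\tilde\beta\to\beta^0$ and $(\alpha^0,\beta^0)$ is an interior point, the sum-frequency rate $\beta^0+\tilde\beta$ stays bounded away from $0$, so the sum-frequency averages vanish by the trigonometric-sum lemmas of \nameref{appendix:A}, leaving only the difference-frequency averages.

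The heart of the proof, and the step I expect to be the main obstacle, is showing that almost surely
\begin{equation*}
\frac1n\sum_{t=1}^n \cos\!\big((\alpha^0-\tilde\alpha)t + (\beta^0-\tilde\beta)t^2\big)\to 1 \quad\text{and}\quad \frac1n\sum_{t=1}^n \sin\!\big((\alpha^0-\tilde\alpha)t + (\beta^0-\tilde\beta)t^2\big)\to 0 .
\end{equation*}
Plain consistency is \emph{not} sufficient here: if, say, $\tilde\alpha-\alpha^0$ decayed only like $n^{-1/2}$, the phase $(\alpha^0-\tilde\alpha)t$ would not be negligible for $t$ of order $n$, and the first average would drift to $0$ rather than $1$. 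What is genuinely required is the super-consistency $n(\tilde\alpha-\alpha^0)\to 0$ and $n^2(\tilde\beta-\beta^0)\to 0$ almost surely, which forces the phase to vanish uniformly over $t\le n$, so that each cosine tends to $1$ and each sine to $0$. I would obtain these rates by sharpening the maximization argument behind Theorem~\ref{theorem:1}: quantifying how fast the periodogram-type function $I(\alpha,\beta)$ of~(\ref{eq:4}) decays as $(\alpha,\beta)$ leaves the truth on the $n^{-1}\times n^{-2}$ scale shows that the maximizer cannot lie outside an $o(n^{-1})\times o(n^{-2})$ neighbourhood of $(\alpha^0,\beta^0)$.

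With the super-consistency in hand the computation closes: the surviving difference-frequency cosine average tends to $1$ and the sine average to $0$, so for $\tilde A$ the first signal term tends to $A^0$ and the second to $0$, giving $\tilde A\to A^0$ almost surely. The identical argument applied to $\tilde B$, with the roles of the two signal terms interchanged so that the relevant difference-frequency cosine average again tends to $1$, yields $\tilde B\to B^0$ almost surely. I would also note that, unlike for the LSE, no boundedness of the amplitude parameter space is needed, since $\tilde A$ and $\tilde B$ are given explicitly by~(\ref{eq:8}) rather than as minimizers over a compact set, which is the source of the slightly weaker hypotheses mentioned in the introduction.
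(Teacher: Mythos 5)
Your proposal is correct in substance and its skeleton is the paper's own: substitute the model~(\ref{eq:5}) into~(\ref{eq:8}), remove the noise term with the uniform strong law of Lemma~\ref{lemma:1}, and then use the rates $\tilde{\alpha}-\alpha^0=o(n^{-1})$ and $\tilde{\beta}-\beta^0=o(n^{-2})$ a.s.\ to make the phase error negligible uniformly in $t\leqslant n$. Your central observation --- that plain consistency of $(\tilde{\alpha},\tilde{\beta})$ is not enough and that super-consistency is the crux --- is exactly the content of the paper's Lemma~\ref{lemma:4}, which the paper's proof of Theorem~\ref{theorem:2} cites at precisely this point. You deviate in two places, and the comparison is instructive. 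First, in handling the signal sums: the paper Taylor-expands $\cos(\tilde{\alpha}t+\tilde{\beta}t^2)$ around $(\alpha^0,\beta^0)$, so every trigonometric sum that must be evaluated carries the \emph{fixed} argument $(\alpha^0,\beta^0)$ and Lemma~\ref{lemma:2} applies as stated; your product-to-sum decomposition instead leaves sums with the \emph{random} arguments $(\alpha^0\pm\tilde{\alpha},\,\beta^0\pm\tilde{\beta})$, and your justification that the sum-frequency averages vanish because ``$\beta^0+\tilde{\beta}$ stays bounded away from $0$'' is not licit as written --- Lemma~\ref{lemma:2} concerns fixed frequencies outside a countable exceptional set, not random frequency sequences. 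The step is repairable (write the sum-frequency phase as $2\alpha^0 t+2\beta^0 t^2$ plus a perturbation that is uniformly $o(1)$ by your super-consistency), but the repair runs through the same rates, so nothing is gained over the paper's expansion. Second, and more significantly, your route to the super-consistency itself differs from the paper's: Lemma~\ref{lemma:4} is \emph{not} proved by sharpening the maximization argument of Theorem~\ref{theorem:1}; it uses the first-order condition $\textbf{I}'(\boldsymbol{\tilde{\xi}})=0$, a Taylor expansion of $\textbf{I}'$ about $\boldsymbol{\xi^0}$, and the limits $\frac{1}{\sqrt{n}}\textbf{I}'(\boldsymbol{\xi^0})\mathbf{D}\rightarrow 0$ and $\mathbf{D}\textbf{I}''(\boldsymbol{\bar{\xi}})\mathbf{D}\rightarrow-\mathbf{S}$ with $\mathbf{S}$ positive definite, all of which follow from Lemmas~\ref{lemma:1} and~\ref{lemma:2} already in hand. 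Your argmax-decay route --- showing $I$ of~(\ref{eq:4}) falls strictly below $I(\alpha^0,\beta^0)$ uniformly outside $o(n^{-1})\times o(n^{-2})$ neighbourhoods --- can be made rigorous, but it demands a genuinely local analysis of the rescaled periodogram at the $n^{-1}\times n^{-2}$ scale (uniform convergence of the signal part in the local parameter, on top of the uniform noise bound), which is considerably more delicate than the fixed-$c$ argument of Theorem~\ref{theorem:1}; calling it a ``sharpening'' of that argument understates the work required. The derivative-based route buys the rate almost for free, which is why the paper isolates it as a separate lemma; your closing remark about not needing a bounded amplitude space does, however, match the paper's stated motivation exactly.
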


\begin{proof}
See \nameref{appendix:A}.
\end{proof}

\justify
It has been observed in the following theorem that ALSEs have the same distribution as the LSEs asymptotically. \\
\begin{theorem}\label{theorem:3}
Under the Assumption 1, the limiting distribution of $(\boldsymbol{\tilde{\theta}} - \boldsymbol{\theta^0})\mathbf{D}^{-1}$ is same as that of $(\boldsymbol{\hat{\theta}} - \boldsymbol{\theta^0})\mathbf{D}^{-1} \  as \ n \rightarrow \infty$, where $\boldsymbol{\tilde{\theta}}$ is the ALSE of $\boldsymbol{\theta^0}$ and $\boldsymbol{\hat{\theta}}$ is the LSE of $\boldsymbol{\theta^0}$ and $\mathbf{D}$ = $diag(\frac{1}{\sqrt{n}} , \frac{1}{\sqrt{n}}  , \frac{1}{n\sqrt{n}} , \frac{1}{n^2\sqrt{n}}).$
\end{theorem}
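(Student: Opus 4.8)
The plan is to produce, for each estimator, a linearised (Taylor) representation of its scaled error vector and then to show that the two representations share the same stochastic leading term and the same limiting Hessian. Write $Q_n(\boldsymbol{\theta})$ for the residual sum of squares minimised by the LSE and $U_n(\boldsymbol{\theta})$ for the companion objective introduced via the periodogram in Section~\ref{section:1}; since maximising $I(\alpha,\beta)$ after profiling out $A$ and $B$ is equivalent to minimising $U_n$ over all four coordinates, $\boldsymbol{\tilde{\theta}}$ is the global minimiser of $U_n$. Writing $Q_n',U_n'$ for gradients and $Q_n'',U_n''$ for Hessians, the first-order conditions $Q_n'(\boldsymbol{\hat{\theta}})=\mathbf{0}$ and $U_n'(\boldsymbol{\tilde{\theta}})=\mathbf{0}$ together with a multivariate mean value expansion about $\boldsymbol{\theta^0}$ give
\begin{align*}
(\boldsymbol{\hat{\theta}}-\boldsymbol{\theta^0})\mathbf{D}^{-1} &= -\bigl[Q_n'(\boldsymbol{\theta^0})\mathbf{D}\bigr]\bigl[\mathbf{D}Q_n''(\boldsymbol{\theta}^{(1)})\mathbf{D}\bigr]^{-1},\\
(\boldsymbol{\tilde{\theta}}-\boldsymbol{\theta^0})\mathbf{D}^{-1} &= -\bigl[U_n'(\boldsymbol{\theta^0})\mathbf{D}\bigr]\bigl[\mathbf{D}U_n''(\boldsymbol{\theta}^{(2)})\mathbf{D}\bigr]^{-1},
\end{align*}
where $\boldsymbol{\theta}^{(1)}$ and $\boldsymbol{\theta}^{(2)}$ lie on the segments joining $\boldsymbol{\theta^0}$ to $\boldsymbol{\hat{\theta}}$ and to $\boldsymbol{\tilde{\theta}}$ respectively, both of which converge to $\boldsymbol{\theta^0}$ almost surely (for the ALSE by Theorems~\ref{theorem:1} and~\ref{theorem:2}, and for the LSE by the corresponding consistency result of Kundu and Nandi).

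The crucial observation is that the difference $R_n(\boldsymbol{\theta}):=Q_n(\boldsymbol{\theta})-U_n(\boldsymbol{\theta})=\sum_{t=1}^{n}\bigl(A\cos(\alpha t+\beta t^2)+B\sin(\alpha t+\beta t^2)\bigr)^2-\tfrac{n}{2}(A^2+B^2)$ is a purely deterministic trigonometric quantity carrying no contribution from the noise $\{X(t)\}$. Hence the two scaled gradients differ only through $R_n$, namely $Q_n'(\boldsymbol{\theta^0})\mathbf{D}-U_n'(\boldsymbol{\theta^0})\mathbf{D}=R_n'(\boldsymbol{\theta^0})\mathbf{D}$, and in particular their random parts (the terms involving $\sum X(t)$ against the trigonometric regressors) coincide exactly. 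Differentiating $R_n$ produces sums of the form $\sum t^a\cos(2\alpha^0 t+2\beta^0 t^2)$ and $\sum t^a\sin(2\alpha^0 t+2\beta^0 t^2)$ with $a\in\{0,1,2\}$, and the trigonometric-sum estimates collected in \nameref{appendix:A} show that each such sum is of smaller order than the matching entry of $\mathbf{D}^{-1}$, so that $R_n'(\boldsymbol{\theta^0})\mathbf{D}\to\mathbf{0}$. Consequently $U_n'(\boldsymbol{\theta^0})\mathbf{D}$ and $Q_n'(\boldsymbol{\theta^0})\mathbf{D}$ converge to the same limiting distribution. The identical mechanism applies to the Hessians: $\mathbf{D}\bigl(Q_n''-U_n''\bigr)(\boldsymbol{\theta})\mathbf{D}=\mathbf{D}R_n''(\boldsymbol{\theta})\mathbf{D}$ is deterministic, and the same estimates give $\mathbf{D}R_n''(\boldsymbol{\theta^0})\mathbf{D}\to\mathbf{0}$; combined with the known limit $\mathbf{D}Q_n''(\boldsymbol{\theta^0})\mathbf{D}\to\boldsymbol{\Sigma}$ for the LSE, both scaled Hessians converge to the same nonrandom matrix $\boldsymbol{\Sigma}$. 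A Slutsky-type argument then shows that both $(\boldsymbol{\hat{\theta}}-\boldsymbol{\theta^0})\mathbf{D}^{-1}$ and $(\boldsymbol{\tilde{\theta}}-\boldsymbol{\theta^0})\mathbf{D}^{-1}$ equal $-\bigl[\text{common gradient}+o(1)\bigr]\bigl[\boldsymbol{\Sigma}+o_p(1)\bigr]^{-1}$, and hence have the same limiting law.

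The step I expect to be the main obstacle is replacing the intermediate points $\boldsymbol{\theta}^{(1)}$ and $\boldsymbol{\theta}^{(2)}$ by $\boldsymbol{\theta^0}$ inside the scaled Hessians. The entries of $Q_n''$ associated with $\alpha$ and $\beta$ carry weights $t^2,t^3,t^4$, and the scaling by $\mathbf{D}$ is finely tuned to exactly these powers, so one must establish that $\mathbf{D}Q_n''(\boldsymbol{\theta})\mathbf{D}$ is asymptotically continuous in $\boldsymbol{\theta}$ uniformly over the shrinking neighbourhood containing $\boldsymbol{\theta}^{(1)},\boldsymbol{\theta}^{(2)}$. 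Fortunately the relevant neighbourhood has radius of order $n^{-3/2}$ in the $\alpha$-direction and $n^{-5/2}$ in the $\beta$-direction, matching the convergence rates, so that the phase perturbation $(\alpha^{(i)}-\alpha^0)t+(\beta^{(i)}-\beta^0)t^2$ is uniformly $O(n^{-1/2})$ for $t\le n$; establishing this uniform control, and the attendant trigonometric-sum bounds at the perturbed frequencies, is the technical heart of the argument, after which the conclusion follows from the decomposition above.
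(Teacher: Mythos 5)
Your proposal is correct and follows essentially the same route as the paper: the paper works with $J(\boldsymbol{\theta})$, where $\frac{1}{n}Q(\boldsymbol{\theta}) = C - \frac{1}{n}J(\boldsymbol{\theta}) + o(1)$, so your $U_n$ is just $\sum_t y(t)^2 - J(\boldsymbol{\theta})$ and your deterministic remainder $R_n = Q_n - U_n$ is exactly the $o(\cdot)$-term vector the paper kills with Lemma~\ref{lemma:5} to get $\lim \textbf{Q}'(\boldsymbol{\theta^0})\mathbf{D} = -\lim \textbf{J}'(\boldsymbol{\theta^0})\mathbf{D}$ and $\lim \mathbf{D}\textbf{J}''(\boldsymbol{\theta^0})\mathbf{D} = -\lim \mathbf{D}\textbf{Q}''(\boldsymbol{\theta^0})\mathbf{D} = -\boldsymbol{\Sigma}^{-1}$. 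The paper likewise handles the intermediate point $\boldsymbol{\bar{\theta}}$ in the Hessian by continuity together with the a.s.\ rates $\tilde{\alpha}-\alpha^0 = o(n^{-1})$, $\tilde{\beta}-\beta^0 = o(n^{-2})$ from Lemma~\ref{lemma:4} (note these, rather than the distributional rates $n^{-3/2}$, $n^{-5/2}$ you invoke, are what is available non-circularly at that stage), and then identifies the resulting expression with the LSE linearisation via Kundu and Nandi's result, exactly as in your final Slutsky step.
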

\begin{proof}
See \nameref{appendix:B}.
\end{proof}
\justify
\section{Main Results for the Multiple Component Chirp Model}\label{section:3}
In this section, we consider a chirp signal model with multiple components. Mathematically, a multiple-component chirp model is given by: \\
\begin{equation}\label{eq:9}
y(t) = \sum_{k=1}^{p}(A_{k}^{0} \cos(\alpha_{k}^{0}t + \beta_{k}^{0} t^2) +B_{k}^{0} \sin(\alpha_{k}^{0}t + \beta_{k}^{0} t^2)) + X(t), \ \ \ p > 1,\ t = 1, 2, \cdots, n,
\end{equation}
 where  $y(t)$ is the real valued signal observed at $t$ = 1, 2 , $\cdots$, $n$, $A_{k}^{0}s$ and $B_{k}^{0}s$ are the amplitudes and $\alpha_{k}^{0}s$ and $\beta_{k}^{0}s$ are the frequencies and frequency rates, respectively for $k$ = 1, 2, $\cdots$, $p$. \\ \par
\justify
To estimate the unknown parameters, we propose a sequential procedure to find the ALSEs. This method reduces the computational complexity of the estimators significantly without compromising on their efficiency. Following is the algorithm to find the ALSEs through sequential method: \\
\justify
Step 1: Compute $\tilde{\alpha_{1}}$ and $\tilde{\beta_{1}}$ by maximizing the periodogram-like function
\begin{equation}\label{eq:10}
I_{1}(\alpha,\beta)  = \frac{1}{n}\bigg\{\bigg(\sum_{t=1}^{n}y(t)\cos(\alpha t + \beta t^2)\bigg)^2 + \bigg(\sum_{t=1}^{n}y(t)\sin(\alpha t + \beta t^2)\bigg)^2\bigg\}.
\end{equation}
Then the linear parameter estimates can be obtained by substituting  $\tilde{\alpha_{1}}$ and $\tilde{\beta_{1}}$ in~(\ref{eq:8}).
Thus
\begin{flalign*}
\begin{split}
\tilde{A_1} = \frac{2}{n} \sum_{t=1}^{n} y(t)\cos(\tilde{\alpha_1} t + \tilde{\beta_1} t^2)\textmd{ and }
\tilde{B_1} = \frac{2}{n} \sum_{t=1}^{n} y(t)\sin(\tilde{\alpha_1} t + \tilde{\beta_1} t^2). \\ 
\end{split}
\end{flalign*}
Step 2: Now we have the estimates of the parameters of the first component of the observed signal. We subtract the contribution of the first component from the original signal $y(t)$ to remove the effect of the first component and obtain new data, say $$ y^1(t) = y(t) - \tilde{A_{1}}\cos(\tilde{\alpha_{1}} t + \tilde{\beta_{1}} t^2) - \tilde{B_{1}}\sin(\tilde{\alpha_{1}} t + \tilde{\beta_{1}} t^2),\ \ \ \ t = 1, 2, \cdots, n.$$
Step 3: Now compute $\tilde{\alpha_{2}}$ and $\tilde{\beta_{2}}$ by maximizing $I_{2}(\alpha,\beta)$ which is obtained by replacing the original data vector by the new data vector in~(\ref{eq:10}) and $\tilde{A_{2}}$ and $\tilde{B_{2}}$ by substituting $\tilde{\alpha_{2}}$ and $\tilde{\beta_{2}}$ in~(\ref{eq:8}). \\ \\
Step 4: Continue the process upto $p$-steps. \\ \\
Note that we use the following notation: the parameter vector $\boldsymbol{\theta_{k}} = (A_{k}, B_{k}, \alpha_{k}, \beta{k})^T$ and  the true parameter vector $\boldsymbol{\theta_{k}^0} = (A_{k}^0, B_{k}^0, \alpha_{k}^0, \beta{k}^0)^T$ for all $k$ = 1, 2, $\cdots$, $p$ and the parameter space $\boldsymbol{\Theta} = (-\infty,\infty) \times (-\infty,\infty) \times [0,\pi] \times[0,\pi].$ \\ \\
Next to establish the asymptotic properties of these estimators, we further make the following model assumptions: \\ \\
$\textsc{Assumption 2.}$ $\boldsymbol{\theta_{k}^{0}}$ is an interior point of $\boldsymbol{\Theta}$ $\forall$ $k = 1(1)p$ and the frequencies $\alpha_{k}^0s$ and the frequency rates $\beta_{k}^0s$ are such that $(\alpha_i^0, \beta_i^0) \neq (\alpha_j^0, \beta_j^0)$ $\forall i \neq j$. \\ \\
$\textsc{Assumption 3.}$ $A_k^0$s and $B_k^0$s satisfy the following relationship: \\
$$ \infty > {A_{1}^{0}}^2 + {B_{1}^{0}}^2 > {A_{2}^{0}}^2 + {B_{2}^{0}}^2 > \cdots > {A_{p}^{0}}^2 + {B_{p}^{0}}^2 > 0. $$
\justify
In the following theorems we prove that the ALSEs obtained by the sequential method described above are strongly consistent.
\begin{theorem}\label{theorem:4}
Under the assumptions 1, 2 and 3, $\tilde{A_1}, \tilde{B_1}, \tilde{\alpha_1}$ and $\tilde{\beta_1}$ are strongly consistent estimators of $A_{1}^{0}, B_{1}^{0},  \alpha_{1}^{0}$ and $\beta_{1}^{0}$ respectively, that is, $\boldsymbol{\tilde{\theta_1}} \xrightarrow{a.s.} \boldsymbol{\theta_1^0}$ as $n \rightarrow \infty$.
\end{theorem}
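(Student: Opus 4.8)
The plan is to treat the estimation of the first component as essentially a single-component problem in which the remaining $p-1$ components act as additional, well-separated structured ``signal'', and then to exploit Assumption~3 to show that the dominant peak of the periodogram-like surface is the one located at $(\alpha_1^0,\beta_1^0)$. Concretely, I would first analyse the pointwise limit of the normalised criterion $\frac{1}{n}I_1(\alpha,\beta)$. Expanding $y(t)$ through model~(\ref{eq:9}) and using the trigonometric sum estimates collected in the appendix (Lemma~4 for the self-interaction of a single component, together with the Vinogradov-type bounds used by Lahiri et al.~\cite{2015} for the cross terms), the interaction between any two distinct components vanishes because $(\alpha_i^0,\beta_i^0)\neq(\alpha_j^0,\beta_j^0)$ for $i\neq j$ (Assumption~2), while the contribution of the linear process $X(t)$ to the periodogram is $o(n)$ almost surely under Assumption~1. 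This yields
\begin{equation*}
\frac{1}{n} I_1(\alpha,\beta)\xrightarrow{a.s.}
\begin{cases}
\frac{(A_k^0)^2+(B_k^0)^2}{4}, & (\alpha,\beta)=(\alpha_k^0,\beta_k^0),\ k=1,\dots,p,\\
0, & \text{otherwise.}
\end{cases}
\end{equation*}

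By Assumption~3 this limiting surface has a \emph{unique} global maximum, attained at $(\alpha_1^0,\beta_1^0)$ with height $\big((A_1^0)^2+(B_1^0)^2\big)/4$, which is strictly larger than the height $\big((A_2^0)^2+(B_2^0)^2\big)/4$ of the next tallest peak. I would then convert this into almost sure convergence of the maximiser by a contradiction argument of the type used for Theorem~\ref{theorem:1}. Since the admissible region $[0,\pi]\times[0,\pi]$ for $(\alpha,\beta)$ is compact, if $(\tilde{\alpha_1},\tilde{\beta_1})$ failed to converge to $(\alpha_1^0,\beta_1^0)$ then, on a set of positive probability, it would lie along a subsequence in a compact set $S_c$ excluding a fixed neighbourhood of $(\alpha_1^0,\beta_1^0)$. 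Maximality gives $I_1(\tilde{\alpha_1},\tilde{\beta_1})\ge I_1(\alpha_1^0,\beta_1^0)$, whereas $\frac{1}{n} I_1(\alpha_1^0,\beta_1^0)\to\big((A_1^0)^2+(B_1^0)^2\big)/4$; comparing this with a uniform upper bound for $\frac{1}{n}\sup_{S_c}I_1(\alpha,\beta)$, which cannot exceed $\big((A_2^0)^2+(B_2^0)^2\big)/4$ asymptotically, produces the required contradiction and hence $(\tilde{\alpha_1},\tilde{\beta_1})\xrightarrow{a.s.}(\alpha_1^0,\beta_1^0)$.

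Once the non-linear parameters are shown to be strongly consistent, the linear ones follow directly. Substituting $y(t)$ into the defining formula $\tilde{A_1}=\frac{2}{n}\sum_{t=1}^n y(t)\cos(\tilde{\alpha_1} t+\tilde{\beta_1} t^2)$ and using $(\tilde{\alpha_1},\tilde{\beta_1})\to(\alpha_1^0,\beta_1^0)$ together with the same orthogonality estimates, every term except the self-term of the first component is asymptotically negligible, so that $\tilde{A_1}\xrightarrow{a.s.}A_1^0$, and analogously $\tilde{B_1}\xrightarrow{a.s.}B_1^0$. Combined with the previous paragraph this establishes $\boldsymbol{\tilde{\theta_1}}\xrightarrow{a.s.}\boldsymbol{\theta_1^0}$.

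I expect the main obstacle to be the uniform upper bound on $\frac{1}{n}\sup_{S_c}I_1(\alpha,\beta)$ needed in the contradiction step. The pointwise limit is a sum of spikes and is highly discontinuous, so controlling the criterion \emph{simultaneously} over the whole non-shrinking region $S_c$ --- and in particular ruling out spurious tall peaks created by the interaction of the noise with the sub-dominant components --- is delicate, and is precisely where the uniform trigonometric and number-theoretic estimates of the appendix must be invoked carefully rather than merely applied pointwise.
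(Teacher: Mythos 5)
Your treatment of the non-linear parameters $(\tilde{\alpha_1},\tilde{\beta_1})$ follows essentially the paper's route: your contradiction argument from the maximality of the ALSE is exactly the content of the paper's Lemma~\ref{lemma:6}, and your uniform bound of $\frac{1}{n}\sup_{S_c} I_1$ by $\big((A_2^0)^2+(B_2^0)^2\big)/4$ is obtained in the paper by splitting $S_c$ into the part where $(\alpha,\beta)$ coincides with the sub-dominant component's parameters (where Assumption~3 gives strict negativity of the limsup of the difference) and the part where it coincides with none of the components (where the limit is $-\big((A_1^0)^2+(B_1^0)^2\big)/4$). So up to that point your proposal is sound and matches the paper.

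The genuine gap is in your last step, where you claim that the strong consistency of $(\tilde{\alpha_1},\tilde{\beta_1})$ ``together with the same orthogonality estimates'' directly yields $\tilde{A_1}\xrightarrow{a.s.}A_1^0$. Consistency alone is not sufficient, because the relevant self-term is
\begin{equation*}
\frac{2}{n}\sum_{t=1}^{n}A_1^0\cos(\alpha_1^0 t+\beta_1^0 t^2)\cos(\tilde{\alpha_1}t+\tilde{\beta_1}t^2)
=\frac{A_1^0}{n}\sum_{t=1}^{n}\cos\big((\tilde{\alpha_1}-\alpha_1^0)t+(\tilde{\beta_1}-\beta_1^0)t^2\big)+o(1),
\end{equation*}
and the phase error $(\tilde{\alpha_1}-\alpha_1^0)t+(\tilde{\beta_1}-\beta_1^0)t^2$ is evaluated at $t$ as large as $n$. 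If, say, $\tilde{\alpha_1}-\alpha_1^0$ were of exact order $n^{-1/2}$ (perfectly compatible with consistency), the sum above behaves like $\int_0^1\cos(\sqrt{n}\,x)\,dx\rightarrow 0$, so $\tilde{A_1}$ would converge to $0$, not to $A_1^0$; even an error of exact order $n^{-1}$ gives the limit $A_1^0\sin(c)/c\neq A_1^0$. What is needed is the super-consistency rate $\tilde{\alpha_1}-\alpha_1^0=o(n^{-1})$ and $\tilde{\beta_1}-\beta_1^0=o(n^{-2})$, which makes the phase error $o(1)$ uniformly in $t\leqslant n$. This is precisely why the paper proves Lemma~\ref{lemma:7} (via a Taylor expansion of $\mathbf{I_1}'(\boldsymbol{\tilde{\xi_1}})$ around $\boldsymbol{\xi_1^0}$ and the almost sure convergence of $\mathbf{D}\mathbf{I_1}''\mathbf{D}$ to a negative definite matrix $-\mathbf{S_1}$) before touching the linear parameters: only with $(\boldsymbol{\tilde{\xi_1}}-\boldsymbol{\xi_1^0})(\sqrt{n}\mathbf{D_1})^{-1}\xrightarrow{a.s.}0$ in hand does the Taylor expansion of $\cos(\tilde{\alpha_1}t+\tilde{\beta_1}t^2)$ around $(\alpha_1^0,\beta_1^0)$, combined with Lemma~\ref{lemma:2}, give $\tilde{A_1}\rightarrow A_1^0$ and $\tilde{B_1}\rightarrow B_1^0$ almost surely. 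Your proposal omits this rate result entirely, so as written the amplitude step would fail.
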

\begin{proof}
See \nameref{appendix:C}.
\end{proof}
\begin{theorem}\label{theorem:5}
If the assumptions 1, 2 and 3 are satisfied and p $\geqslant$ 2, $\tilde{A_2}, \tilde{B_2}, \tilde{\alpha_2}$ and $\tilde{\beta_2}$ are strongly consistent estimators of $A_{2}^{0}, B_{2}^{0},  \alpha_{2}^{0}$ and $\beta_{2}^{0}$, respectively, that is, $\boldsymbol{\tilde{\theta_2}} \xrightarrow{a.s.} \boldsymbol{\theta_2^0}$ as $n \rightarrow \infty$.
\end{theorem}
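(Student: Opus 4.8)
The plan is to reduce Theorem~\ref{theorem:5} to the first-step consistency result already established, by showing that the residual series fed into Step~3 behaves asymptotically like a genuine chirp model whose dominant component is the second one. Write the Step~2 residual as
\begin{equation*}
y^1(t) = \tilde y(t) + \xi(t),
\end{equation*}
where $\tilde y(t) = \sum_{k=2}^{p}\big(A_k^0\cos(\alpha_k^0 t + \beta_k^0 t^2) + B_k^0\sin(\alpha_k^0 t + \beta_k^0 t^2)\big) + X(t)$ is exactly a model of the form~(\ref{eq:9}) in the components $2,\dots,p$ with the same noise, and $\xi(t) = \big(A_1^0\cos(\alpha_1^0 t + \beta_1^0 t^2) + B_1^0\sin(\alpha_1^0 t + \beta_1^0 t^2)\big) - \big(\tilde A_1\cos(\tilde\alpha_1 t + \tilde\beta_1 t^2) + \tilde B_1\sin(\tilde\alpha_1 t + \tilde\beta_1 t^2)\big)$ collects the error made in estimating the first component in Step~1. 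By Assumption~3 the series $\tilde y$ satisfies Assumptions~1--3 with its dominant (largest $A^2+B^2$) component being the one indexed by $2$; hence applying the first-step consistency result to $\tilde y$ (Theorem~\ref{theorem:4}, or Theorems~\ref{theorem:1}--\ref{theorem:2} when $p=2$) already shows that the maximizer of the periodogram $I_2^{\tilde y}$ of $\tilde y$ is strongly consistent for $(\alpha_2^0,\beta_2^0)$ and that the associated amplitude estimators are strongly consistent for $(A_2^0,B_2^0)$. It therefore remains to prove that replacing $\tilde y$ by $y^1=\tilde y+\xi$ does not alter these limits.

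The heart of the argument, and the step I expect to be the main obstacle, is to show that $\xi(t)$ is asymptotically negligible in the periodogram. Using the almost sure convergence rates of the Step~1 estimators, namely $\tilde A_1\to A_1^0$, $\tilde B_1\to B_1^0$, $\tilde\alpha_1-\alpha_1^0 = o(n^{-1})$ and $\tilde\beta_1-\beta_1^0 = o(n^{-2})$ almost surely, a mean-value expansion of the sinusoids gives
\begin{equation*}
\cos(\tilde\alpha_1 t + \tilde\beta_1 t^2) - \cos(\alpha_1^0 t + \beta_1^0 t^2) = O\big(|\tilde\alpha_1-\alpha_1^0|\,t + |\tilde\beta_1-\beta_1^0|\,t^2\big),
\end{equation*}
and similarly for the sine term, so that since $t\le n$ the phase error is uniformly $o(1)$ and $\max_{1\le t\le n}|\xi(t)|\to 0$ almost surely. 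This is precisely the delicate point: strong consistency of the Step~1 estimators alone is \emph{not} enough, because a phase error of order $t$ or $t^2$ accumulates to $O(1)$ near $t\approx n$; one genuinely needs the faster-than-$n^{-1}$ and $n^{-2}$ rates, so that the removed chirp's energy vanishes rather than being partly retained near $(\alpha_1^0,\beta_1^0)$, where by Assumption~3 it would otherwise dominate and pull the maximizer back. Writing $S(\alpha,\beta)=\sum_t \tilde y(t)e^{-i(\alpha t+\beta t^2)}$ and $T(\alpha,\beta)=\sum_t \xi(t)e^{-i(\alpha t+\beta t^2)}$, the bounds $|T|\le \sum_t|\xi(t)| = o(n)$ and $|S|\le \sum_t|\tilde y(t)| = O(n)$ almost surely yield
\begin{equation*}
\sup_{\alpha,\beta}\Big|\tfrac1n I_2(\alpha,\beta) - \tfrac1n I_2^{\tilde y}(\alpha,\beta)\Big| = \sup_{\alpha,\beta}\frac{1}{n^2}\big(2\,|S|\,|T| + |T|^2\big) = o(1)\quad\text{a.s.}
\end{equation*}

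Finally I would transfer the consistency from $I_2^{\tilde y}$ to $I_2$. Since the two normalized periodograms are uniformly close over the compact set $[0,\pi]\times[0,\pi]$ almost surely, and since the almost sure limit of $\tfrac1n I_2^{\tilde y}$ attains its unique global maximum, of height $\tfrac14({A_2^0}^2+{B_2^0}^2)$, at $(\alpha_2^0,\beta_2^0)$ by Assumption~3, a compactness/contradiction argument identical in structure to that used for Theorem~\ref{theorem:4} forces every limit point of the maximizer $(\tilde\alpha_2,\tilde\beta_2)$ of $I_2$ to coincide with $(\alpha_2^0,\beta_2^0)$; hence $\tilde\alpha_2\xrightarrow{a.s.}\alpha_2^0$ and $\tilde\beta_2\xrightarrow{a.s.}\beta_2^0$. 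For the amplitudes, substituting the consistent $(\tilde\alpha_2,\tilde\beta_2)$ into~(\ref{eq:8}) and again splitting $y^1=\tilde y+\xi$, the $\xi$-contribution is $\tfrac2n\sum_t\xi(t)\cos(\tilde\alpha_2 t+\tilde\beta_2 t^2)=o(1)$ a.s. by the same bound on $\sum_t|\xi(t)|$, while the $\tilde y$-contribution tends to $A_2^0$ (respectively $B_2^0$) by the orthogonality-type averaging lemmas of the appendix, so that $\tilde A_2\xrightarrow{a.s.}A_2^0$ and $\tilde B_2\xrightarrow{a.s.}B_2^0$, completing the proof.
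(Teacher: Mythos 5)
Your proposal is correct, and its load-bearing ingredients are the ones the paper uses: the stage-one rates $\tilde{\alpha}_1-\alpha_1^0=o(n^{-1})$, $\tilde{\beta}_1-\beta_1^0=o(n^{-2})$ from Lemma~\ref{lemma:7}, which force the subtracted first component to differ from the true one uniformly in $t$ by $o(1)$ (this is exactly the paper's display~(\ref{eq:16})), and the sufficient condition of Lemma~\ref{lemma:6}. You are also right that this is the crux: mere consistency of $(\tilde{\alpha}_1,\tilde{\beta}_1)$ would allow a phase error of size $t|\tilde{\alpha}_1-\alpha_1^0|+t^2|\tilde{\beta}_1-\beta_1^0|$ to accumulate to $O(1)$ near $t\approx n$. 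Where you genuinely diverge is in how this fact is exploited. The paper substitutes $y^1(t)=o(1)+\sum_{k\geq 2}(\cdots)+X(t)$ directly into $\frac{1}{n}\left[I_2(\alpha,\beta)-I_2(\alpha_2^0,\beta_2^0)\right]$, splits $S_c$ into $p$ pieces according to whether $(\alpha,\beta)$ coincides with one of the remaining true frequency pairs or with none of them, and verifies $\limsup\sup<0$ on each piece using Assumption~3. You instead prove a perturbation bound --- a uniformly $o(1)$ change in the data changes $\frac{1}{n}I_2$ uniformly by $o(1)$, via $|T|=o(n)$ and $|S|=O(n)$ --- and then recycle Theorem~\ref{theorem:4} applied to the unobservable clean series $\tilde{y}$, transferring its $\limsup\sup<0$ conclusion across the perturbation before finishing with the Lemma~\ref{lemma:6} contradiction argument. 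Your packaging is more modular: it avoids redoing the $S_c$-splitting computation and makes the extension to stages $3,\dots,p$ immediate by induction; the paper's direct computation is more self-contained and shows explicitly where the amplitude ordering of Assumption~3 enters on each piece of $S_c$. One caution on phrasing: the "almost sure limit of $\frac{1}{n}I_2^{\tilde{y}}$ attains its unique global maximum" statement should really be the $\limsup\sup$ condition itself, since the pointwise limit is discontinuous (zero off the component frequencies) and the convergence is not uniform; but since you run the argument "identical in structure" to Theorem~\ref{theorem:4}, that is what you actually use, so this is cosmetic.

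One substantive ellipsis remains in your amplitude step: the claim $\frac{2}{n}\sum_{t=1}^{n}\tilde{y}(t)\cos(\tilde{\alpha}_2 t+\tilde{\beta}_2 t^2)\rightarrow A_2^0$ does not follow from the averaging results of Lemma~\ref{lemma:2} plus consistency of $(\tilde{\alpha}_2,\tilde{\beta}_2)$ alone --- this is precisely the phase-accumulation issue you flagged at stage one. It requires the stage-two rates $\tilde{\alpha}_2-\alpha_2^0=o(n^{-1})$, $\tilde{\beta}_2-\beta_2^0=o(n^{-2})$, obtained by repeating the Taylor-expansion argument of Lemma~\ref{lemma:7} for $I_2$ once stage-two frequency consistency is in hand. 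The paper's proof carries the same implicit dependence (its "following the same argument as in Theorem~\ref{theorem:4}" silently invokes the stage-two analogue of Lemma~\ref{lemma:7}), so this is a shared gap in explicitness rather than an error, but your writeup should state it.
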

\begin{proof}
See \nameref{appendix:C}.
\end{proof}
\justify
The result obtained in the above theorem can be extended upto the $p$-th step. Thus for any $k$ $\leqslant$ $p$, the ALSEs obtained at the $k$-th step are strongly consistent.
\begin{theorem}\label{theorem:6}
If the assumptions 1, 2 and 3 are satisfied, and if $\tilde{A_k}$,  $\tilde{B_k}$, $\tilde{\alpha_k}$ and $\tilde{\beta_k}$ are the estimators obtained at the $k$-th step, and k $>$ p then $\tilde{A_k}$ $\xrightarrow{a.s}$ 0 and $\tilde{B_k}$ $\xrightarrow{a.s}$ 0 as $n \rightarrow \infty$.
\end{theorem}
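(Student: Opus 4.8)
The plan is to prove the statement for a general index $k>p$ by induction on $k$, the base case being $k=p+1$. The engine of the whole argument is a reduction of the claim to a statement about the maximum of the periodogram of the residual data. Writing $y^{k-1}(t)$ for the data fed into the $k$-th step and $I_k(\alpha,\beta)$ for the corresponding periodogram-type function of~(\ref{eq:10}) built from $y^{k-1}$, the defining relations~(\ref{eq:8}) give the algebraic identity $\tilde{A_k}^2+\tilde{B_k}^2=\frac{4}{n}I_k(\tilde{\alpha_k},\tilde{\beta_k})=\frac{4}{n}\sup_{\alpha,\beta}I_k(\alpha,\beta)$, the last equality holding because $(\tilde{\alpha_k},\tilde{\beta_k})$ maximises $I_k$. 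Hence it suffices to show that
$$M_{k-1}:=\sup_{(\alpha,\beta)}\frac{1}{n}\Bigg|\sum_{t=1}^{n}y^{k-1}(t)\,e^{-i(\alpha t+\beta t^2)}\Bigg|\xrightarrow{a.s.}0,$$
since $\tilde{A_k}^2+\tilde{B_k}^2\leq 4M_{k-1}^2$.

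Next I would decompose the residual as $y^{k-1}(t)=X(t)+R(t)$, where $R(t)$ collects, over the $k-1$ components already removed, the true component (for the indices $j\leq p$) minus the reconstructed component $\tilde{A_j}\cos(\tilde{\alpha_j}t+\tilde{\beta_j}t^2)+\tilde{B_j}\sin(\tilde{\alpha_j}t+\tilde{\beta_j}t^2)$; for the indices $p<j<k$ there is no true component, so only the reconstructed term appears. By the triangle inequality $M_{k-1}\leq\sup_{\alpha,\beta}\frac1n|\sum_t X(t)e^{-i(\alpha t+\beta t^2)}|+\sup_{\alpha,\beta}\frac1n|\sum_t R(t)e^{-i(\alpha t+\beta t^2)}|$. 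The first term tends to $0$ almost surely by the uniform convergence result for the chirp periodogram of a stationary linear process satisfying Assumption 1 (the analogue, for our setting, of the classical theorems of Whittle and Walker), which is recorded as a lemma in the appendices and is precisely the statement that pure noise carries no asymptotic periodogram mass.

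The main obstacle is the second term, the contribution of the reconstruction error $R$, which I would treat one removed component at a time. For an index $p<j<k$ the induction hypothesis gives $\tilde{A_j},\tilde{B_j}\xrightarrow{a.s.}0$, so that term is bounded by $(|\tilde{A_j}|+|\tilde{B_j}|)$ times a quantity at most $1$ and vanishes uniformly in $(\alpha,\beta)$. For a genuine component $j\leq p$, strong consistency alone is not enough: writing $\tilde{A_j}\cos(\tilde{\alpha_j}t+\tilde{\beta_j}t^2)=A_j^0\cos(\tilde{\alpha_j}t+\tilde{\beta_j}t^2)+(\tilde{A_j}-A_j^0)\cos(\tilde{\alpha_j}t+\tilde{\beta_j}t^2)$, the amplitude gap $\tilde{A_j}-A_j^0$ is handled by consistency, but the phase gap must be controlled through $|\cos(\alpha_j^0 t+\beta_j^0 t^2)-\cos(\tilde{\alpha_j}t+\tilde{\beta_j}t^2)|\leq|(\alpha_j^0-\tilde{\alpha_j})t+(\beta_j^0-\tilde{\beta_j})t^2|$, whose average over $t=1,\dots,n$ is of order $|\alpha_j^0-\tilde{\alpha_j}|\,n+|\beta_j^0-\tilde{\beta_j}|\,n^2$. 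This forces me to use not merely consistency but the rates of convergence of the frequency and frequency-rate estimators from the earlier steps, namely $\tilde{\alpha_j}-\alpha_j^0=o(n^{-1})$ and $\tilde{\beta_j}-\beta_j^0=o(n^{-2})$ almost surely, which are comfortably implied by the $n^{-3/2}$ and $n^{-5/2}$ rates available for these estimators. With these rates the phase errors are uniformly negligible over $t\leq n$, the true and reconstructed chirps cancel in the averaged sense, and the reconstruction term vanishes uniformly in $(\alpha,\beta)$. Combining the two bounds yields $M_{k-1}\xrightarrow{a.s.}0$, hence $\tilde{A_k},\tilde{B_k}\xrightarrow{a.s.}0$, completing the induction.
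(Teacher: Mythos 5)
Your proposal is correct and follows essentially the same route as the paper's proof: write the residual series at step $k$ as the noise $X(t)$ plus a reconstruction error, cancel the reconstruction error of the genuine components using consistency together with the almost sure rates $\tilde{\alpha}_j-\alpha_j^0=o(n^{-1})$ and $\tilde{\beta}_j-\beta_j^0=o(n^{-2})$ (this is exactly the paper's equation~(\ref{eq:16})), and kill the noise contribution by the uniform almost sure convergence of Lemma~\ref{lemma:1}. The only real difference is completeness of presentation: the paper writes out only the case $k=p+1$, whereas your induction on $k$ (with the sup-periodogram bound $\tilde{A}_k^2+\tilde{B}_k^2=\frac{4}{n}\sup_{\alpha,\beta}I_k(\alpha,\beta)$ making the uniformity explicit) handles every $k>p$.
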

\begin{proof}
See \nameref{appendix:C}.
\end{proof}
\justify
Lahiri et al., \cite{2015} proved that the ordinary LSEs of the unknown parameters of the $p$-component chirp model have the following asymptotic distribution: \\
$$((\boldsymbol{\hat{\theta_1}} - \boldsymbol{\theta_1^0})\mathbf{D}^{-1}, \cdots, (\boldsymbol{\hat{\theta_p}} - \boldsymbol{\theta_p^0})\mathbf{D}^{-1}) \xrightarrow{d} N_{4p}(0 , 2c\sigma^2 \boldsymbol{\Sigma}(\boldsymbol{\theta^0})).$$  \\
$\textmd{Here, }\quad \mathbf{D} = diag(\frac{1}{\sqrt{n}}, \frac{1}{\sqrt{n}}, \frac{1}{n\sqrt{n}}, \frac{1}{n^2\sqrt{n}}), \quad c = \sum\limits_{j = -\infty}^{\infty} a(j)^2 \ \  \textmd{and }$ \\ \\
$$\boldsymbol{\Sigma}(\boldsymbol{\theta^{0}}) = \left[\begin{array}{cccc}\boldsymbol{\Sigma_1} & 0 & \cdots & 0 \\0 & \boldsymbol{\Sigma_2} & \cdots & 0 \\ \vdots & \vdots & \ddots & \vdots \\0 & 0 & \cdots & \boldsymbol{\Sigma_p}\end{array}\right],$$ \\
where, \\
\begin{equation}\label{eq:11}
\boldsymbol{\Sigma_k} = \frac{2}{{A_{k}^0}^2 + {B_{k}^0}^2}\left[\begin{array}{cccc}\frac{1}{2}\bigg({A_{k}^0}^2 + 9{B_{k}^0}^2\bigg) & -4A_k^0B_k^0 & -18B_k^0 & 15B_k^0 \\-4A_k^0B_k^0  & \frac{1}{2}\bigg({9A_{k}^0}^2 +{B_{k}^0}^2\bigg) & 18A_k^0 & -15A_k^0 \\ -18B_k^0 & 18A_k^0 & 96 & -90 \\ 15B_k^0 & -15A_k^0 & -90 & 90 \end{array}\right],\ \forall \ \  k = 1, \cdots, p.\\ \\
\end{equation}
Also, note that  
\begin{equation}\label{eq:12}
\boldsymbol{\Sigma_k}^{-1} = \left[\begin{array}{cccc}1 & 0 & \frac{B_k^0}{2} & \frac{B_k^0}{3} \\0  & 1 & -\frac{A_k^0}{2} & -\frac{A_k^0}{3} \\ \frac{B_k^0}{2} & -\frac{A_k^0}{2} & \frac{{A_k^0}^2 + {B_k^0}^2}{3} & \frac{{A_k^0}^2 + {B_k^0}^2}{4} \\ \frac{B_k^0}{3} & -\frac{A_k^0}{3} & \frac{{A_k^0}^2 + {B_k^0}^2}{4} & \frac{{A_k^0}^2 + {B_k^0}^2}{5} \end{array}\right]. \\
\end{equation}
\justify
We have the following result regarding the asymptotic distribution of the ALSEs.
\begin{theorem}\label{theorem:7}
Under the assumptions 1, 2, and 3, the asymptotic distribution of $(\boldsymbol{\tilde{\theta_k}} - \boldsymbol{\theta_k^0}) \mathbf{D}^{-1}$ is equivalent to the asymptotic distribution of $(\boldsymbol{\hat{\theta_k}} - \boldsymbol{\theta_k^0}) \mathbf{D}^{-1}$, for all $k = 1, \cdots, p$, where $\boldsymbol{\tilde{\theta_k}}$ is the ALSE and $\boldsymbol{\hat{\theta_k}}$ is the LSE of the unknown parameter vector associated with the $k$-th component of the $p$ component model.
\end{theorem}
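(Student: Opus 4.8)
The plan is to reduce the $p$-component sequential problem to repeated application of the single-component equivalence already established in Theorem~\ref{theorem:3}. For a fixed $k$, both $(\boldsymbol{\tilde{\theta_k}} - \boldsymbol{\theta_k^0})\mathbf{D}^{-1}$ and $(\boldsymbol{\hat{\theta_k}} - \boldsymbol{\theta_k^0})\mathbf{D}^{-1}$ arise from Taylor-expanding a gradient-type estimating equation at $\boldsymbol{\theta_k^0}$, so the strategy is to show that the scaled score $\mathbf{D}\,\nabla I_k(\boldsymbol{\theta_k^0})$ and the scaled Hessian $\mathbf{D}\,\nabla^2 I_k(\bar{\boldsymbol{\theta}})\mathbf{D}$ at the $k$-th step have exactly the same limits as their LSE counterparts. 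The block-diagonal form of $\boldsymbol{\Sigma}(\boldsymbol{\theta^0})$ then follows because the step-$k$ limit will depend only on $(A_k^0,B_k^0)$ through $\boldsymbol{\Sigma_k}$.

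First I would treat the first component. At step~1 we maximize $I_1(\alpha,\beta)$ computed from the full data $y(t)$, which contains all $p$ components. Writing $y(t)$ as the first component plus the remaining $p-1$ components plus noise $X(t)$, I would expand $\nabla I_1$ and $\nabla^2 I_1$ at $\boldsymbol{\theta_1^0}$. By Assumption~2 the pairs $(\alpha_k^0,\beta_k^0)$ are distinct, so every cross term between the first and a later component is a sum of the type $\frac{1}{n^{m+1}}\sum_t t^m \cos(\alpha_1^0 t + \beta_1^0 t^2)\cos(\alpha_j^0 t + \beta_j^0 t^2)$ with $(\alpha_1^0,\beta_1^0)\neq(\alpha_j^0,\beta_j^0)$, and these vanish as $n\to\infty$ by the number-theoretic (Vinogradov-type) lemmas in the appendices. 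Hence, after multiplication by $\mathbf{D}$, the score and Hessian of $I_1$ agree in the limit with the single-component case, and Theorem~\ref{theorem:3} delivers the asymptotic distribution of $\boldsymbol{\tilde{\theta_1}}$.

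For a general step $k$ I would argue inductively. At this step the data is $y^{k-1}(t)$, obtained by subtracting the estimated contributions of the first $k-1$ components. I would compare $y^{k-1}(t)$ with the ideal residual in which the true parameters are subtracted, and show, using the a.s.\ convergence rates from Theorems~\ref{theorem:4}--\ref{theorem:6} (namely $\tilde{\alpha}_j-\alpha_j^0=O(n^{-3/2})$, $\tilde{\beta}_j-\beta_j^0=O(n^{-5/2})$, and $\tilde{A}_j-A_j^0,\tilde{B}_j-B_j^0=O(n^{-1/2})$ for $j<k$), that the perturbation caused by plugging in estimated rather than true parameters contributes a term that is $o(1)$ once the score is scaled by $\mathbf{D}$. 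Together with the vanishing of the cross terms with the still-unestimated components $k+1,\dots,p$, the step-$k$ estimating equations become asymptotically identical to those of a single chirp with parameters $\boldsymbol{\theta_k^0}$ in stationary noise, so Theorem~\ref{theorem:3} applies componentwise and reproduces the LSE limit of Lahiri et al.

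The main obstacle I anticipate is controlling the plug-in error at the sharp scaling. Since $\mathbf{D}^{-1}$ inflates the $\beta$-direction by $n^{5/2}$, I must verify that the first-order Taylor remainder from replacing $(A_j^0,B_j^0,\alpha_j^0,\beta_j^0)$ by $(\tilde{A}_j,\tilde{B}_j,\tilde{\alpha}_j,\tilde{\beta}_j)$, after differentiation into the $k$-th score and multiplication by $\mathbf{D}^{-1}$, is genuinely negligible rather than merely bounded. This forces the frequency- and frequency-rate estimation errors to combine with the relevant trigonometric sums at precisely matching rates, which is where the sharp convergence rates and the orthogonality lemmas must be invoked in tandem.
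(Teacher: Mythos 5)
Your proposal follows, in substance, the same route as the paper's proof in Appendix D. The paper does not literally invoke Theorem~\ref{theorem:3}; instead it re-runs that theorem's argument at each step through the intermediate criterion $J_k(\boldsymbol{\theta})$, which satisfies $J_k(\tilde{A},\tilde{B},\alpha,\beta)=I_k(\alpha,\beta)$ so that its maximizer is the ALSE, and then compares the $\mathbf{D}$-scaled gradient and Hessian of $J_k$ with those of the least-squares criterion $Q_k$ at $\boldsymbol{\theta_k^0}$, the trigonometric and cross-component terms being killed by Lemmas~\ref{lemma:2} and~\ref{lemma:5}. Since the proof of Theorem~\ref{theorem:3} is exactly this $J$-versus-$Q$ comparison, your plan to ``show the scaled score and Hessian have the single-component limits, then apply Theorem~\ref{theorem:3}'' is the same argument, and your treatment of step $k\geq 2$ via the residual data $y^{k-1}(t)$ mirrors the paper's use of equation~(\ref{eq:16}).

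The one concrete error is your sourcing of the rates. Theorems~\ref{theorem:4}--\ref{theorem:6} do not give $\tilde{A}_j-A_j^0=O(n^{-1/2})$, $\tilde{\alpha}_j-\alpha_j^0=O(n^{-3/2})$, $\tilde{\beta}_j-\beta_j^0=O(n^{-5/2})$; what they (via Lemmas~\ref{lemma:6} and~\ref{lemma:7}) provide is only $\tilde{A}_j-A_j^0=o(1)$, $\tilde{\alpha}_j-\alpha_j^0=o(n^{-1})$ and $\tilde{\beta}_j-\beta_j^0=o(n^{-2})$ almost surely, and this weaker statement is precisely what the paper feeds into~(\ref{eq:16}). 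The sharp rates you invoke are a consequence of asymptotic normality itself, so inside your induction they must be harvested from the step-$(k-1)$ instance of the present theorem, as $O_p$ statements rather than a.s.\ ones; as written, your step-$k$ argument cites rates that are not available at that point of the development. That said, you are right---and more explicit than the paper---about why something beyond consistency is needed here: substituting the uniform $o(1)$ of~(\ref{eq:16}) into the $\beta$-component of the step-two score and scaling by $n^{5/2}$ gives only a crude bound of order $o(\sqrt{n})$, so negligibility genuinely requires pairing each Taylor term of the plug-in error with the oscillation lemmas (for the first-order terms) and with sharp-rate bounds (for the quadratic remainder), a verification the paper's own proof passes over silently when it asserts that the limits of the scaled derivatives of $Q_2$ and $J_2$ coincide.
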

\begin{proof}
See \nameref{appendix:D}.
\end{proof}
\section{Numerical Experiments}\label{section:4}
In this section, we present simulation studies for one component and two component chirp models. We first consider the following one component chirp model: \\
$$y(t) = A^0 \cos(\alpha^0 t + \beta^0 t^2) + B^0 \sin(\alpha^0 t + \beta^0 t^2) + X(t),$$\\
with the true parameter values $A^0 = 2.93, B^0 = 1.91, \alpha^0 = 2.5,$ and $\beta^0 = 0.1$ and $X(t)$ is an MA(1) process, that is $X(t) = e(t) + \rho e(t-1)$, with $\rho = 0.5$ and $e(t)$s are i.i.d. normal random variables with mean zero and variance $\sigma^2$. For simulations we consider different $\sigma^2$: 0.1, 0.5 and 1. The different sample sizes we use are $n = 250$, $n = 500$ and $n = 1000$ and for each $n$ we replicate the process, that is generate the data and obtain the estimates 1000 times. We estimate the parameters by the least squares estimation method, the approximate least squares estimation method and using the efficient algorithm as proposed by Lahiri et al., \cite{2013}. \\ \par
For the LSEs, we first minimize the error sum of squares function with respect to $\alpha$ and $\beta$ using the Nelder and Mead method of optimization (using optim function in the R Stats Package). For the initial values, it is intuitive to minimize the function over the grid $\displaystyle{(\frac{\pi j}{n},\frac{\pi k}{n^2})}$, $j = 1, \cdots, n; \  k = 1, \cdots, n^2$ analogous to what is suggested by Rice and Rosenblatt \cite{1988} for the sinusoidal model.
For the ALSEs, we maximize the periodogram-like function $I(\alpha,\beta)$, as defined in~(\ref{eq:4}), again using the Nelder and Mead method and the starting values are obtained by maximizing $I(\alpha,\beta)$ on grid points as used for the corresponding LSEs.\\ \par
\begin{table}[H]
\centering
\resizebox{0.75\textwidth}{!}{\begin{tabular}{|c|c|c|c|c|c|c|c|}
\hline
\multicolumn{2}{|c|}{Non-linear Parameters}  & $\alpha$ & $\beta$ & $\alpha$ & $\beta$ & $\alpha$ & $\beta$ \\
\hline  \multicolumn{2}{|c|}{True values}  & 2.5 & 0.1 & 2.5 & 0.1 & 2.5 & 0.1\\
\hline
$\sigma^2$&& \multicolumn{2}{|c|}{LSEs} & \multicolumn{2}{|c|}{ALSEs} & \multicolumn{2}{|c|}{Efficient Algorithm}\\
\hline 0.1  & Time (s)     &  17.2249    &             &  15.3530   &           &  2.2450   &           \\
\hline      & Average  &  2.5000     &  0.1000     &  2.4967    & 0.1000    &  2.4801   &   0.0997  \\
\hline      & Bias     &  4.15e-06   & -9.36e-09   & -3.26e-03  & 9.28e-06  & -1.99e-02 &  -2.54e-04\\
\hline      & MSE      &  1.80e-07   &  2.68e-12   &  1.10e-05  & 9.12e-11  &  8.22e-03 &   1.67e-07\\
\hline      & Avar     &  1.26e-07   &  1.88e-12   &  1.26e-07  & 1.88e-12  &  1.26e-07 &   1.88e-12\\
\hline      &          &             &             &            &           &           &           \\
\hline 0.5  & Time (s)     &  17.4460    &             &  13.4589   &           & 2.2410    &           \\
\hline      & Average  &  2.5000     &  0.1000     &  2.4968    & 0.1000    & 2.5063    &  0.0999   \\
\hline      & Bias     &  4.63e-05   & -1.97e-07   & -3.19e-03  & 8.98e-06  & 6.27e-03  & -1.29e-04 \\
\hline      & MSE      &  8.84e-07   &  1.33e-11   &  1.18e-05  & 1.05e-10  & 1.96e-02  &  3.29e-07 \\
\hline      & Avar     &  6.28e-07   &  9.42e-12   &  6.28e-07  & 9.42e-12  & 6.28e-07  &  9.42e-12 \\
\hline      &          &             &             &            &           &           &           \\
\hline  1   & Time (s)     &  18.2910    &             &  14.2689   &           & 2.4050    &           \\
\hline      & Average  &  2.5000     &  0.1000     &  2.4968    & 0.1000    & 2.5021    &  0.0999   \\
\hline      & Bias     & -7.89e-06   &  2.49e-08   & -3.25e-03  & 9.23e-06  & 2.13e-03  &  -1.27e-04\\
\hline      & MSE      &  1.89e-06   &  2.94e-11   &  1.40e-05  & 1.37e-10  & 4.62e-02  &  6.47e-07 \\
\hline      & Avar     &  1.26e-06   &  1.88e-11   &  1.26e-06  & 1.88e-11  & 1.26e-06  &  1.88e-11 \\
\hline
\end{tabular}}
\caption{Estimates of one component model when sample size is 250}
\label{table:1}
\end{table}
Tables~\ref{table:1},~\ref{table:2} and~\ref{table:3} provide the results, averaged over 1000 simulation runs, we obtain for the one component model.
In these tables, we observe that, the ALSEs have very small bias in the absolute value. The MSEs of the LSEs are very close to their asymptotic variances and the MSEs of the ALSEs also get very close to those of LSEs as $n$ increases and hence to the theoretical asymptotic variances of the LSEs, showing that they are asymptotically equivalent. Also when we increase the sample size, the MSEs of both the estimators decrease showing that they are consistent. We observe that the estimators obtained by the Efficient Algorithm are close to the true values but the bias and the MSEs are not as small as compared with the other two estimators. However the time taken to compute the estimates by the Efficient Algorithm is much less than the time taken by the ALSEs and the LSEs.

\begin{table}[H]
\centering
\resizebox{0.75\textwidth}{!}{\begin{tabular}{|c|c|c|c|c|c|c|c|}
\hline \multicolumn{2}{|c|}{Non-linear Parameters} & $\alpha$ & $\beta$ & $\alpha$ & $\beta$ & $\alpha$ & $\beta$\\
\hline \multicolumn{2}{|c|}{True values} & 2.5 & 0.1 & 2.5 & 0.1 & 2.5 & 0.1\\
\hline $\sigma^2$&& \multicolumn{2}{|c|}{LSEs} & \multicolumn{2}{|c|}{ALSEs} & \multicolumn{2}{|c|}{Efficient Algorithm}\\
\hline 0.1 &  Time (s)        &  30.1289    &              &   25.6280    &            &  4.3600    &              \\
\hline & Average     &  2.5000     &   0.1000     &   2.4993     &  0.1000    &  2.5304    &   0.1001     \\
\hline & Bias        &  -1.19e-05  &   2.56e-08   &  -6.79e-04   &  1.77e-06  &  3.04e-02  &   7.11e-05   \\
\hline & MSE         &  2.13e-08   &   8.09e-14   &   4.96e-07   &  3.26e-12  &  9.89e-03  &   3.11e-08   \\
\hline & Avar        &  1.57e-08   &   5.89e-14   &   1.57e-08   &  5.89e-14  &  1.57e-08  &   5.89e-14   \\
\hline    &          &             &              &              &            &            &              \\
\hline  0.5 & Time (s)        &  34.6510    &              &   30.3080    &            &  5.2440    &              \\
\hline & Average     &  2.5000     &   0.1000     &   2.4994     &  0.1000    &  2.5227    &   0.1000     \\
\hline & Bias        &  1.04e-05   &  -1.44e-08   &  -6.47e-04   &  1.71e-06  &  2.27e-02  &   4.54e-05   \\
\hline & MSE         &  1.21e-07   &   4.45e-13   &   6.12e-07   &  3.63e-12  &  2.35e-02  &   6.69e-08   \\
\hline & Avar        &  7.85e-08   &   2.94e-13   &   7.85e-08   &  2.94e-13  &  7.85e-08  &   2.94e-13   \\
\hline      &        &             &              &              &            &            &              \\
\hline 1 & Time (s)        &  32.2790    &              &   26.5430    &            &  4.4199    &              \\
\hline & Average     &  2.5000     &   0.1000     &   2.4993     &  0.1000    &  2.5189    &   0.1000     \\
\hline & Bias        &  -1.61e-05  &   2.01e-08   &  -6.77e-04   &  1.75e-06  &  1.89e-02  &   3.15e-05   \\
\hline & MSE         &  2.18e-07   &   8.08e-13   &   8.04e-07   &  4.33e-12  &  7.32e-03  &   2.31e-08   \\
\hline & Avar        &  1.57e-07   &   5.89e-13   &   1.57e-07   &  5.89e-13  &  1.57e-07  &   5.89e-13   \\
\hline
\end{tabular}}
\caption{Estimates of one component model when sample size is 500}
\label{table:2}
\end{table}

\begin{table}[H]
\centering
\resizebox{0.75\textwidth}{!}{\begin{tabular}{|c|c|c|c|c|c|c|c|c|}
\hline
\multicolumn{2}{|c|}{Non-linear Parameters} & $\alpha$ & $\beta$ & $\alpha$ & $\beta$ & $\alpha$ & $\beta$\\
\hline  \multicolumn{2}{|c|}{True values} & 2.5 & 0.1  & 2.5 & 0.1 & 2.5 & 0.1\\
\hline
$\sigma^2$&& \multicolumn{2}{|c|}{LSEs} & \multicolumn{2}{|c|}{ALSEs} & \multicolumn{2}{|c|}{Efficient Algorithm}\\
\hline 0.1 &  Time (s)            &    67.1180   &              &   62.3369    &                &   9.5720   &               \\
\hline     &  Average         &    2.5000    &   0.1000     &   2.5002     &     0.1000     &   2.4984   &      0.1000    \\
\hline     &  Bias            &    8.16e-07  &  -9.15e-10   &   1.86e-04   &    -9.30e-08   &  -1.65e-03 &     -5.34e-06\\
\hline     &  MSE             &    2.95e-09  &   2.85e-15   &   3.87e-08   &     1.21e-14   &   1.79e-03 &      1.46e-09\\
\hline     &  Avar            &    1.96e-09  &   1.84e-15   &   1.96e-09   &     1.84e-15   &   1.96e-09 &      1.84e-15\\
\hline     &            &                         &              &              &                &            &                \\
\hline 0.5 &  Time (s)            &   61.2009    &              &  56.3849     &                &  8.2260    &              \\
\hline     &  Average         &   2.5000     &  0.1000      &  2.5002      &    0.1000      &  2.4981    &     0.1000   \\
\hline     &  Bias            &   1.80e-06   & -1.67e-09    &  1.86e-04    &   -9.24e-08    & -1.87e-03  &    -4.21e-06\\
\hline     &  MSE             &   1.57e-08   &  1.55e-14    &  5.40e-08    &    2.60e-14    &  1.58e-03  &     1.32e-09\\
\hline     &  Avar            &   9.81e-09   &  9.20e-15    &  9.81e-09    &    9.20e-15    &  9.81e-09  &     9.20e-15\\
\hline     &              &                   &              &              &                &            &              \\
\hline  1  &  Time (s)            &   62.5589    &              &  56.3840     &                &  8.2129    &             \\
\hline     &  Average         &   2.5000     &  0.1000      &  2.5002      &    0.1000      &  2.4948    &     0.1000\\
\hline     &  Bias            &   3.32e-06   & -8.67e-10    &  1.88e-04    &   -9.19e-08    & -5.20e-03  &    -6.51e-06\\
\hline     &  MSE             &   3.10e-08   &  2.95e-14    &  7.41e-08    &    4.22e-14    &  1.40e-03  &     1.13e-09\\
\hline     &  Avar            &   1.96e-08   &  1.84e-14    &  1.96e-08    &    1.84e-14    &  1.96e-08  &     1.84e-14\\
\hline
\end{tabular}}
\caption{Estimates of one component model when sample size is 1000}
\label{table:3}
\end{table}

\justify
We also perform simulations for the following two component model using the proposed sequential estimators: $$y(t) = A_1^0 \cos(\alpha_1^0 t + \beta_1^0 t^2) + B_1^0 \sin(\alpha_1^0 t + \beta_1^0 t^2) + A_2^0 \cos(\alpha_2^0 t + \beta_2^0 t^2) + B_2^0 \sin(\alpha_2^0 t + \beta_2^0 t^2) + X(t).$$
\justify
For simulation, we take the true values as $A_1^0$ = 2, $B_1^0$ = 1.75, $\alpha_1^0$ = 1.5, $\beta_1^0$ = 0.1, $A_2^0$ = 3, $B_2^0$ = 2.25, $\alpha_2^0$ = 2.5 and $\beta_2^0$ = 0.2, and compute both the LSEs and the ALSEs of all the unknown parameters, sequentially. The error structure is same as that for one component simulation study.
\begin{table}[H]
\centering
\resizebox{0.75\textwidth}{!}{\begin{tabular}{|c|c|c|c|c|c|c|c|}
\hline
 \multicolumn{2}{|c|}{Non-linear Parameters} & $\alpha_1$ & $\beta_1$&  $\alpha_1$ & $\beta_1$& $\alpha_1$ & $\beta_1$ \\
\hline $\sigma^2$ & & \multicolumn{2}{|c|}{LSEs} & \multicolumn{2}{|c|}{ALSEs} & \multicolumn{2}{|c|}{Efficient Algorithm}\\
\hline        \multicolumn{2}{|c|}{True values} & 1.5 & 0.1 &  1.5 & 0.1 & 1.5 & 0.1 \\
\hline   0.1 &   Time (s)            & 31.918    &        &  25.5269  &          &  4.5620 & \\
\hline       &   Average         &  1.5074    &  0.1000     & 1.5044    &  0.1000     &  1.4711    & 0.1001   \\
\hline       &   Bias            &  7.43e-03  & -2.57e-05   & 4.40e-03  & -1.59e-05   & -2.89e-02  & 7.77e-05 \\
\hline       &   MSE             &  5.58e-05  &  6.70e-10   & 2.00e-05  &  2.62e-10   &  1.63e-02  & 2.25e-07 \\
\hline       &   Avar            &  1.09e-07  &  1.64e-12   & 1.09e-07  &  1.64e-12   &  1.09e-07  & 1.64e-12 \\
\hline  0.5  &   Time (s)              & 32.3660   &        &  26.2630  &          & 4.5389 & \\
\hline       &   Average         &  1.5075    &  0.1000    &  1.5045    &  0.1000     &  1.4832     & 0.1001   \\
\hline       &   Bias            &  7.51e-03  & -2.60e-05  &  4.48e-03  & -1.62e-05   & -1.68e-02   & 1.18e-04 \\
\hline       &   MSE             &  5.80e-05  &  7.04e-10  &  2.29e-05  &  3.15e-10   &  2.41e-02   & 2.98e-07 \\
\hline       &   Avar            &  5.46e-07  &  8.19e-12  &  5.46e-07  &  8.19e-12   &  5.46e-07   & 8.19e-12 \\
\hline   1   &  Time (s)            &   32.6730    &        &  26.5839      &       &  4.5110  & \\
\hline       &   Average        &   1.5074    &  0.1000     & 1.5043    &  0.1000     &  1.4809     & 0.1001   \\
\hline       &   Bias           &   7.37e-03  & -2.55e-05   & 4.33e-03  & -1.56e-05   & -1.91e-02   & 1.14e-04 \\
\hline       &   MSE            &   5.75e-05  &  6.98e-10   & 2.40e-05  &  3.40e-10   &  3.29e-02   & 4.15e-07 \\
\hline       &   Avar           &   1.09e-06  &  1.64e-11   & 1.09e-06  &  1.64e-11   &  1.09e-06   & 1.64e-11 \\
\hline
\multicolumn{2}{|c|}{Non-linear Parameters} & $\alpha_2$ & $\beta_2$&  $\alpha_2$ & $\beta_2$& $\alpha_2$ & $\beta_2$ \\
\hline  \multicolumn{2}{|c|}{True values}  & 2.5 & 0.2 &  2.5 & 0.2 & 2.5 & 0.2\\
\hline   0.1 &   Time (s)            & 31.918    &        &  25.5269  &          &  4.5620 & \\
\hline       &   Average         &   2.4999    & 0.2000     &  2.5000    &  0.2000     &  2.4548    &  0.1998  \\
\hline       &   Bias            &  -1.22e-04  & 1.82e-07   & -1.40e-05  & -3.43e-06   & -4.52e-02  & -1.67e-04\\
\hline       &   MSE             &   1.96e-07  & 2.74e-12   &  2.02e-07  &  1.46e-11   &  3.44e-02  &  3.89e-07\\
\hline       &   Avar            &   2.17e-07  & 3.26e-12   &  2.17e-07  &  3.26e-12   &  2.17e-07  &  3.26e-12\\
\hline  0.5  &   Time (s)             & 32.3660   &        &  26.2630  &          & 4.5389 & \\
\hline       &   Average        &   2.5000    &  0.2000     & 2.5001    &  0.2000      &  2.4744    &  0.1999\\
\hline       &   Bias           &  -2.84e-05  & -1.71e-07   & 9.07e-05  & -3.82e-06    & -2.56e-02  & -9.02e-05\\
\hline       &   MSE            &   7.73e-07  &  1.13e-11   & 8.81e-07  &  2.68e-11    &  3.67e-02  &  4.52e-07\\
\hline       &   Avar           &   1.09e-06  &  1.63e-11   & 1.09e-06  &  1.63e-11    &  1.09e-06  &  1.63e-11\\
\hline   1   &  Time (s)            &   32.6730  &      &  26.5839      &       &  4.5110  & \\
\hline       &   Average        &    2.4999     & 0.2000    &  2.5000    &  0.2000      &  2.4707    &  0.1999\\
\hline       &   Bias           &   -1.26e-04   & 1.62e-07  & -2.08e-05  & -3.43e-06    & -2.93e-02  & -8.50e-05\\
\hline       &   MSE            &    1.82e-06   & 2.65e-11  &  2.03e-06  &  4.04e-11    &  2.61e-02  &  3.63e-07\\
\hline       &   Avar           &    2.17e-06   & 3.26e-11  &  2.17e-06  &  3.26e-11    &  2.17e-06  &  3.26e-11\\
\hline
\end{tabular}}
\caption{Estimates of the two component model when sample size is 250}
\label{table:4}
\end{table}

\begin{table}[H]
\centering
\resizebox{0.75\textwidth}{!}{\begin{tabular}{|c|c|c|c|c|c|c|c|}
\hline
 \multicolumn{2}{|c|}{Non-linear Parameters} & $\alpha_1$ & $\beta_1$&  $\alpha_1$ & $\beta_1$& $\alpha_1$ & $\beta_1$ \\
\hline $\sigma^2$ & & \multicolumn{2}{|c|}{LSEs} & \multicolumn{2}{|c|}{ALSEs} & \multicolumn{2}{|c|}{Efficient Algorithm}\\
\hline        \multicolumn{2}{|c|}{True values} & 1.5 & 0.1 &  1.5 & 0.1 & 1.5 & 0.1 \\
\hline   0.1 &   Time (s)          &    61.0879   &               &   55.7359      &                &     8.4870     &             \\
\hline       &   Average       &     1.5020   &   0.1000      &   1.5011       &   0.1000       &    1.4798      &    0.1000    \\
\hline       &   Bias          &     1.98e-03 &  -4.30e-06    &   1.13e-03     &   -2.49e-06    &    -2.02e-02   &    -4.34e-05  \\
\hline       &   MSE           &     4.01e-06 &   1.88e-11    &   1.33e-06     &   6.40e-12     &    1.17e-02    &    2.82e-08  \\
\hline       &   Avar          &     1.37e-08 &   5.12e-14    &   1.37e-08     &   5.12e-14     &    1.37e-08    &    5.12e-14  \\
\hline   0.5 &   Time (s)          &    61.8270   &               &   55.9599      &                &    8.4100      &              \\
\hline       &    Average      &     1.5020   &   0.1000      &   1.5011       &   0.1000       &    1.4840      &    0.1000    \\
\hline       &    Bias         &     1.97e-03 &  -4.29e-06    &   1.13e-03     &   -2.48e-06    &    -1.60e-02   &    -3.96e-05 \\
\hline       &    MSE          &     4.12e-06 &   1.92e-11    &   1.49e-06     &   7.02e-12     &    1.19e-02    &    3.37e-08  \\
\hline       &    Avar         &     6.83e-08 &   2.56e-13    &   6.83e-08     &   2.56e-13     &    6.83e-08    &    2.56e-13  \\
\hline   1   &  Time (s)           &    63.3360   &               &   57.1080      &                &    8.7080      &              \\
\hline       &    Average      &     1.5020   &   0.1000      &   1.5011       &   0.1000       &    1.4832      &    0.1000     \\
\hline       &    Bias         &     1.99e-03 &  -4.32e-06    &   1.14e-03     &   -2.53e-06    &    -1.68e-02   &    -2.97e-05 \\
\hline       &    MSE          &     4.35e-06 &   2.03e-11    &   1.76e-06     &   8.22e-12     &    8.34e-03    &    2.37e-08  \\
\hline       &    Avar         &     1.37e-07 &   5.12e-13    &   1.37e-07     &   5.12e-13     &    1.37e-07    &    5.12e-13  \\
\hline
\multicolumn{2}{|c|}{Non-linear Parameters} & $\alpha_2$ & $\beta_2$&  $\alpha_2$ & $\beta_2$& $\alpha_2$ & $\beta_2$ \\
  \hline  \multicolumn{2}{|c|}{True values}  & 2.5 & 0.2 &  2.5 & 0.2 & 2.5 & 0.2\\
\hline   0.1 &   Time (s)          &    61.0879   &               &    55.7359     &                &     8.4870     &               \\
\hline       &     Average     &     2.4999   &   0.2000      &    2.4987      &    0.2000      &     2.4861     &     0.1999      \\
\hline       &     Bias        &     -5.39e-05&   1.35e-08    &   -1.26e-03    &    2.13e-06    &    -1.39e-02   &    -7.72e-05 \\
\hline       &     MSE         &     2.32e-08 &   7.37e-14    &    1.61e-06    &    4.66e-12    &     3.04e-03   &     1.66e-08 \\
\hline       &     Avar        &     2.72e-08 &   1.02e-13    &    2.72e-08    &    1.02e-13    &     2.72e-08   &     1.02e-13 \\
\hline   0.5 &   Time (s)          &    61.8270   &               &    55.9599     &                &     8.4100     &                 \\
\hline       &     Average     &     2.5000   &   0.2000      &    2.4988      &    0.2000      &     2.5017     &     0.2000 \\
\hline       &     Bias        &     -4.13e-05&  -1.44e-08    &   -1.24e-03    &    2.09e-06    &     1.65e-03   &    -4.06e-05 \\
\hline       &     MSE         &     9.53e-08 &   3.53e-13    &    1.67e-06    &    4.89e-12    &     2.89e-03   &     1.28e-08 \\
\hline       &     Avar        &     1.36e-07 &   5.10e-13    &    1.36e-07    &    5.10e-13    &     1.36e-07   &     5.10e-13 \\
\hline   1   &  Time (s)           &    63.3360   &               &    57.1080     &                &     8.7080     &                   \\
\hline       &    Average      &     2.5000   &   0.2000      &    2.4988      &    0.2000      &     2.5102     &     0.2000  \\
\hline       &    Bias         &     -3.38e-05&  -1.78e-08    &    -1.23e-03   &    2.08e-06    &     1.02e-02   &    -1.78e-05 \\
\hline       &    MSE          &     1.98e-07 &   7.29e-13    &    1.78e-06    &    5.37e-12    &     4.49e-03   &     1.55e-08  \\
\hline       &    Avar         &     2.72e-07 &   1.02e-12    &    2.72e-07    &    1.02e-12    &     2.72e-07   &     1.02e-12  \\
\hline
\end{tabular}}
\caption{Estimates of the two component model when sample size is 500}
\label{table:5}
\end{table}
\justify
This process of data generation and estimation of the unknown parameters is replicated 1000 times and we calculate the average values, bias and MSEs of these estimates. We also report the time taken for the entire simulation process by each of the estimation methods. We compute the asymptotic variance of the estimates to compare the MSEs with them. Simulation results provided in tables~\ref{table:4},~\ref{table:5} and~\ref{table:6}, for the two component model, show that the MSEs of the proposed sequential estimators are well matched to the MSEs of LSEs and they become close as $n$ increases. Also they are comparable to the asymptotic variance of the LSEs. In many cases, it is observed that the MSEs of the ALSEs of the first component, are smaller than the corresponding LSEs. In all the tables, it is consistently observed that compared to the LSEs, computation of the ALSEs takes lesser time.\\

\begin{table}[H]
\centering
\resizebox{0.75\textwidth}{!}{\begin{tabular}{|c|c|c|c|c|c|c|c|}
\hline
 \multicolumn{2}{|c|}{Non-linear Parameters} & $\alpha_1$ & $\beta_1$&  $\alpha_1$ & $\beta_1$& $\alpha_1$ & $\beta_1$ \\
\hline $\sigma^2$ & & \multicolumn{2}{|c|}{LSEs} & \multicolumn{2}{|c|}{ALSEs} & \multicolumn{2}{|c|}{Efficient Algorithm}\\
\hline        \multicolumn{2}{|c|}{True values} & 1.5 & 0.1 &  1.5 & 0.1 & 1.5 & 0.1 \\
 \hline 0.1&   Time (s)             &   124.913  &            &  114.535  &          &   16.7209  &      \\
 \hline    &   Average              &   1.4999   &   0.1000   &  1.5002   &  0.1000  &   1.5206   &   0.1000    \\
 \hline    &   Bias                 &   -7.22e-05&  -1.91e-09 &  1.87e-04 & -3.52e-07&   2.06e-02 &   1.68e-05  \\
 \hline    &   MSE                  &   1.02e-08 &   5.16e-15 &  4.17e-08 &  1.30e-13&   7.69e-03 &   5.54e-09  \\
 \hline    &   Avar                 &   1.71e-09 &   1.60e-15 &  1.71e-09 &  1.60e-15&   1.71e-09 &   1.60e-15  \\
\hline 0.5&   Time (s)             &      118.263     &                 &      115.557     &                 &      16.3389   &                  \\
\hline    &   Average          &      1.4999      &     0.1000      &      1.5002      &     0.1000      &      1.5168    &    0.1000  \\
\hline    &   Bias             &      -7.51e-05   &     1.25e-09    &      1.97e-04    &    -3.61e-07    &      1.68e-02  &    1.20e-05\\
\hline    &   MSE              &      2.91e-08    &     2.47e-14    &      6.41e-08    &     1.54e-13    &      2.32e-03  &    1.73e-09\\
\hline    &   Avar             &      8.53e-09    &     8.00e-15    &      8.53e-09    &     8.00e-15    &      8.53e-09  &    8.00e-15\\
\hline 1  &   Time (s)             &      118.7809    &                 &      114.4330    &                 &      16.3170   &                     \\
\hline    &   Average          &      1.4999      &     0.1000      &      1.5002      &     0.1000      &      1.5111    &    0.1000  \\
\hline    &   Bias             &      -7.32e-05   &     7.49e-10    &      2.04e-04    &    -3.66e-07    &      1.11e-02  &    6.34e-06\\
\hline    &   MSE              &      5.45e-08    &     5.07e-14    &      9.13e-08    &     1.81e-13    &      1.18e-03  &    8.62e-10\\
\hline    &   Avar             &      1.71e-08    &     1.60e-14    &      1.71e-08    &     1.60e-14    &      1.71e-08  &    1.60e-14\\
\hline
\multicolumn{2}{|c|}{Non-linear Parameters} & $\alpha_2$ & $\beta_2$&  $\alpha_2$ & $\beta_2$& $\alpha_2$ & $\beta_2$ \\
    \hline  \multicolumn{2}{|c|}{True values}  & 2.5 & 0.2 &  2.5 & 0.2 & 2.5 & 0.2\\
  \hline 0.1&   Time (s)            &   124.913  &           &   114.535  &          &   16.7209    &    \\
  \hline    &   Average             &   2.5000   &  0.2000   &   2.4998   & 0.2000   &   2.4958     &  0.2000 \\
  \hline    &   Bias                &   2.29e-05 & -1.69e-08 &  -2.44e-04 & 3.25e-07 &  -4.15e-03   & -1.30e-06 \\
  \hline    &   MSE                 &   3.32e-09 &  2.90e-15 &   6.29e-08 & 1.09e-13 &   9.38e-04   &  6.46e-10 \\
  \hline    &   Avar                &   3.40e-09 &  3.19e-15 &   3.40e-09 & 3.19e-15 &   3.40e-09   &  3.19e-15 \\
\hline 0.5&   Time (s)             &      118.263     &                 &      115.557     &                 &      16.3389   &              \\
\hline    &   Average          &      2.5000      &     0.2000      &      2.4998      &     0.2000      &      2.4979    &    0.2000\\
\hline    &   Bias             &      2.08e-05    &     -1.40e-08   &      -2.45e-04   &     3.26e-07    &     -2.13e-03  &    6.78e-07\\
\hline    &   MSE              &      1.43e-08    &     1.31e-14    &      7.47e-08    &     1.20e-13    &      6.51e-04  &    5.28e-10\\
\hline    &   Avar             &      1.70e-08    &     1.59e-14    &      1.70e-08    &     1.59e-14    &      1.70e-08  &    1.59e-14\\
\hline  1 &   Time (s)             &      118.7809    &                 &      114.4330    &                 &      16.3170   &             \\
\hline    &   Average          &      2.5000      &     0.2000      &      2.4997      &     0.2000      &      2.5007    &    0.2000\\
\hline    &   Bias             &      1.01e-05    &     -3.41e-09   &      -2.55e-04   &     3.37e-07    &      7.45e-04  &    2.89e-06\\
\hline    &   MSE              &      2.81e-08    &     2.65e-14    &      9.42e-08    &     1.41e-13    &      1.43e-03  &    8.21e-10\\
\hline    &   Avar             &      3.40e-08    &     3.19e-14    &      3.40e-08    &     3.19e-14    &      3.40e-08  &    3.19e-14\\
\hline
\end{tabular}}
\caption{Estimates of the two component model when sample size is 1000}
\label{table:6}
\end{table}
It is observed that the estimates of the unknown non-linear parameters of the second component for the two component model, that is of $\alpha_2^0$, $\beta_2^0$, have very small bias as compared to those obtained at the first stage, that is of $\alpha_1^0$, $\beta_1^0$ or those obtained for the one component model, $\alpha^0$, $\beta^0$.
Since the proposed ALSEs have desirable properties, it is a good idea to obtain the initial estimates by maximising the periodogram-like function $I(\alpha,\beta)$ as defined in~(\ref{eq:4}) and then carry out the least squares estimation.

\section{Real Data Analysis}\label{section:5}
For illustration, we perform analysis of two speech signal data sets  "AHH" and "AAA". These data have been obtained from a sound instrument at the Speech Signal Processing laboratory of the Indian Institute of Technology Kanpur. We have 469 data points in the "AHH" signal data set and 477 data points in the "AAA" signal data set, both sampled at 10 kHz frequency. Figure~{\ref{fig:AHH}} gives the plot of the observed signal "AHH" and  Figure~{\ref{fig:AAA}} gives the plot of the observed signal "AAA". 
\begin{figure}[H]
\begin{center}
\includegraphics[scale=0.15]{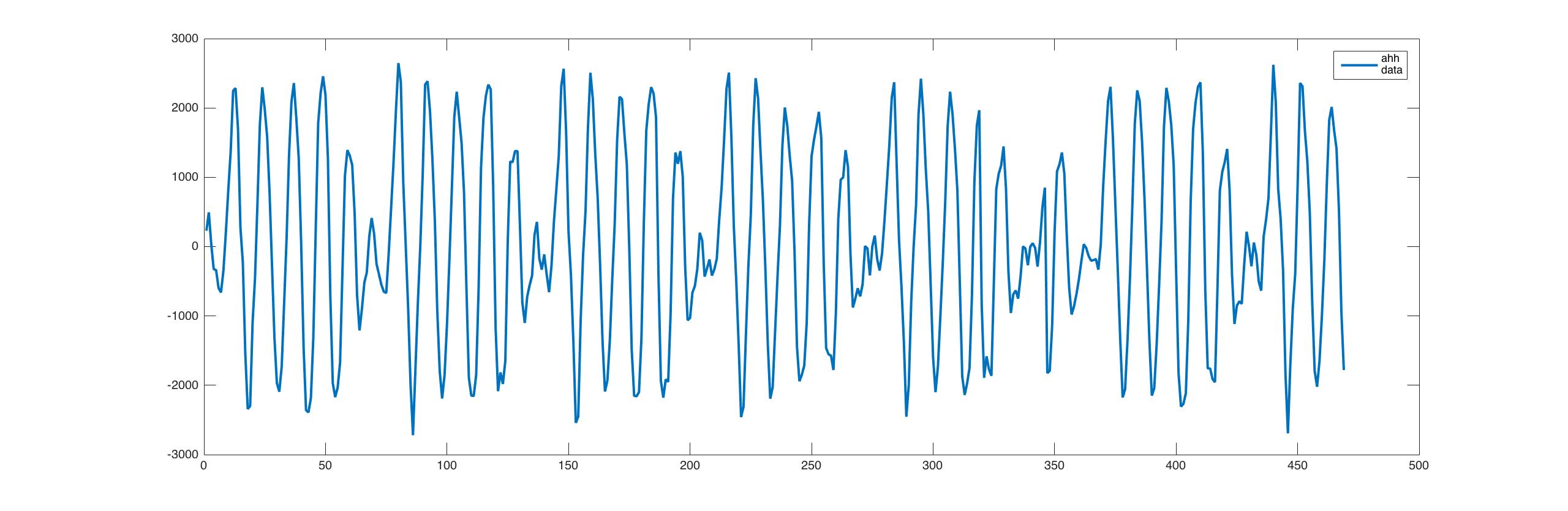}
\caption{AHH: original signal}
\label{fig:AHH}
\end{center}
\end{figure}

\begin{figure}[H]
\begin{center}
\includegraphics[scale=0.15]{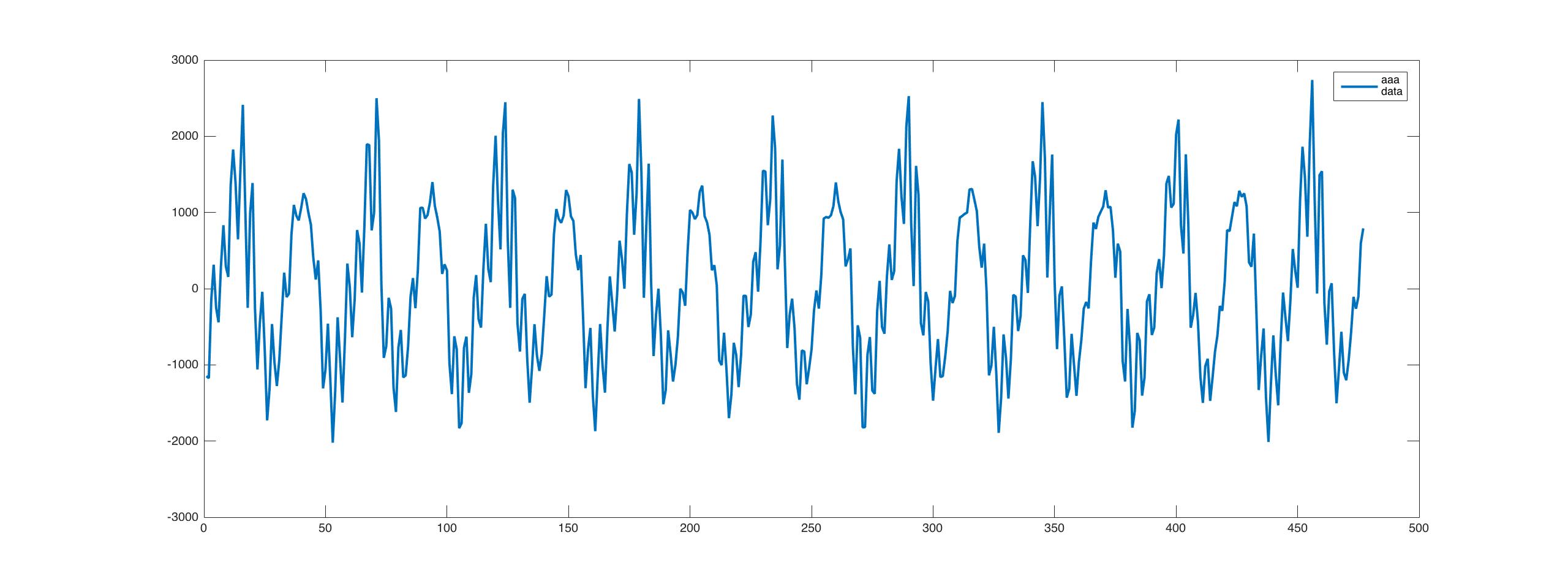}
\caption{AAA: original signal}
\label{fig:AAA}
\end{center}
\end{figure}

\justify
We try to fit a multiple component chirp model to both the data sets, using the proposed sequential estimation procedure which computes ALSEs at each stage. At the same time, we compute the sequential LSEs as proposed by Lahiri et al., \cite{2015} for comparison purposes. To find the initial values of the frequency and frequency rate, at each stage we maximize the periodogram-like function, $I(\alpha,\beta)$ over a fine grid:  $\displaystyle{\bigg(\frac{\pi j}{n}, \frac{\pi k}{n^2}\bigg)}$, $j$ = 1, 2, $\cdots$, $n$, $k$ = 1, 2, $\cdots$, $n^2$.
\justify
For the estimation of the number of components, we use the following form of BIC: 
$$\textmd{BIC}(k) = n \ ln(\textmd{SSE}(k)) + 2\ (4k + 1)\ ln(n). $$
The model order is estimated as the value of $k \in \{1, 2, \cdots, K\}$ for which the BIC is minimum. For the "AHH" data, when we estimate the parameters using sequential least squares estimation procedure, it is evident from Figure~\ref{fig:BIC_LSE} that the number of components that fits this data is 8. Using the proposed sequential ALSEs to fit the model also gives the same estimated number of components which can be seen in Figure~\ref{fig:BIC_ALSE}. The number of components when we estimate the parameters  of the "AAA" data, using sequential least squares estimation procedure is 9, as can be seen from Figure~\ref{fig:BIC_LSE_AAA}. The proposed sequential ALSEs also give the same estimated number of components which can be seen in Figure~\ref{fig:BIC_ALSE_AAA}.
\begin{figure}[H]
\begin{center}
\includegraphics[scale=0.15]{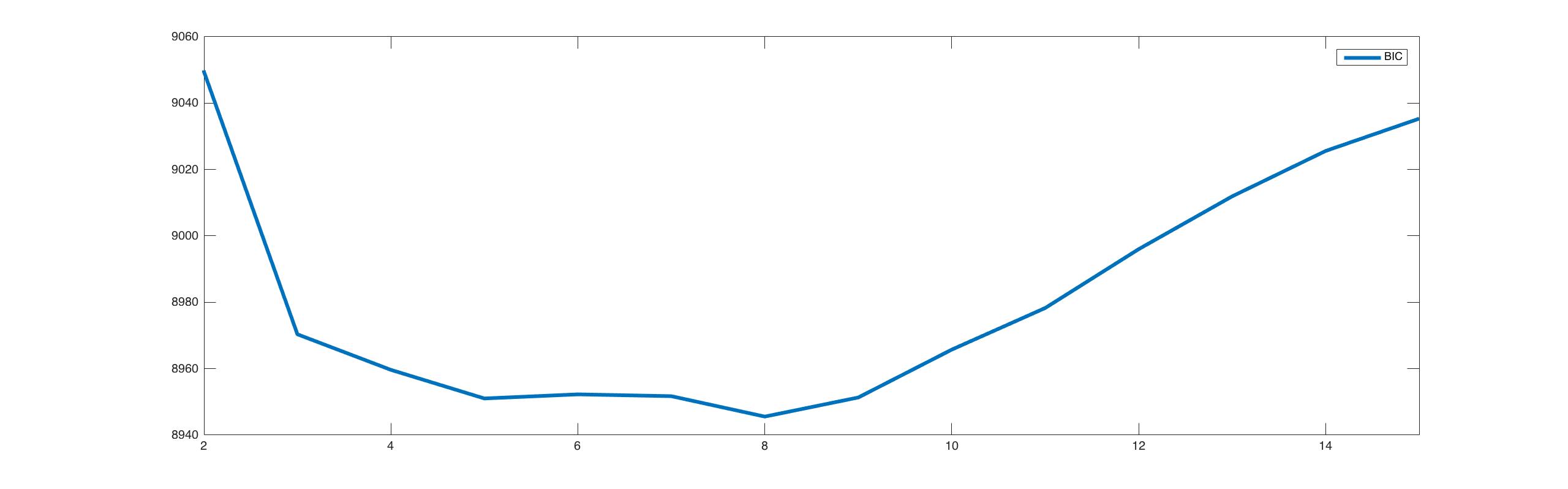}
\caption{BIC plot:  "AHH" data set when estimates are obtained by sequential LSE procedure.}
\label{fig:BIC_LSE}
\end{center}
\end{figure}

\begin{figure}[H]
\begin{center}
\includegraphics[scale=0.15]{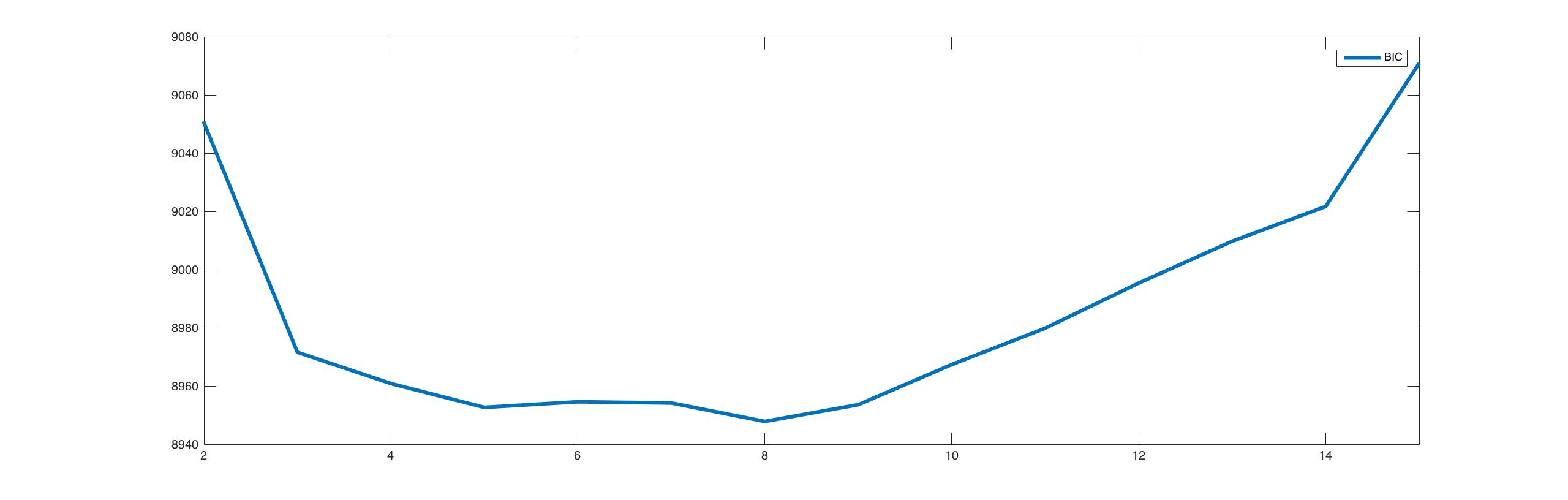}
\caption{BIC plot: "AHH" data set when estimates are obtained by sequential ALSE procedure.}
\label{fig:BIC_ALSE}
\end{center}
\end{figure}

\begin{figure}[H]
\begin{center}
\includegraphics[scale=0.15]{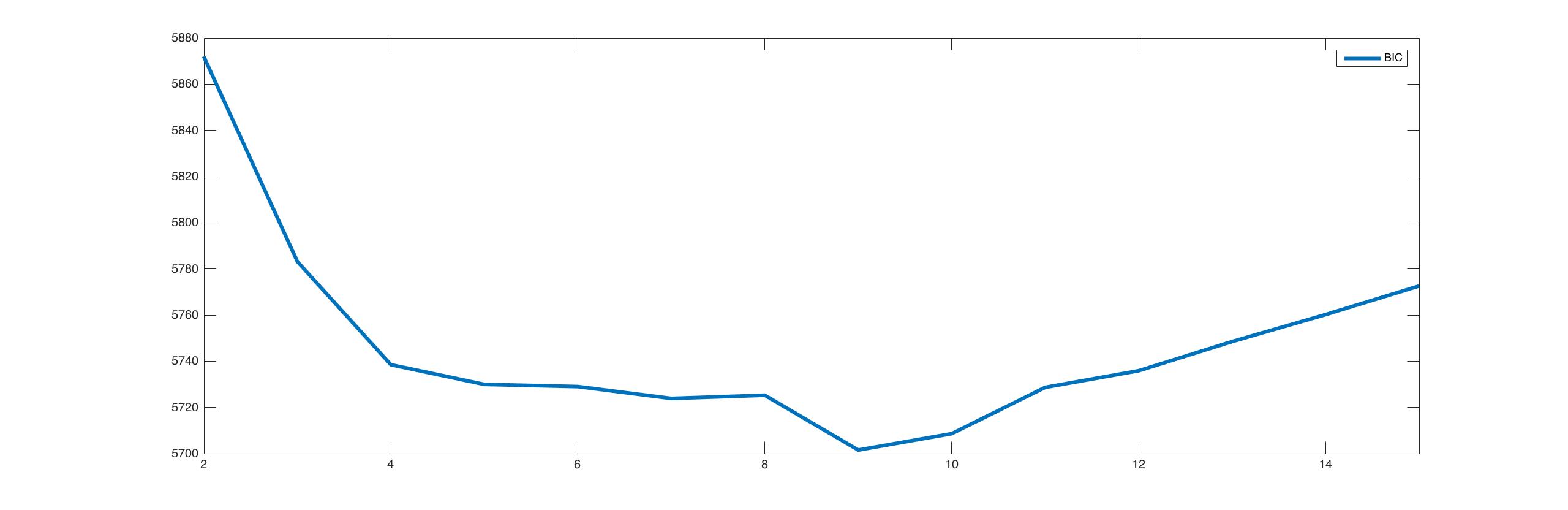}
\caption{BIC plot: "AAA" data set when estimates are obtained by sequential LSE procedure.}
\label{fig:BIC_LSE_AAA}
\end{center}
\end{figure}

\begin{figure}[H]
\begin{center}
\includegraphics[scale=0.15]{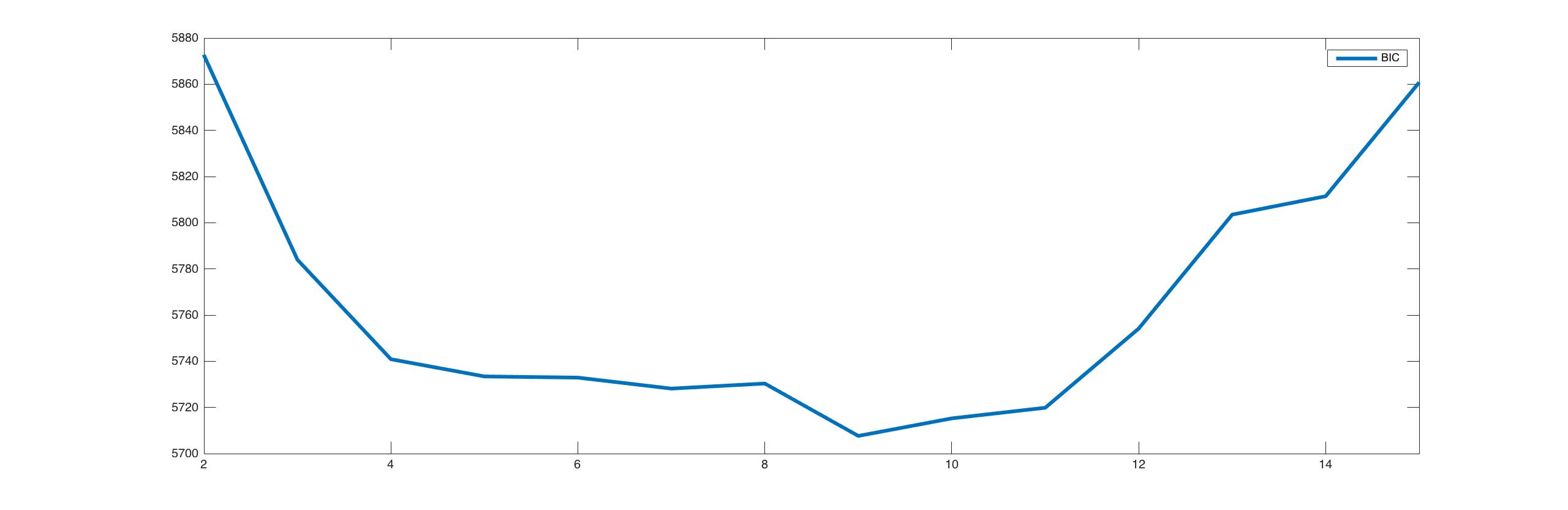}
\caption{BIC plot: "AAA" data set when estimates are obtained by sequential ALSE procedure.}
\label{fig:BIC_ALSE_AAA}
\end{center}
\end{figure}

\justify
Figure~\ref{fig:fit_LSE}  and Figure~\ref{fig:fit_ALSE} gives the observed as well as the fitted signal for the "AHH" data, estimated using the sequential LSEs and using the sequential ALSEs, respectively. We observe from these plots that both the fits look similar. Hence we may conclude from here as well, that the ALSEs are equivalent to the LSEs. Figure~\ref{fig:fit_LSE_aaa}  and Figure~\ref{fig:fit_ALSE_aaa} give the observed as well as the fitted signal for the "AAA" data, estimated using the sequential LSEs and using the sequential ALSEs, respectively.
\begin{figure}[H]
\begin{center}
\includegraphics[scale=0.15]{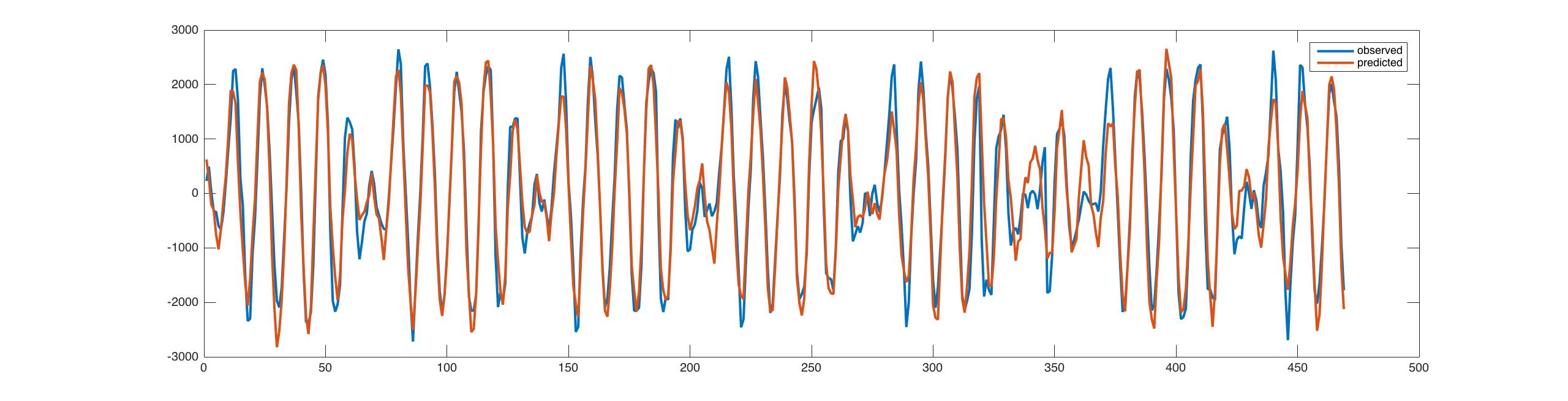}
\caption{Observed "AHH" signal and signal fitted using sequential LSEs. }
\label{fig:fit_LSE}
\end{center}
\end{figure}
\begin{figure}[H]
\begin{center}
\includegraphics[scale=0.15]{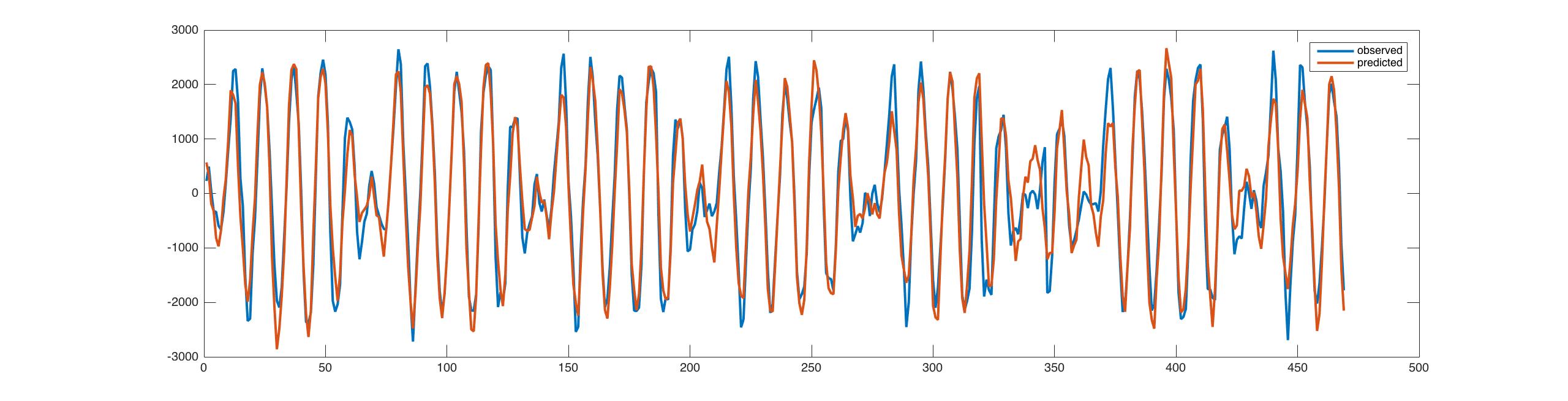}
\caption{Observed "AHH" signal and signal fitted using sequential ALSEs. }
\label{fig:fit_ALSE}
\end{center}
\end{figure}
\begin{figure}[H]
\begin{center}
\includegraphics[scale=0.15]{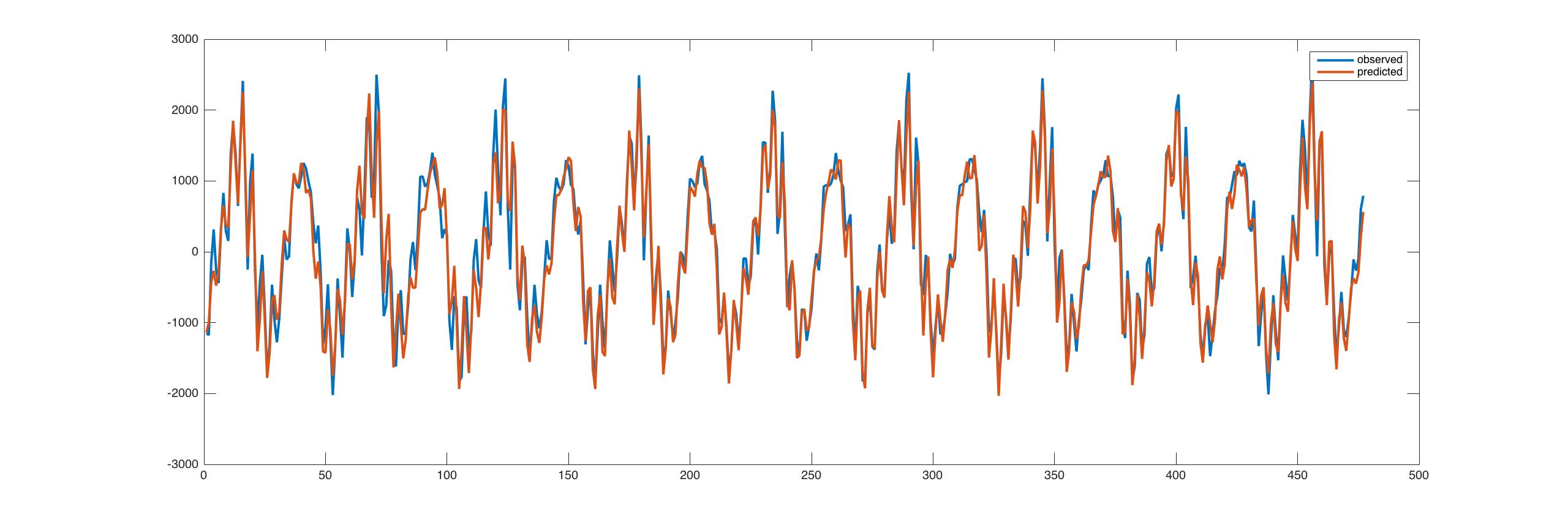}
\caption{Observed "AAA" signal and signal fitted using sequential LSEs. }
\label{fig:fit_LSE_aaa}
\end{center}
\end{figure}
\begin{figure}[H]
\begin{center}
\includegraphics[scale=0.15]{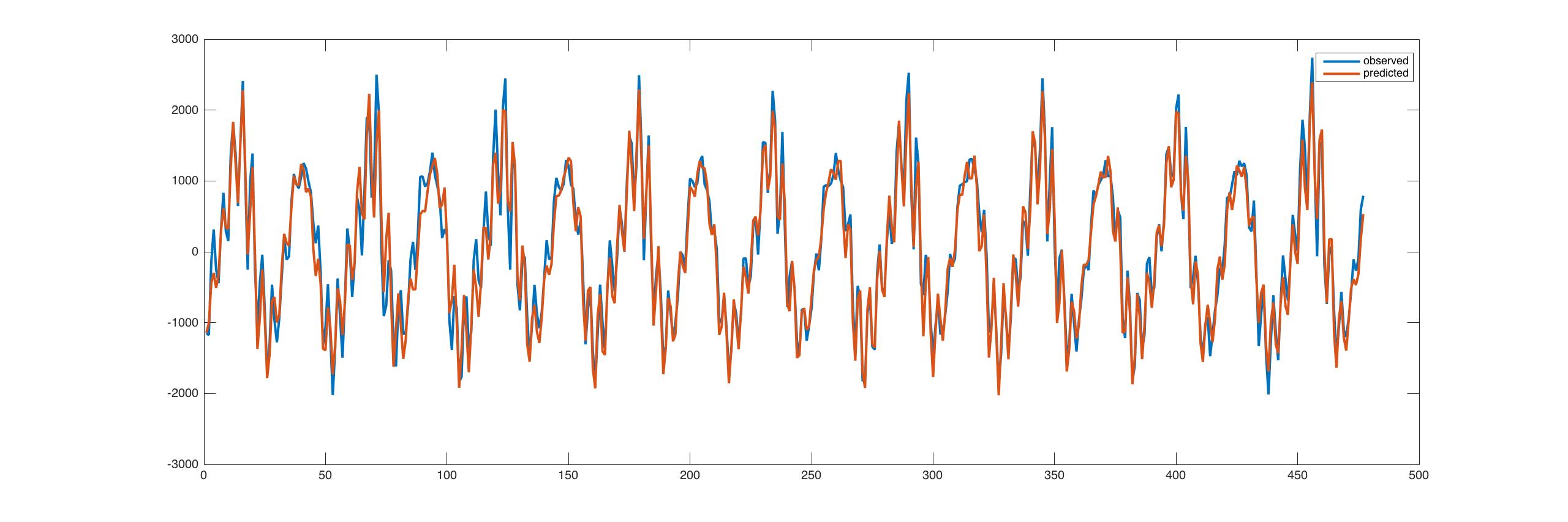}
\caption{Observed "AAA" signal and signal fitted using sequential ALSEs. }
\label{fig:fit_ALSE_aaa}
\end{center}
\end{figure}
\justify
We analyze the residuals by performing an augmented Dickey-Fuller (ADF) test and Kwiatkowski-Phillips-Schmidt-Shin (KPSS) test to check for their stationarity. This is done using in-built R-functions "adf.test" and "kpss.test" in "tseries" package in R. ADF test, tests the null hypothesis of unit-root being present in the time series against the alternative of no unit root, that is, stationarity and KPSS test is used for testing a null hypothesis that an observable time series is stationary around a deterministic trend against the alternative of a unit root. For the "AHH" data set, in the ADF test, we reject the null hypothesis and in KPSS test we do not reject the null hypothesis, and thereby from results of both the tests, we conclude that the residuals are stationary.  For the "AAA" data set, in the ADF test, we reject the null hypothesis and in KPSS test we do not reject the null hypothesis, and thereby from results of both the tests, we conclude that the residuals are stationary. 
Figure~\ref{fig:residual_LSE} -~\ref{fig:residual_ALSE_aaa} provide the residual plots for the two data sets under the two sequential procedures.
\begin{figure}[H]
\begin{center}
\includegraphics[scale=0.25]{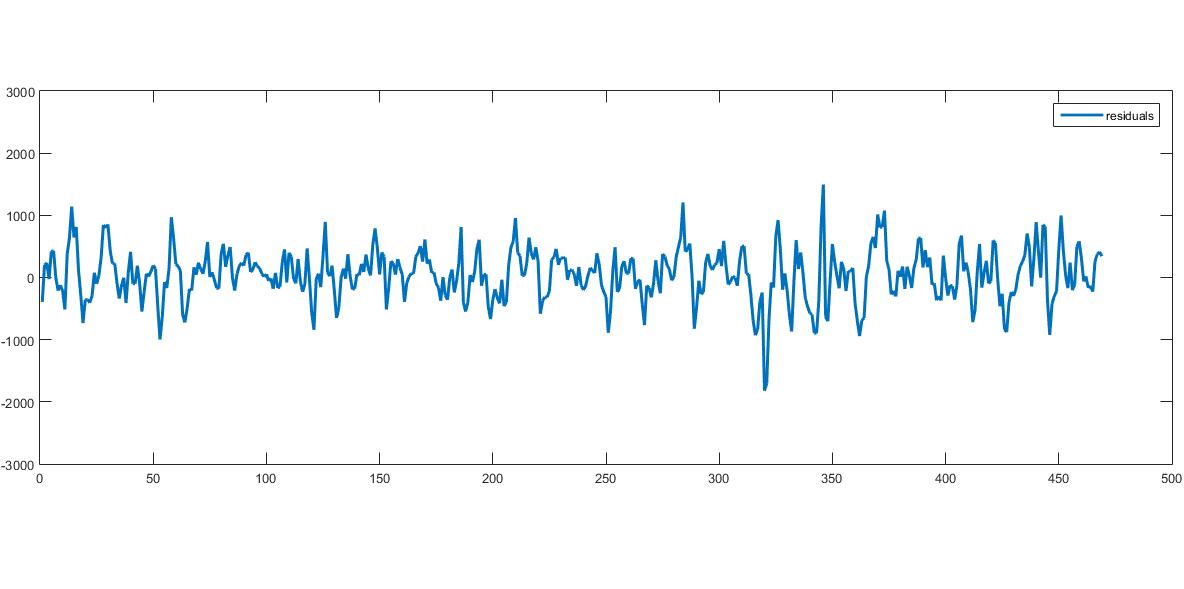}
\caption{Residual plot: of "AHH" data when the estimation is using LSEs.}
\label{fig:residual_LSE}
\end{center}
\end{figure}
\begin{figure}[H]
\begin{center}
\includegraphics[scale=0.25]{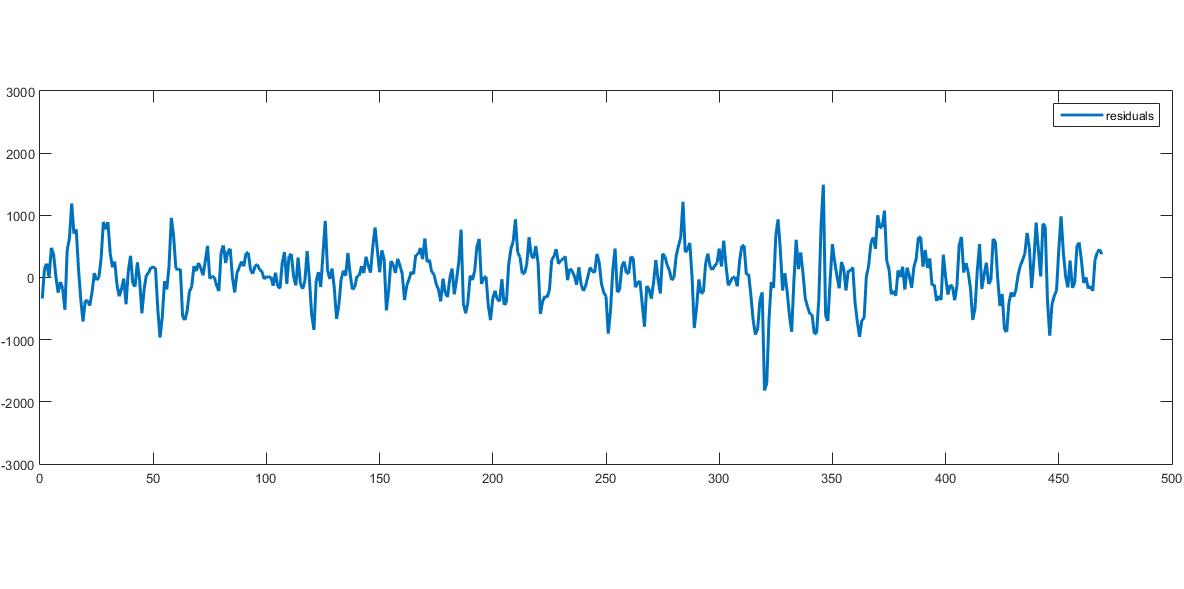}
\caption{Residual plot: of "AHH" data when the estimation is using ALSEs. }
\label{fig:residual_ALSE}
\end{center}
\end{figure}

\begin{figure}[H]
\begin{center}
\includegraphics[scale=0.25]{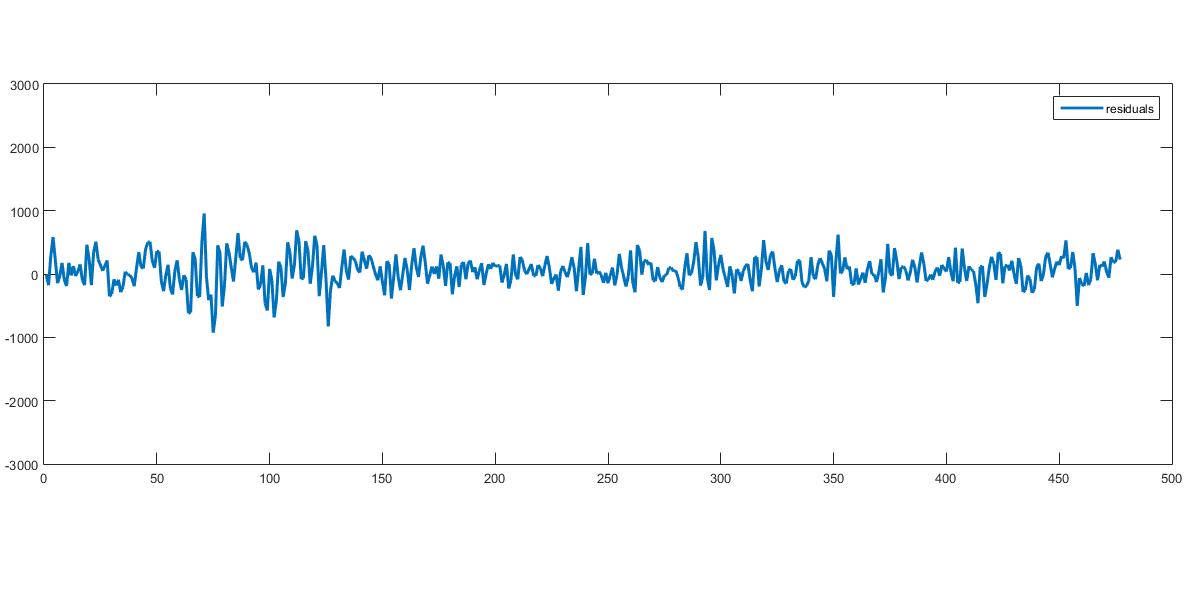}
\caption{Residual plot: of "AAA" data when the estimation is using LSEs. }
\label{fig:residual_LSE_aaa}
\end{center}
\end{figure}
\begin{figure}[H]
\begin{center}
\includegraphics[scale=0.25]{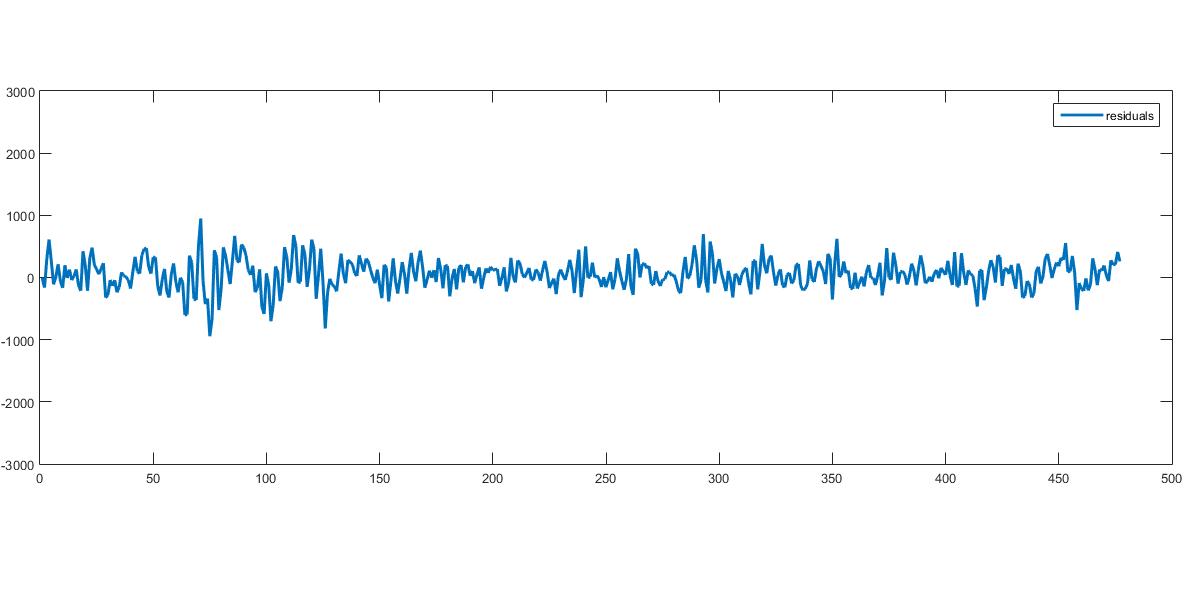}
\caption{Residual plot: of "AAA" data when the estimation is using ALSEs. }
\label{fig:residual_ALSE_aaa}
\end{center}
\end{figure}

\section{Conclusion}\label{section:6}
In this paper, we proposed periodogram-type estimators, called the approximate least squares estimators (ALSEs), for the parameters of a one-dimensional one component chirp model and studied their asymptotic properties. We showed that they are consistent and asymptotically equivalent to the LSEs. Also we obtained the consistency of the ALSEs under weaker conditions than those required for the LSEs. For the multiple component chirp model, we proposed a sequential procedure based on calculating ALSEs and at each step of the sequential procedure establish that these are strongly consistent and asymptotically equivalent to the corresponding sequential  LSEs, having the same rates of convergence. Simulation studies presented in the paper also confirm this large sample equivalence. Hence one may use the periodogram-like estimators as the initial values to find the LSEs. We also perform analysis of two speech signal data sets for illustrative purposes and the performances are quite satisfactory.
\section*{Acknowledgement}
The authors would like to thank two unknown reviewers and the associate editor for their constructive comments which have helped to improve the manuscript significantly.
\begin{appendices}
\section*{Appendix A}\label{appendix:A}
The following lemmas are required to prove Theorem~\ref{theorem:1}.
\begin{lemma}\label{lemma:1}
If \{X(t)\} satisfies Assumption 1, then: \\
\begin{enumerate}[label=(\alph*)]
\item $\sup\limits_{\alpha, \beta}\Bigg|\frac{1}{n^{k+1}}\sum\limits_{t=1}^{n} t^k X(t)\cos(\alpha t + \beta t^2)\Bigg| \rightarrow 0 \ a.s.\ as \ n \rightarrow \infty $,\\ 
\item $\sup\limits_{\alpha, \beta}\Bigg|\frac{1}{n^{k+1}}\sum\limits_{t=1}^{n} t^k X(t)\sin(\alpha t + \beta t^2)\Bigg| \rightarrow 0 \ a.s.\ as \ n \rightarrow \infty $.
\end{enumerate} 
\end{lemma}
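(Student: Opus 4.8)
The plan is to prove part (a) only, since part (b) is verbatim with $\sin$ replacing $\cos$. Throughout write $S_n^{(k)}(\alpha,\beta)=\frac{1}{n^{k+1}}\sum_{t=1}^{n}t^{k}X(t)\cos(\alpha t+\beta t^{2})$.

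\emph{Step 1: reduction to the i.i.d. case.} Substituting the linear process representation~(\ref{eq:6}) and interchanging the absolutely convergent sum over $j$ with the finite sum over $t$ gives
\begin{equation*}
\sup_{\alpha,\beta}\bigl|S_n^{(k)}(\alpha,\beta)\bigr|\le\sum_{j=-\infty}^{\infty}|a(j)|\,\sup_{\alpha,\beta}\Bigl|\tfrac{1}{n^{k+1}}\sum_{t=1}^{n}t^{k}e(t-j)\cos(\alpha t+\beta t^{2})\Bigr|.
\end{equation*}
Each inner supremum is a shifted copy of the quantity obtained by taking $X=e$, so it suffices to establish the lemma for the i.i.d. sequence $\{e(t)\}$ and then recover the linear-process statement by a dominated-convergence argument for the series: using $t^{k}\le n^{k}$ the inner supremum is bounded by $\frac1n\sum_{t=1}^{n}|e(t-j)|$, whose Ces\`aro means are a.s. bounded, so the summability~(\ref{eq:7}) lets me split off finitely many $j$ (each tending to $0$ a.s.) and make the tail uniformly small. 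I therefore assume $X(t)=e(t)$ in the remaining steps.

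\emph{Step 2: pointwise a.s. convergence.} For fixed $(\alpha,\beta)$, $S_n^{(k)}$ is a normalized weighted sum of the independent mean-zero variables $e(t)$. Applying Kolmogorov's one-series theorem to $\sum_{t\ge1}\frac{e(t)\cos(\alpha t+\beta t^{2})}{t}$ (whose variances are summable because $\sum_{t}t^{-2}<\infty$) shows this series converges a.s.; Kronecker's lemma with the weights $t^{k+1}$ then yields $S_n^{(k)}(\alpha,\beta)\to0$ a.s. Notably this step needs only the finite variance in Assumption 1.

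\emph{Step 3: upgrading to uniform convergence --- the crux.} The family $\{S_n^{(k)}\}$ is not equicontinuous: differentiating in $\alpha$ and $\beta$ produces the extra factors $t$ and $t^{2}$, so the a.s. Lipschitz constants are of order $n$ and $n^{2}$ respectively. Hence to make the oscillation of $S_n^{(k)}$ within a cell negligible I must cover $[0,\pi]^{2}$ by a grid of mesh $o(n^{-1})\times o(n^{-2})$, i.e. of cardinality of order $n^{3}$, and control the maximum over this grid by Borel--Cantelli. This is the genuine obstacle: a crude union bound with the moment estimate $E|S_n^{(k)}|^{4}=O(n^{-2})$ gives $n^{3}\cdot n^{-2}$, which is not summable, so the finite fourth moment alone is insufficient through a naive argument. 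The plan is to recover the needed tail decay by truncating the innovations at a level $c_{t}=t^{\gamma}$ with $\gamma>\tfrac14$: the contribution of $\{|e(t)|>c_{t}\}$ is handled by Borel--Cantelli since $\sum_{t}P(|e(t)|>c_{t})<\infty$ under the fourth-moment assumption, while the truncated (bounded) part admits a Bernstein/Hoeffding-type exponential bound whose tail beats the polynomial grid size. Combining the exponential maximal bound over the grid with the equicontinuity estimate that controls the within-cell oscillation, Borel--Cantelli then yields $\sup_{\alpha,\beta}|S_n^{(k)}(\alpha,\beta)|\to0$ a.s., which is the assertion. I expect Step 3, and specifically the truncation-plus-exponential-inequality device made necessary because the quadratic phase $\beta t^{2}$ inflates the effective number of frequencies, to be the main difficulty; Steps 1 and 2 are routine.
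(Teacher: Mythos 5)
The paper never proves this lemma: its ``proof'' is the single line ``Refer to Kundu and Nandi \cite{2008}'', so the only meaningful comparison is with the technique of that reference, which your outline essentially reproduces --- reduction to i.i.d.\ innovations via $\sum_j |a(j)|<\infty$, truncation of $e(t)$ at a polynomial level calibrated to the fourth moment (which is indeed available here: Assumption 1 states only finite variance, but the model description in Section~\ref{section:1} grants a finite fourth moment), an exponential bound for the truncated part, a union bound over a grid of mesh $O(n^{-1})\times O(n^{-2})$, and Lipschitz control of the oscillation within cells. Your Step 3 correctly identifies the genuine obstruction (with four moments, $E|S_n^{(k)}|^4=O(n^{-2})$ against a grid of cardinality $O(n^3)$ is never summable, along any subsequence), and the calibration $\gamma>\tfrac14$, giving $\sum_t P(|e(t)|>t^{\gamma})\le E|e(1)|^4\sum_t t^{-4\gamma}<\infty$, is exactly right. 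Step 2 is correct but redundant once Step 3 is in place; also, the ``shifted copy'' language in Step 1 can be avoided entirely, since for fixed $j$ the variables $e(t-j)$, $t=1,\dots,n$, are themselves i.i.d.\ and the Step 3 argument applies verbatim.

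Two places need tightening before this is a complete proof. First, in Step 3 the truncated variables $e(t)\mathbf{1}_{\{|e(t)|\le t^{\gamma}\}}$ are not mean zero, and Hoeffding/Bernstein bounds only the deviation from the mean; you must re-center and verify that the drift is uniformly negligible, which follows from $\bigl|E\bigl[e(t)\mathbf{1}_{\{|e(t)|\le t^{\gamma}\}}\bigr]\bigr|=\bigl|E\bigl[e(t)\mathbf{1}_{\{|e(t)|> t^{\gamma}\}}\bigr]\bigr|\le E|e(1)|^4\,t^{-3\gamma}$, whose Ces\`aro means vanish. Second, the tail step of your Step 1 is not justified as written: a.s.\ boundedness of $\frac1n\sum_{t\le n}|e(t-j)|$ \emph{for each fixed} $j$ does not control $\sum_{|j|>J}|a(j)|\sup_n \frac1n\sum_{t\le n}|e(t-j)|$, because $\sup_j$ of these Ces\`aro maxima is a.s.\ infinite (large $|j|$ produces fresh i.i.d.\ windows, so the supremum over $j$ realizes the essential supremum of $|e|$). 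You need joint control in $j$: either apply the ergodic theorem (SLLN for linear processes) to the stationary sequence $V_t^{(J)}=\sum_{|j|>J}|a(j)|\,|e(t-j)|$, which gives $\lim_n \frac1n\sum_{t\le n}V_t^{(J)}=E|e(1)|\sum_{|j|>J}|a(j)|$ a.s., or invoke a maximal inequality guaranteeing $E\bigl[\sup_n \frac1n\sum_{t\le n}|e(t)|\bigr]<\infty$, so that $\sum_j|a(j)|\sup_n\frac1n\sum_{t\le n}|e(t-j)|<\infty$ a.s.\ and its tails shrink. Both repairs are routine, so I regard your proposal as correct in substance, but these are the two points a referee would flag.
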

\begin{proof}
Refer to Kundu and Nandi \cite{2008}. \\
\end{proof}
\begin{lemma}\label{lemma:2}
If $(\theta_1 , \theta_2)$ $\in$ $(0,\pi) \times (0,\pi)$, then except for a countable number of points, the following results are true:
\begin{enumerate}[label=(\alph*)]
\item $\lim\limits_{n \rightarrow \infty} \frac{1}{n} \sum\limits_{t=1}^{n} \cos(\theta_1 t + \theta_2 t^2) = \lim\limits_{n \rightarrow \infty} \frac{1}{n} \sum\limits_{t=1}^{n} \sin(\theta_1 t + \theta_2 t^2) = 0 $,\\
\item $\lim\limits_{n \rightarrow \infty} \frac{1}{n^{k+1}} \sum\limits_{t=1}^{n} t^k \cos^2(\theta_1 t + \theta_2 t^2) = \frac{1}{2(k + 1)}$, \\
\item $\lim\limits_{n \rightarrow \infty} \frac{1}{n^{k+1}} \sum\limits_{t=1}^{n} t^k \sin^2(\theta_1 t + \theta_2 t^2) = \frac{1}{2(k+1)}$, \\
\item $\lim\limits_{n \rightarrow \infty} \frac{1}{n^{k+1}} \sum\limits_{t=1}^{n} t^k \cos(\theta_1 t + \theta_2 t^2)\sin(\theta_1 t + \theta_2 t^2) = 0 $, \\
\item $\lim\limits_{n \rightarrow \infty}\frac{1}{n}\sum\limits_{t=1}^{n} \cos(\theta_1 t^2) = \lim\limits_{n \rightarrow \infty}\frac{1}{n}\sum\limits_{t=1}^{n} \sin(\theta_1 t^2) = 0$.
\end{enumerate}
for all k = 0, 1, 2, $\cdots$
\end{lemma}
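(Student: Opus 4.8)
The plan is to reduce all five identities to a single equidistribution estimate for the quadratic exponential sum and then assemble the pieces with elementary identities and summation by parts. Set $T_n(\psi_1,\psi_2) = \sum_{t=1}^n e^{i(\psi_1 t + \psi_2 t^2)}$. The one substantive ingredient is Weyl's theorem on the uniform distribution of polynomial sequences: since $e^{i(\psi_1 t + \psi_2 t^2)} = e^{2\pi i P(t)}$ with $P(t) = \frac{\psi_1}{2\pi} t + \frac{\psi_2}{2\pi} t^2$, the sequence $\{P(t)\}$ is equidistributed modulo $1$ whenever the leading coefficient $\psi_2/(2\pi)$ is irrational, and consequently $T_n(\psi_1,\psi_2) = o(n)$. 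The set of $\psi_2 \in (0,\pi)$ for which $\psi_2/(2\pi)$ is rational is countable, which is precisely the source of the phrase ``except for a countable number of points'' in the statement.

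Parts (a) and (e) are then immediate. Dividing $T_n(\theta_1,\theta_2)$ by $n$ and taking real and imaginary parts gives (a), while (e) is the special case $\psi_1 = 0$, $\psi_2 = \theta_1$, so that $\frac{1}{n} T_n(0,\theta_1) \to 0$ off the countable set where $\theta_1/(2\pi)$ is rational.

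For (b), (c) and (d) I would first apply the double-angle identities $\cos^2 x = \tfrac12(1 + \cos 2x)$, $\sin^2 x = \tfrac12(1 - \cos 2x)$ and $\cos x \sin x = \tfrac12 \sin 2x$ with $x = \theta_1 t + \theta_2 t^2$. This splits each target sum into a non-oscillatory part $\frac{1}{2 n^{k+1}}\sum_{t=1}^n t^k$, which tends to $\frac{1}{2(k+1)}$ by the standard power-sum asymptotic $\sum_{t=1}^n t^k \sim n^{k+1}/(k+1)$, plus a weighted oscillatory part of the form $\frac{1}{n^{k+1}}\sum_{t=1}^n t^k e^{i(2\theta_1 t + 2\theta_2 t^2)}$ (taking real or imaginary parts). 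To show the oscillatory part vanishes I would use Abel summation against the partial sums $S_t = T_t(2\theta_1,2\theta_2)$: writing $\sum_{t=1}^n t^k (S_t - S_{t-1}) = n^k S_n - \sum_{t=1}^{n-1} S_t\big((t+1)^k - t^k\big)$ and using $S_t = o(t)$ together with $(t+1)^k - t^k = O(t^{k-1})$ yields $o(n^{k+1})$. Here the relevant irrationality condition falls on $2\theta_2/(2\pi) = \theta_2/\pi$, again excluding only a countable set; for (d) the non-oscillatory term is absent, so the whole sum tends to $0$ directly.

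The main obstacle is the equidistribution estimate $T_n(\psi_1,\psi_2) = o(n)$ for irrational quadratic coefficient: this is the genuine number-theoretic input, proved by Weyl's differencing (van der Corput) trick rather than by elementary manipulation, and it is what forces the countable exceptional set. Everything downstream — the double-angle reductions, the power-sum asymptotics, and the Abel-summation transfer of the bound from the unweighted to the $t^k$-weighted sums — is routine bookkeeping. Finally, since the exceptional conditions arising in the separate parts ($\theta_2/(2\pi)$ rational, $\theta_2/\pi$ rational, $\theta_1/(2\pi)$ rational) are each countable, their union is countable, so all five limits hold simultaneously off a single countable set, as claimed.
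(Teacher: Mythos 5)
Your proposal is correct, but it is a genuinely different route from what the paper does: the paper offers no argument at all for this lemma, disposing of it with ``Refer to Lahiri et al., \cite{2015}'', and that cited work in turn rests on Vinogradov's \cite{1954} results on trigonometrical sums. You instead supply a self-contained proof whose one non-elementary ingredient is Weyl's equidistribution theorem for polynomial sequences, giving $\sum_{t=1}^n e^{i(\psi_1 t + \psi_2 t^2)} = o(n)$ whenever $\psi_2/(2\pi)$ is irrational; the rest --- double-angle identities to split off the non-oscillatory part $\frac{1}{2n^{k+1}}\sum_{t=1}^n t^k \to \frac{1}{2(k+1)}$, and Abel summation to transfer the $o(n)$ bound on the unweighted exponential sum to the $t^k$-weighted sums --- is routine and correctly executed (your identity $\sum_{t=1}^n t^k(S_t - S_{t-1}) = n^k S_n - \sum_{t=1}^{n-1} S_t\bigl((t+1)^k - t^k\bigr)$ with $S_t = o(t)$ does yield $o(n^{k+1})$). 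Mathematically the two number-theoretic inputs are close kin: Weyl's theorem for quadratic phases is proved by the same differencing device that underlies Vinogradov's estimates, so the engine is essentially the same; what your version buys is transparency --- it is checkable without consulting the earlier paper, it makes explicit exactly which rationality conditions ($\theta_2/(2\pi)$, $\theta_2/\pi$, $\theta_1/(2\pi)$ rational) generate the exceptional set, and it shows how the polynomial weights are absorbed --- while the paper's citation buys brevity and consistency with the source from which the lemma is quoted verbatim. One shared imprecision worth noting (it is in the lemma's own phrasing, inherited from \cite{2015}, and not a defect of your argument relative to the paper): the exceptional set your proof produces is a rationality condition on a single coordinate, hence a countable union of lines in the $(\theta_1,\theta_2)$-plane rather than literally ``a countable number of points''; the intended reading, which your proof delivers, is that for each fixed value of the other coordinate only countably many values are excluded.
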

\begin{proof}
Refer to Lahiri et al., \cite{2015}.\\
\end{proof}
\begin{lemma}\label{lemma:3}
Suppose $\tilde{\alpha}$ and $\tilde{\beta}$ are the ALSEs of $\alpha^0$ and $\beta^0$, respectively. Let us denote $\boldsymbol{\xi^0} = (\alpha^0, \beta^0)$ and $$S_c = \{ \boldsymbol{\xi}; \boldsymbol{\xi} = (\alpha , \beta),\ |\boldsymbol{\xi} - \boldsymbol{\xi^0}| > c \}, $$ 
If there exists a $c$ $>$ 0, such that \\
\begin{equation}\label{eq:13}
\limsup\sup_{S_c}\frac{1}{n}\left[I(\boldsymbol{\xi}) - I(\boldsymbol{\xi^0})\right] < \ 0 \ a.s., \\
\end{equation}
then $(\tilde{\alpha},\tilde{\beta}) \ converges \  to \ (\alpha^0,\beta^0)$ almost surely. Here $I(\alpha, \beta)$ is as defined in~(\ref{eq:4}).
\end{lemma}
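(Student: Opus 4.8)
The plan is to argue by contradiction, using only two ingredients: the fact that $(\tilde{\alpha},\tilde{\beta})$ maximizes $I$ by its very definition, and the separation hypothesis~(\ref{eq:13}), which says that once $\boldsymbol{\xi}$ is bounded away from $\boldsymbol{\xi^0}$ the (normalized) periodogram is, in the limit, strictly below its value at the truth. The whole argument is essentially a deterministic ``maximizer versus strict gap'' contradiction carried out pathwise, since~(\ref{eq:13}) is an almost-sure statement.

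First I would record the defining property of the ALSE. Writing $\tilde{\boldsymbol{\xi}} = (\tilde{\alpha},\tilde{\beta})$, the fact that $\tilde{\boldsymbol{\xi}}$ maximizes $I(\alpha,\beta)$ gives $I(\tilde{\boldsymbol{\xi}}) \geq I(\boldsymbol{\xi^0})$ for every $n$, hence
\[
\frac{1}{n}\left[I(\tilde{\boldsymbol{\xi}}) - I(\boldsymbol{\xi^0})\right] \geq 0 \quad \text{for all } n.
\]
Now suppose the conclusion fails, i.e.\ that on a set of positive probability $\tilde{\boldsymbol{\xi}}$ does not converge to $\boldsymbol{\xi^0}$. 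Then, for the fixed $c$ of the hypothesis, there is on that set a subsequence along which $|\tilde{\boldsymbol{\xi}} - \boldsymbol{\xi^0}| > c$, that is, $\tilde{\boldsymbol{\xi}} \in S_c$ infinitely often. Along this subsequence, membership in $S_c$ lets me dominate the maximizer value by the supremum over $S_c$:
\[
0 \leq \frac{1}{n}\left[I(\tilde{\boldsymbol{\xi}}) - I(\boldsymbol{\xi^0})\right] \leq \sup_{S_c}\frac{1}{n}\left[I(\boldsymbol{\xi}) - I(\boldsymbol{\xi^0})\right].
\]
Passing to $\limsup$ along the subsequence, the left-hand side remains $\geq 0$, while the right-hand side is strictly negative almost surely by~(\ref{eq:13}) --- a contradiction. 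Therefore $\tilde{\boldsymbol{\xi}} \in S_c$ can occur only finitely often, i.e.\ $|\tilde{\boldsymbol{\xi}} - \boldsymbol{\xi^0}| \leq c$ for all large $n$, almost surely. Since the same reasoning applies whenever~(\ref{eq:13}) holds with a given $c$, and in the application to Theorem~\ref{theorem:1} the condition is verified for arbitrary $c > 0$, letting $c \downarrow 0$ yields $(\tilde{\alpha},\tilde{\beta}) \to (\alpha^0,\beta^0)$ almost surely.

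This lemma is a standard Wu-type consistency device, so I do not expect a serious obstacle in its proof; the only points needing care are the measure-theoretic bookkeeping in extracting, on a positive-probability event, a subsequence on which $\tilde{\boldsymbol{\xi}} \in S_c$, and the fact that the strict inequality in~(\ref{eq:13}) is almost sure so the contradiction must be drawn on each such sample path. The genuinely difficult work --- namely \emph{establishing}~(\ref{eq:13}) for the chirp periodogram, which requires controlling $\frac{1}{n}[I(\boldsymbol{\xi}) - I(\boldsymbol{\xi^0})]$ uniformly over $S_c$ via the sum limits of Lemma~\ref{lemma:2} and the noise bounds of Lemma~\ref{lemma:1} --- is not part of this lemma and is deferred to the proof of Theorem~\ref{theorem:1}; here~(\ref{eq:13}) is taken as the hypothesis.
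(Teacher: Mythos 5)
Your proof runs on the same engine as the paper's: the ALSE maximizes $I$, so $\frac{1}{n}\left[I(\tilde{\boldsymbol{\xi}}) - I(\boldsymbol{\xi^0})\right] \geq 0$ for every $n$, while on $S_c$ the hypothesis~(\ref{eq:13}) makes this quantity asymptotically negative; a subsequence of estimators lying in $S_c$ on a positive-probability event then gives the contradiction. Where you differ is in the bookkeeping of the constant $c$, and your version is the more careful one. The paper assumes non-convergence, extracts \emph{some} $c>0$ with $P(\limsup_n |\boldsymbol{\tilde{\xi}_n} - \boldsymbol{\xi^0}| > c) > 0$ (misprinted there as $\geq 0$), and then contradicts~(\ref{eq:13}) as though this $c$ were the one in the hypothesis; the two constants need not coincide, and if the hypothesis' $c$ is the larger one, non-convergence produces no subsequence in $S_c$ at all. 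You avoid that conflation: running the contradiction against the event that $\tilde{\boldsymbol{\xi}}$ lies in $S_c$ infinitely often, for the hypothesis' own $c$, you correctly conclude only that $|\tilde{\boldsymbol{\xi}} - \boldsymbol{\xi^0}| \leq c$ for all large $n$ almost surely, and you then recover full convergence by letting $c \downarrow 0$ (along a countable sequence, intersecting the almost-sure events), which is legitimate because the proof of Theorem~\ref{theorem:1} verifies~(\ref{eq:13}) for every $c>0$. In effect you prove the ``for all $c$'' form of the lemma, which is what is actually used, rather than the ``there exists $c$'' form as literally stated, under which only eventual $c$-closeness follows. One small slip: your opening sentence, deducing from failure of convergence a subsequence outside the ball of the \emph{fixed} radius $c$, is not valid as stated (non-convergence only guarantees exits from some possibly smaller ball); but since the conclusion you actually draw from the contradiction is that $S_c$ is visited only finitely often, the argument stands once that sentence is read as introducing the assumption to be refuted rather than as a consequence of non-convergence.
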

\begin{proof}
Let us denote $\boldsymbol{\tilde{\xi}}$ by $\boldsymbol{\tilde{\xi}_n}$ = $(\tilde{\alpha}_n,\tilde{\beta}_n)$ and $I(\alpha, \beta)$ by $I_n(\alpha,\beta)$ to emphasize that they depend on $n$. Suppose~(\ref{eq:13}) is true and $\boldsymbol{\tilde{\xi}_n}$ does not converge to  $\boldsymbol{\xi^0}$ as $n$  $\rightarrow \infty.$ Then there exists a $c$ $>$ 0 such that
$$P(\limsup_{n \rightarrow \infty} |\boldsymbol{\tilde{\xi}_n} - \boldsymbol{\xi^0}| > c) \geqslant 0.$$
Hence, $\exists$ a $c$ $>$ 0 and a subsequence $\{\boldsymbol{\tilde{\xi}_{n_k}}\}$ of $\{\boldsymbol{\tilde{\xi}_n}\}$ such that
$ |\boldsymbol{\tilde{\xi}_{n_k}} - \boldsymbol{\xi^0}| $ $>$ $c$ for all $k$ = 1, 2, $\cdots$, that is $\boldsymbol{\tilde{\xi}_{n_k}}$ $\in$ $S_c$ for all $k$ = 1, 2, $\cdots$. Since $\boldsymbol{\tilde{\xi}_{n_k}}$ is the ALSE of $\boldsymbol{\xi^0}$ when $n = n_k$, it maximises $I_{n_k}(\boldsymbol{\xi})$,
\setlength{\belowdisplayskip}{0pt} \setlength{\belowdisplayshortskip}{0pt}
\begin{equation*}
\Rightarrow I_{n_k}(\boldsymbol{\tilde{\xi}_{n_k}}) \geqslant I_{n_k}(\boldsymbol{\xi^0}) \quad
\Rightarrow \frac{1}{n_k}\left[I_{n_k}(\boldsymbol{\tilde{\xi}_{n_k}}) - I_{n_k}(\boldsymbol{\xi^0})\right] \geqslant 0.
\end{equation*}
\begin{flalign*}
&\textmd{Hence, } \limsup\sup_{S_c}\frac{1}{n_k}\left[I_{n_k}(\alpha,\beta) - I_{n_k}(\alpha^0,\beta^0)\right] \geqslant 0.&
\end{flalign*}
Thus, we have $P(\limsup\sup_{S_c}\frac{1}{n_k}\left[I_{n_k}(\alpha,\beta) - I_{n_k}(\alpha^0,\beta^0)\right] \geqslant 0) > 0 $ which contradicts~(\ref{eq:13}). Hence, the result follows.
\end{proof}
\begin{lemma}\label{lemma:4}
Suppose $\tilde{\alpha}$ and $\tilde{\beta}$ are the ALSEs of $\alpha^0$ and $\beta^0$, respectively. Let us define $\boldsymbol{\tilde{\xi}}$ = $(\tilde{\alpha} , \tilde{\beta})$ and $\mathbf{D_1}$ = diag($\frac{1}{n\sqrt{n}},\frac{1}{n^2\sqrt{n}}$), then $$(\boldsymbol{\tilde{\xi}} - \boldsymbol{\xi^0})(\sqrt{n}\mathbf{D_1})^{-1} \xrightarrow{a.s.} 0.$$
\end{lemma}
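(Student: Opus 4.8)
The plan is to derive the result from the stationarity (first-order) conditions satisfied by the ALSEs, via a mean value expansion, and then track the orders of the scaled gradient and Hessian of the periodogram-like function $I(\alpha,\beta)$ of~(\ref{eq:4}). By Theorem~\ref{theorem:1}, $(\tilde{\alpha},\tilde{\beta})$ converges almost surely to the interior point $(\alpha^0,\beta^0)$; hence, for all $n$ large enough, $\boldsymbol{\tilde{\xi}}$ is an interior maximiser and satisfies $I_\alpha(\boldsymbol{\tilde{\xi}}) = I_\beta(\boldsymbol{\tilde{\xi}}) = 0$, where $I_\alpha = \partial I/\partial\alpha$, $I_\beta = \partial I/\partial\beta$. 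Expanding each of these by the mean value theorem about $\boldsymbol{\xi^0}$ gives $(\boldsymbol{\tilde{\xi}} - \boldsymbol{\xi^0})\,I''(\boldsymbol{\bar{\xi}}) = -(I_\alpha(\boldsymbol{\xi^0}), I_\beta(\boldsymbol{\xi^0}))$, where $I''$ is the Hessian in $(\alpha,\beta)$ and $\boldsymbol{\bar{\xi}}$ lies on the segment joining $\boldsymbol{\tilde{\xi}}$ and $\boldsymbol{\xi^0}$, so $\boldsymbol{\bar{\xi}} \to \boldsymbol{\xi^0}$ a.s. Writing $\mathbf{D_2} = \sqrt{n}\,\mathbf{D_1} = diag(1/n, 1/n^2)$, the target $(\boldsymbol{\tilde{\xi}} - \boldsymbol{\xi^0})(\sqrt{n}\mathbf{D_1})^{-1}$ equals $-[(I_\alpha(\boldsymbol{\xi^0}), I_\beta(\boldsymbol{\xi^0}))\mathbf{D_2}]\,[\mathbf{D_2}I''(\boldsymbol{\bar{\xi}})\mathbf{D_2}]^{-1}$, using that $\mathbf{D_2}$ is diagonal.

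First I would show that the scaled gradient at the true value is $o(n)$ in each coordinate. Writing $C = \sum_t y(t)\cos(\alpha t + \beta t^2)$ and $S = \sum_t y(t)\sin(\alpha t + \beta t^2)$, so that $I = \tfrac{2}{n}(C^2 + S^2)$, the gradient components are proportional to $CC_\alpha + SS_\alpha$ and $CC_\beta + SS_\beta$. Substituting $y(t) = A^0\cos(\alpha^0 t + \beta^0 t^2) + B^0\sin(\alpha^0 t + \beta^0 t^2) + X(t)$ and using Lemma~\ref{lemma:2} for the trigonometric sums and Lemma~\ref{lemma:1} (with $k = 0,1,2$) for the noise cross terms, one gets $C(\boldsymbol{\xi^0})/n \to A^0/2$, $S(\boldsymbol{\xi^0})/n \to B^0/2$, together with the leading behaviours $C_\alpha(\boldsymbol{\xi^0})/n^2 \to -B^0/4$, $S_\alpha(\boldsymbol{\xi^0})/n^2 \to A^0/4$ and the analogous $n^3$-order limits for $C_\beta, S_\beta$. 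The crucial point is that the leading $n^3$ term of $CC_\alpha + SS_\alpha$ and the leading $n^4$ term of $CC_\beta + SS_\beta$ cancel, since the products of the limiting constants are equal and opposite; what survives is $o(n^3)$ and $o(n^4)$ almost surely. Hence $I_\alpha(\boldsymbol{\xi^0}) = o(n^2)$ and $I_\beta(\boldsymbol{\xi^0}) = o(n^3)$ a.s., so $(I_\alpha(\boldsymbol{\xi^0}), I_\beta(\boldsymbol{\xi^0}))\mathbf{D_2} = (o(n), o(n))$ a.s.

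Next I would show that $[\mathbf{D_2}I''(\boldsymbol{\bar{\xi}})\mathbf{D_2}]^{-1} = O(1/n)$. Since $\mathbf{D_2} = \sqrt{n}\,\mathbf{D_1}$, one has $\mathbf{D_2}I''(\boldsymbol{\bar{\xi}})\mathbf{D_2} = n\,[\mathbf{D_1}I''(\boldsymbol{\bar{\xi}})\mathbf{D_1}]$. A direct computation of the second derivatives of $I$, again using the uniform forms of Lemmas~\ref{lemma:1} and~\ref{lemma:2}, shows that $\mathbf{D_1}I''(\boldsymbol{\xi})\mathbf{D_1}$ converges almost surely and uniformly for $\boldsymbol{\xi}$ in a neighbourhood of $\boldsymbol{\xi^0}$ to a continuous matrix-valued limit $\boldsymbol{\Gamma}$ (negative definite, the $(\alpha,\alpha)$ entry tending to $-({A^0}^2+{B^0}^2)/12$), whose nonsingularity follows from ${A^0}^2+{B^0}^2 > 0$. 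Because $\boldsymbol{\bar{\xi}} \to \boldsymbol{\xi^0}$ a.s., continuity of the limit gives $\mathbf{D_1}I''(\boldsymbol{\bar{\xi}})\mathbf{D_1} \to \boldsymbol{\Gamma}$ a.s., so $[\mathbf{D_2}I''(\boldsymbol{\bar{\xi}})\mathbf{D_2}]^{-1} = \tfrac{1}{n}[\mathbf{D_1}I''(\boldsymbol{\bar{\xi}})\mathbf{D_1}]^{-1} = O(1/n)$ a.s. Combining the two estimates, $(\boldsymbol{\tilde{\xi}} - \boldsymbol{\xi^0})(\sqrt{n}\mathbf{D_1})^{-1}$ is a product of a factor that is $o(n)$ and a factor that is $O(1/n)$, hence $o(1)$ a.s., which is the assertion.

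I expect the main obstacle to be concentrated in the gradient bookkeeping, namely the exact cancellation of the leading terms in $CC_\alpha + SS_\alpha$ and $CC_\beta + SS_\beta$. Without it one would obtain only $(I_\alpha(\boldsymbol{\xi^0}), I_\beta(\boldsymbol{\xi^0}))\mathbf{D_2} = O(n)$ and the final product would be $O(1)$ rather than $o(1)$; thus the whole $\sqrt{n}$ sharpening encoded in $\mathbf{D_2} = \sqrt{n}\,\mathbf{D_1}$ rests on this cancellation together with the almost sure $o(n^{k+1})$ control of the noise sums provided by Lemma~\ref{lemma:1}. The second delicate point is justifying that the scaled Hessian at the \emph{random} intermediate point $\boldsymbol{\bar{\xi}}$ still converges to the nonsingular limit $\boldsymbol{\Gamma}$, which requires the convergence of $\mathbf{D_1}I''(\boldsymbol{\xi})\mathbf{D_1}$ to hold uniformly on a neighbourhood of $\boldsymbol{\xi^0}$ and not merely at $\boldsymbol{\xi^0}$ itself.
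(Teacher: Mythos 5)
Your proposal is correct and follows essentially the same route as the paper's own proof: the first-order conditions of the ALSE plus a mean value expansion, almost sure vanishing of the scaled gradient at $\boldsymbol{\xi^0}$ (your cancellation bookkeeping is precisely the computation the paper compresses into ``using Lemmas~\ref{lemma:1} and~\ref{lemma:2}, it can be shown that''), and convergence of the scaled Hessian at the intermediate point to a nonsingular negative definite limit. The only discrepancy is the constant in that limit---your $-({A^0}^2+{B^0}^2)/12$ versus the paper's $-({A^0}^2+{B^0}^2)/24$---which traces back to the paper's own inconsistency between the factor $2/n$ in~(\ref{eq:4}) and the factor $1/n$ it effectively uses in its Hessian computations (compare~(\ref{eq:10})), and has no bearing on the validity of either argument.
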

\begin{proof}
Let us denote $\textbf{I}'(\boldsymbol{\xi})$ as the 1 $\times$ 2 first derivative vector, that is, $\textbf{I}'(\boldsymbol{\xi}) = \large{\begin{pmatrix} \frac{\partial I(\alpha, \beta)}{\partial \alpha} & \frac{\partial I(\alpha, \beta)}{\partial \beta} \end{pmatrix}}$
and $\textbf{I}''(\boldsymbol{\xi})$ as the 2 $\times$ 2 second derivative matrix of $I(\boldsymbol{\xi})$, that is,
$$\textbf{I}''(\boldsymbol{\xi}) = \large{\begin{pmatrix} \frac{\partial^2 I(\alpha, \beta)}{\partial \alpha^2} & \frac{\partial^2 I(\alpha, \beta)}{\partial \alpha \partial \beta} \\ \frac{\partial^2 I(\alpha, \beta)}{\partial \alpha \partial \beta} & \frac{\partial^2 I(\alpha, \beta)}{\partial \beta^2}\end{pmatrix}}.$$
Using multivariate Taylor series expansion of $\textbf{I}'(\boldsymbol{\tilde{\xi}})$ around $\boldsymbol{\xi^0}$, we get:
\begin{equation}\label{eq:14}
\textbf{I}'(\boldsymbol{\tilde{\xi}}) - \textbf{I}'(\boldsymbol{\xi^0}) = (\boldsymbol{\tilde{\xi}} - \boldsymbol{\xi^0})\textbf{I}''(\boldsymbol{\bar{\xi}})
\end{equation}
where $\boldsymbol{\bar{\xi}}$ is such that $|\boldsymbol{\bar{\xi}} - \boldsymbol{\xi^0}| \leqslant |\boldsymbol{\tilde{\xi}} - \boldsymbol{\xi^0}|.$ Since $\textbf{I}'(\boldsymbol{\tilde{\xi}})$ = 0,~(\ref{eq:14}) can be re-written as the following:
\vspace{2pt}$$(\boldsymbol{\tilde{\xi}} - \boldsymbol{\xi^0})(\sqrt{n}\mathbf{D})^{-1} =  \left[-\frac{1}{\sqrt{n}} \textbf{I}'(\boldsymbol{\xi^0})\mathbf{D}\right] \left[\mathbf{D}\textbf{I}''(\boldsymbol{\bar{\xi}})\mathbf{D}\right]^{-1}.$$
Let us first consider,
\begin{flalign*}
&\frac{1}{\sqrt{n}} \textbf{I}'(\boldsymbol{\xi^0})\mathbf{D}  =
\begin{pmatrix}
\frac{\partial I(\alpha^0,\beta^0)}{\partial\alpha}   & \frac{\partial(I(\alpha^0,\beta^0))}{\partial\beta}
\end{pmatrix}
\frac{1}{\sqrt{n}}
\begin{pmatrix}
\frac{1}{n\sqrt{n}} & 0 \\
0                          & \frac{1}{n^2\sqrt{n}}
\end{pmatrix}  = \begin{pmatrix}
\frac{1}{n^2}\frac{\partial I(\alpha^0,\beta^0)}{\partial\alpha} & \frac{1}{n^3}\frac{\partial I(\alpha^0,\beta^0)}{\partial\beta}
\end{pmatrix}.&
\end{flalign*}
\justify 
Using Lemmas~\ref{lemma:1} and~\ref{lemma:2}, it can be shown that:
\begin{flalign*}
\frac{1}{n^2}\frac{\partial I(\alpha^0,\beta^0)}{\partial\alpha} \xrightarrow{a.s.} 0 \textmd{ and } \frac{1}{n^3}\frac{\partial I(\alpha^0,\beta^0)}{\partial\beta} \xrightarrow{a.s.} 0 .
\end{flalign*}
\justify
Thus we have,$\frac{1}{\sqrt{n}} \textbf{I}'(\boldsymbol{\xi^0})\mathbf{D}$ $\xrightarrow{a.s.}$ 0. Now, consider the 2 $\times$ 2 matrix $[\mathbf{D}\textbf{I}''(\boldsymbol{\bar{\xi}})\mathbf{D}]$. Since $\boldsymbol{\tilde{\xi}}$ $\xrightarrow{a.s.}$ $\boldsymbol{\xi^0}$ and  $\textbf{I}''(\boldsymbol{\xi})$ is a continuous function of $\boldsymbol{\xi}$, \\
$$\lim_{n\rightarrow \infty} [\mathbf{D}\textbf{I}''(\boldsymbol{\bar{\xi}})\mathbf{D}] = \lim_{n\rightarrow \infty} [\mathbf{D}\textbf{I}''(\boldsymbol{\xi^0})\mathbf{D}].$$
\justify
\begin{flalign*}
&\mathbf{D}\textbf{I}''(\boldsymbol{\xi^0})\mathbf{D}  =
\begin{pmatrix}
\frac{1}{n\sqrt{n}} & 0 \\
0                          & \frac{1}{n^2\sqrt{n}}
\end{pmatrix}
\begin{pmatrix}
\frac{\partial^2 I(\alpha^0,\beta^0)}{\partial\alpha^2}   & \frac{\partial^2(I(\alpha^0,\beta^0))}{\partial\alpha\partial\beta} \\
\frac{\partial^2(I(\alpha^0,\beta^0))}{\partial\beta\partial\alpha} & \frac{\partial^2 I(\alpha^0,\beta^0)}{\partial\beta^2}
\end{pmatrix}
\begin{pmatrix}
\frac{1}{n\sqrt{n}} & 0 \\
0                          & \frac{1}{n^2\sqrt{n}}
\end{pmatrix}
= \begin{pmatrix}
\frac{1}{n^3}\frac{\partial^2 I(\alpha^0,\beta^0)}{\partial\alpha^2}   & \frac{1}{n^4}\frac{\partial^2(I(\alpha^0,\beta^0))}{\partial\alpha\partial\beta} \\
\frac{1}{n^4}\frac{\partial^2(I(\alpha^0,\beta^0))}{\partial\beta\partial\alpha} & \frac{1}{n^5}\frac{\partial^2 I(\alpha^0,\beta^0)}{\partial\beta^2} 
\end{pmatrix}.&
\end{flalign*}
Again by using Lemmas~\ref{lemma:1} and~\ref{lemma:2} on each element of the above matrix, it can be shown that:
$$\lim_{n\rightarrow \infty} [\mathbf{D}\textbf{I}''(\boldsymbol{\bar{\xi}})\mathbf{D}] = -\textbf{S}, $$
where $\textbf{S}   =
\begin{pmatrix}
\frac{{A^0}^2 + {B^0}^2}{24} & \frac{{A^0}^2 + {B^0}^2}{24} \\
\frac{{A^0}^2 + {B^0}^2}{24} & \frac{2({A^0}^2 + {B^0}^2)}{45}
\end{pmatrix}$
is a positive definite matrix. Hence, \\
$$ (\boldsymbol{\tilde{\xi}} - \boldsymbol{\xi^0})(\sqrt{n}\mathbf{D})^{-1} =  \left[-\frac{1}{\sqrt{n}} \textbf{I}'(\boldsymbol{\xi^0})\mathbf{D}\right] \left[\mathbf{D}\textbf{I}''(\boldsymbol{\bar{\xi}})\mathbf{D}\right]^{-1} \xrightarrow{a.s.} 0.$$
\end{proof}
\justify
Lemma~\ref{lemma:1} and Lemma~\ref{lemma:2} are required to prove Lemma~\ref{lemma:4}. Lemma~\ref{lemma:3} provides a sufficient condition for $\tilde{\alpha}$ and $\tilde{\beta}$ to be strongly consistent. Lemma~\ref{lemma:4} is required to prove strong consistency of the amplitudes $\tilde{A}$ and $\tilde{B}$.  \\
\justify
\emph{Proof of Theorem 1:} To prove the consistency of $\tilde{\alpha}$ and $\tilde{\beta}$, the ALSEs of $\alpha^0$ and $\beta^0$ respectively,
\begin{flalign*}
&\textmd{Consider  } \frac{1}{n}\bigg(I(\alpha,\beta) - I(\alpha^0,\beta^0)\bigg)&\\
& = \frac{1}{n^2}\Bigg[\Bigg\{\sum_{t=1}^{n}y(t)\cos(\alpha t + \beta t^2)\Bigg\}^2 + \Bigg\{\sum_{t=1}^{n}y(t)\sin(\alpha t + \beta t^2)\Bigg\}^2  - \Bigg\{\sum_{t=1}^{n}y(t)\cos(\alpha^0 t + \beta^0 t^2)\Bigg\}^2  & \\
& - \Bigg\{\sum_{t=1}^{n}y(t)\sin(\alpha^0 t + \beta^0 t^2)\Bigg\}^2\Bigg]. &
\end{flalign*}
\justify
Now using Lemmas~\ref{lemma:1} and~\ref{lemma:2}, it can be shown that for some $c$ $>$ 0
\begin{flalign*}
& \limsup\sup_{S_c}\frac{1}{n}\left[I(\alpha,\beta) - I(\alpha^0,\beta^0)\right] = -\lim_{n\rightarrow \infty} \Bigg[\Bigg\{\frac{1}{n}\sum_{t=1}^{n} A^0 \cos^2(\alpha^0 t + \beta^0 t^2)\Bigg\}^2 + \Bigg\{\frac{1}{n}\sum_{t=1}^{n} B^0 \sin^2(\alpha^0 t + \beta^0 t^2)\Bigg\}^2 \Bigg]& \\
& = -\frac{1}{4}({A^0}^2 + {B^0}^2) < 0 \ \ a.s. &
\end{flalign*}
Therefore, $\tilde{\alpha}$ $\xrightarrow{a.s.}$ $\alpha^0$ and $\tilde{\beta}$ $\xrightarrow{a.s.}$ $\beta^0$ by Lemma~\ref{lemma:3}.\\ \qed 
\justify
\emph{Proof of Theorem 2:} 
To prove the consistency of linear parameter estimators $\tilde{A}$ and $\tilde{B}$, observe that
\begin{flalign*}
\tilde{A} = &\frac{2}{n}\sum_{t=1}^{n} y(t)\cos(\tilde{\alpha} t + \tilde{\beta} t^2)
= \frac{2}{n}\sum_{t=1}^{n} \bigg(A^0\cos(\alpha^0 t + \beta^0 t^2) + B^0\sin(\alpha^0 t + \beta^0 t^2) + X(t)\bigg)\cos(\tilde{\alpha} t + \tilde{\beta} t^2). &
\end{flalign*}
Using Lemma~\ref{lemma:1}, $\frac{2}{n}\sum_{t=1}^{n} X(t)\cos(\tilde{\alpha} t + \tilde{\beta} t^2) \rightarrow 0$ a.s. Now using the fact that $\tilde{\alpha} - \alpha^0 = o(\frac{1}{n})$ and $\tilde{\beta} - \beta^0 = o(\frac{1}{n^2})$ (see Lemma~\ref{lemma:4}), expanding $\cos(\tilde{\alpha} t + \tilde{\beta} t^2)$ by multivariate Taylor series around $(\alpha^0 , \beta^0)$ and using trigonometric identities in Lemma~\ref{lemma:2}, we get the desired result. \\ \qed
\section*{Appendix B}\label{appendix:B}
Apart from Lemmas 1-4, we require the following lemma to prove Theorem~\ref{theorem:3}.
\begin{lemma}\label{lemma:5}
If $(\theta_1 , \theta_2)$ $\in$ $(0,\pi) \times (0,\pi)$, then except for a countable number of points, the following results are true:
\begin{enumerate}[label=(\alph*)]
\item $\lim\limits_{n \rightarrow \infty} \frac{1}{\sqrt{n}} \sum_{t=1}^{n} \cos(\theta_1 t + \theta_2 t^2) = \lim\limits_{n \rightarrow \infty} \frac{1}{\sqrt{n}} \sum_{t=1}^{n} \sin(\theta_1 t + \theta_2 t^2) = 0 $,\\
\item $\lim\limits_{n \rightarrow \infty} \frac{1}{n\sqrt{n}} \sum_{t=1}^{n} t \cos(\theta_1 t + \theta_2 t^2) = \lim\limits_{n \rightarrow \infty} \frac{1}{n\sqrt{n}} \sum_{t=1}^{n} t \sin(\theta_1 t + \theta_2 t^2) = 0 $,\\
\item $\lim\limits_{n \rightarrow \infty} \frac{1}{n^2\sqrt{n}} \sum_{t=1}^{n} t^2 \cos(\theta_1 t + \theta_2 t^2) = \lim\limits_{n \rightarrow \infty} \frac{1}{n^2\sqrt{n}} \sum_{t=1}^{n} t^2 \sin(\theta_1 t + \theta_2 t^2) = 0 $.\\
\end{enumerate}
\end{lemma}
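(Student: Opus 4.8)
The plan is to work with complex exponential sums and to concentrate the entire difficulty in the case of the plain sum ($k=0$). Put $\phi(t)=\theta_1 t+\theta_2 t^2$ and, for $k=0,1,2$, set
$$S_n^{(k)}=\sum_{t=1}^{n}t^{k}e^{i\phi(t)}.$$
Each of the six limits in the statement is the real or the imaginary part of $n^{-k-1/2}S_n^{(k)}$, so it suffices to prove $S_n^{(k)}=o\!\left(n^{k+1/2}\right)$ for $k=0,1,2$. This is the analogue, one scale finer, of part (a) of Lemma~\ref{lemma:2}, and I would expect the same circle of number-theoretic ideas used there (following Lahiri et al.\ \cite{2015} and the estimates of Vinogradov \cite{1954}) to be the right tools.

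First I would reduce the cases $k=1,2$ to the case $k=0$ by Abel summation. Writing $S_m=S_m^{(0)}$,
$$S_n^{(k)}=n^{k}S_n-\sum_{m=1}^{n-1}\big((m+1)^{k}-m^{k}\big)S_m .$$
Suppose the base case supplies a bound $|S_m|\le \varepsilon_m\sqrt{m}$ with $\varepsilon_m\to 0$. Then $|n^{k}S_n|\le \varepsilon_n n^{k+1/2}=o(n^{k+1/2})$, while, using $(m+1)^{k}-m^{k}=O(m^{k-1})$, the remaining sum is bounded by $C\sum_{m=1}^{n-1}\varepsilon_m m^{k-1/2}$, which is $o(n^{k+1/2})$ by a standard Toeplitz (regular summation) argument, since $\sum_{m\le n}m^{k-1/2}=\Theta(n^{k+1/2})$ and the weights $m^{k-1/2}$ are multiplied by a null sequence $\varepsilon_m$. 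Hence the whole lemma follows once the base estimate $|S_n^{(0)}|=o(\sqrt{n})$ is in hand, and I would present the three cases in this order so that the two weighted sums never have to be treated directly.

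The hard part is therefore the single estimate $S_n^{(0)}=\sum_{t=1}^{n}e^{i(\theta_1 t+\theta_2 t^2)}=o(\sqrt n)$, a quadratic Weyl sum. The natural route is Diophantine approximation of $\theta_2/(2\pi)$: choose coprime $a,q$ with $|\theta_2/(2\pi)-a/q|$ suitably small and apply the classical Weyl--van der Corput bound for quadratic exponential sums, which controls $|S_n^{(0)}|$ in terms of the denominator $q$. The countable exceptional set is visible here: if $\theta_2/(2\pi)=a/q$ is rational then $e^{i\theta_2 t^2}$ is periodic in $t$ with period $q$ and $S_n^{(0)}$ grows like $n/\sqrt q$, so $n^{-1/2}S_n^{(0)}$ does \emph{not} vanish at such points. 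The genuine obstacle, and the step I expect to be delicate, is that the elementary Weyl estimate yields only $S_n^{(0)}=O(\sqrt n)$, whereas the statement requires the strictly smaller $o(\sqrt n)$; extracting this extra decay is exactly what forces the result to be asserted only off an exceptional set rather than for every $(\theta_1,\theta_2)\in(0,\pi)\times(0,\pi)$. I would attempt to close this gap by passing to the convergents $a_k/q_k$ of $\theta_2/(2\pi)$ and selecting, for each $n$ off the countable set above, a denominator $q=q(n)\to\infty$ lying in a range that drives the bound below the $\sqrt n$ scale, mirroring the argument behind Lemma~\ref{lemma:2} in \cite{2015}; this final quantitative estimate is where the real work lies.
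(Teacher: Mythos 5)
Your Abel--summation reduction of parts (b) and (c) to the case $k=0$ is precisely the paper's own argument: the paper writes $\sum_{t=1}^n t\,e^{i(\alpha t+\beta t^2)} = n\sum_{t=1}^n e^{i(\alpha t+\beta t^2)} - \sum_{m=1}^{n-1}\sum_{t=1}^m e^{i(\alpha t+\beta t^2)}$, applies the $k=0$ bound to every inner sum, and concludes $o(n\sqrt n)$, with the same device repeated for $k=2$. The genuine gap is the base case: you never prove $S_n^{(0)}=o(\sqrt n)$. You outline a program (Weyl--van der Corput estimates combined with the continued-fraction convergents of $\theta_2/(2\pi)$) and explicitly defer the decisive quantitative step as ``where the real work lies.'' Since parts (b) and (c) are conditional on that estimate, nothing in the lemma is actually established by your writeup. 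For comparison, the paper disposes of the base case in two lines and with no Diophantine analysis whatsoever: it asserts, ``by Cauchy--Schwarz,'' that $\sum_{t=1}^n e^{i(\alpha t+\beta t^2)} \leq \bigl(\sum_{t=1}^n e^{2i\alpha t}\bigr)^{1/2}\bigl(\sum_{t=1}^n e^{2i\beta t^2}\bigr)^{1/2}$, observes that the first factor is a bounded geometric series since $\alpha\in(0,\pi)$, that the second is $o(n)$ by Lemma~\ref{lemma:2}(e), and multiplies the two bounds.

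You should know, however, that the obstruction you identified is real, and it cuts deeper than your proposal acknowledges. First, the paper's displayed inequality is not the Cauchy--Schwarz inequality: for complex sequences that inequality reads $\bigl|\sum_t a_t\overline{b_t}\bigr| \leq \bigl(\sum_t |a_t|^2\bigr)^{1/2}\bigl(\sum_t |b_t|^2\bigr)^{1/2}$, which with $a_t=e^{i\alpha t}$, $b_t=e^{-i\beta t^2}$ yields only the trivial bound $n$; the paper's variant puts $a_t^2$ and $b_t^2$ rather than $|a_t|^2$ and $|b_t|^2$ under the square roots, and even asserts an order relation between complex quantities, so it is a formal manipulation rather than a proof. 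Second, your suspicion that $o(\sqrt n)$ is unattainable in general is correct: by classical results on quadratic Weyl sums (Hardy and Littlewood; sharpened later by Fiedler, Jurkat and K\"orner), $|S_n^{(0)}|\asymp\sqrt n$ whenever $\theta_2/(2\pi)$ has bounded partial quotients, and $\limsup_n |S_n^{(0)}|/\sqrt n=\infty$ for Lebesgue-almost every $\theta_2$. The set on which conclusion (a) fails is therefore uncountable, so no refinement of your convergents strategy---or any other strategy---can deliver the statement with a merely countable exceptional set. In short, your plan stalls exactly where it must: the step you flagged as delicate is not delicate but impossible, and the defect lies in the statement itself (and in the paper's two-line argument for it), not in your treatment of parts (b) and (c).
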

\begin{proof}
Consider the exponential sum $\sum\limits_{t=1}^ne^{i(\alpha t + \beta t^2 )}$. By Cauchy Schwartz Inequality, we have:
$$\sum\limits_{t=1}^ne^{i(\alpha t + \beta t^2 )} \leqslant \bigg(\sum\limits_{t=1}^ne^{2i\alpha t}\bigg)^{1/2}\bigg(\sum\limits_{t=1}^ne^{2i\beta t^2}\bigg)^{1/2}.$$
It is easy to show that $\sum\limits_{t=1}^ne^{2i\alpha t}$ is $O(1)$.
Also, using Lemma~\ref{lemma:2} we have:
\begin{flalign*}
\lim_{n \rightarrow \infty}\frac{1}{n}\sum\limits_{t=1}^n \cos(\theta t^2) = \lim_{n \rightarrow \infty}\frac{1}{n}\sum\limits_{t=1}^n \sin(\theta t^2) = 0.
\end{flalign*}
Thus, $\sum\limits_{t=1}^ne^{2i\beta t^2} = o(n)$
$ \Rightarrow \sum\limits_{t=1}^ne^{i(\alpha t + \beta t^2 )} = o(\sqrt{n})$
$ \Rightarrow \lim\limits_{n \rightarrow \infty} \frac{1}{\sqrt{n}} \sum\limits_{t=1}^{n} \cos(\theta_1 t + \theta_2 t^2) = \lim\limits_{n \rightarrow \infty} \frac{1}{\sqrt{n}} \sum\limits_{t=1}^{n} \sin(\theta_1 t + \theta_2 t^2) = 0.$\\ \\
Now let us define $E_1 = \sum\limits_{t=1}^n t e^{i(\alpha t + \beta t^2 )}$ and $E_2 = \sum\limits_{t=1}^n t^2 e^{i(\alpha t + \beta t^2 )}.$
\begin{flalign*}
&\textmd{Consider}\  |E_1|\ = |\sum\limits_{t=1}^n t e^{i(\alpha t + \beta t^2 )}| = |n\sum\limits_{t=1}^n e^{i(\alpha t + \beta t^2 )} - \sum\limits_{t=1}^n (n - t) e^{i(\alpha t + \beta t^2 )}| & \\
&\quad \quad \quad \quad \quad \quad \leqslant |n\sum\limits_{t=1}^n e^{i(\alpha t + \beta t^2 )}| + |\sum\limits_{t=1}^1 e^{i(\alpha t + \beta t^2 )} + \sum\limits_{t=1}^2 e^{i(\alpha t + \beta t^2 )} + \cdots + \sum\limits_{t=1}^{n-1} e^{i(\alpha t + \beta t^2 )}| &\\
&\quad \quad \quad \quad \quad \quad = o(n\sqrt{n}).
\end{flalign*}
\justify
Similarly, it can be shown that $|E_2| = o(n^2\sqrt{n})$. Hence, the result follows.\\ \\
\end{proof}
\justify 
\emph{Proof of Theorem 3:} Let $Q(\boldsymbol{\theta})$ be the error sum of squares, then 
\begin{flalign*}
& \frac{1}{n}Q(\boldsymbol{\theta}) = \frac{1}{n}\sum_{t=1}^{n} \bigg(y(t) - (A\cos(\alpha t + \beta t^2) + B \sin(\alpha t + \beta t^2))\bigg)^2 &\\
& = \frac{1}{n}\sum_{t=1}^{n}y(t)^2 - \frac{2}{n}\sum_{t=1}^{n} y(t)\{A\cos(\alpha t + \beta t^2) + B \sin(\alpha t + \beta t^2)\} + \frac{1}{n}\sum_{t=1}^{n} (A\cos(\alpha t + \beta t^2) + B \sin(\alpha t + \beta t^2))^2 \\
& = \frac{1}{n}\sum_{t=1}^{n}y(t)^2 - \frac{2}{n}\sum_{t=1}^{n} y(t)\{A\cos(\alpha t + \beta t^2) + B \sin(\alpha t + \beta t^2)\} + \frac{1}{2}\bigg(A^2 + B^2\bigg) + o(1)&\\
& = C - \frac {1}{n}J(\boldsymbol{\theta}) + o(1).&\\
&Here, \ C = \frac{1}{n} \sum_{t=1}^{n}y^2(t) \textmd{ and }
\frac {1}{n}J(\boldsymbol{\theta}) = \frac{2}{n}\sum_{t=1}^{n} y(t)\{A\cos(\alpha t + \beta t^2) + B \sin(\alpha t + \beta t^2)\}  - \frac{A^2 + B^2}{2}.&
\end{flalign*}
Now we compute the first derivative of $\frac{1}{n}J(\boldsymbol{\theta})$ and $\frac{1}{n}Q(\boldsymbol{\theta})$  at $\boldsymbol{\theta}$ = $\boldsymbol{\theta^0}$ and using Lemmas~\ref{lemma:1},~\ref{lemma:2} and~\ref{lemma:5}, we obtain the following relation between them:
\begin{flalign*}
&\frac{1}{n}\textbf{Q}'(\boldsymbol{\theta^0}) \mathbf{D} = -\frac{1}{n}\textbf{J}'(\boldsymbol{\theta^0}) \mathbf{D} + 
\begin{pmatrix} 
o(\frac{1}{\sqrt{n}}) \\ \\
o(\frac{1}{\sqrt{n}}) \\ \\
o(\sqrt{n}) \\ \\
o(n\sqrt{n}) \\
\end{pmatrix}^{T} \mathbf{D}.& \\
&\Rightarrow \lim\limits_{n \rightarrow \infty} \textbf{Q}'(\boldsymbol{\theta^0}) \mathbf{D} =  \lim\limits_{n \rightarrow \infty} - \textbf{J}'(\boldsymbol{\theta^0}) \mathbf{D},\ \textmd{since} \ \lim\limits_{n \rightarrow \infty} \begin{pmatrix} 
o(\sqrt{n}) \\ \\
o(\sqrt{n}) \\ \\
o(n\sqrt{n}) \\ \\
o(n^2\sqrt{n}) \\
\end{pmatrix}^{T} \mathbf{D} = 0.&
\end{flalign*}
\justify
Note that $\tilde{A}$ = $\hat{A}(\alpha , \beta)$ and $\tilde{B}$ = $\hat{B}(\alpha , \beta).$,
therefore substituting $\tilde{A}$, $\tilde{B}$ in $J(\boldsymbol{\theta})$,we have: $$J(\tilde{A} , \tilde{B} , \alpha , \beta) =  I(\alpha , \beta).$$
Hence the estimator of $\boldsymbol{\theta}$ which maximizes $J(\boldsymbol{\theta})$ is equivalent to $\boldsymbol{\tilde{\theta}}$, the ALSE of $\boldsymbol{\theta^0}$. Thus, the ALSE $\boldsymbol{\tilde{\theta}}$ in terms of $J(\boldsymbol{\theta})$ can be written as:\\
$$(\boldsymbol{\tilde{\theta}} - \boldsymbol{\theta^0}) = -\textbf{J}'(\boldsymbol{\theta^0})[\textbf{J}''(\boldsymbol{\bar{\theta}})]^{-1}.$$
$$\Rightarrow (\boldsymbol{\tilde{\theta}} - \boldsymbol{\theta^0})\mathbf{D}^{-1} = -[\textbf{J}'(\boldsymbol{\theta^0})\mathbf{D}][\mathbf{D}\textbf{J}''(\boldsymbol{\bar{\theta}})\mathbf{D}]^{-1}.$$ \\
Now we know that, $\lim\limits_{n \rightarrow \infty}[\mathbf{D}\textbf{J}''(\boldsymbol{\bar{\theta}})\mathbf{D}] = \lim\limits_{n \rightarrow \infty}[\mathbf{D}\textbf{J}''(\boldsymbol{\theta^0})\mathbf{D}]$.
Comparing the corresponding elements of the second derivative matrices $\mathbf{D}\textbf{J}''(\boldsymbol{\theta^0})\mathbf{D}$ and $\mathbf{D}\textbf{Q}''(\boldsymbol{\theta^0})\mathbf{D}$ after using Lemmas~\ref{lemma:1} and~\ref{lemma:2} on each of the derivatives as done for the first derivative vectors above (involves lengthy calculations), we obtain the following relation: $$ \lim_{n \rightarrow \infty} \mathbf{D}\textbf{J}''(\boldsymbol{\theta^0})\mathbf{D} = - \lim_{n \rightarrow \infty}\mathbf{D}\textbf{Q}''(\boldsymbol{\theta^0})\mathbf{D} = -\boldsymbol{\Sigma}^{-1}.$$
where, $$\boldsymbol{\Sigma}^{-1} = \left[\begin{array}{cccc}1 & 0 & \frac{B^0}{2} & \frac{B^0}{3} \\0 & 1 & -\frac{A^0}{2} & -\frac{A^0}{3} \\\frac{B^0}{2} & -\frac{A^0}{2} & \frac{{A^0}^2 + {B^0}^2}{3} & \frac{{A^0}^2 + {B^0}^2}{4} \\ \frac{B^0}{3} & -\frac{A^0}{3} & \frac{{A^0}^2 + {B^0}^2}{4} & \frac{{A^0}^2 + {B^0}^2}{5}\end{array}\right].$$\\
\justify
Thus, we have,
\begin{flalign*}
&(\boldsymbol{\tilde{\theta}} - \boldsymbol{\theta^0})\mathbf{D}^{-1} = -[\textbf{J}'(\boldsymbol{\theta^0})\mathbf{D}][\mathbf{D}\textbf{J}''(\boldsymbol{\bar{\theta}})\mathbf{D}]^{-1}.&\\
&\Rightarrow \lim_{n \rightarrow \infty} (\boldsymbol{\tilde{\theta}} - \boldsymbol{\theta^0})\mathbf{D}^{-1} = -\lim_{n \rightarrow \infty} [\textbf{J}'(\boldsymbol{\theta^0})\mathbf{D}] \lim_{n \rightarrow \infty}[\mathbf{D}\textbf{J}''(\boldsymbol{\bar{\theta}})\mathbf{D}]^{-1}.&\\
&\Rightarrow \lim_{n \rightarrow \infty}(\boldsymbol{\tilde{\theta}} - \boldsymbol{\theta^0})\mathbf{D}^{-1}  = -\lim_{n \rightarrow \infty}[\textbf{Q}'(\boldsymbol{\theta^0})\mathbf{D}]\lim_{n \rightarrow \infty}[\mathbf{D}\textbf{Q}''(\boldsymbol{\bar{\theta}})\mathbf{D}]^{-1}.&
\end{flalign*}
Using the result of Kundu and Nandi \cite{2008}, it follows that the right hand side is equal to  $\lim_{n \rightarrow \infty}(\boldsymbol{\hat{\theta}} - \boldsymbol{\theta^0})\mathbf{D}^{-1}$.
Hence,
$$ \lim_{n \rightarrow \infty}(\boldsymbol{\tilde{\theta}} - \boldsymbol{\theta^0})\mathbf{D}^{-1}  =  \lim_{n \rightarrow \infty}(\boldsymbol{\hat{\theta}} - \boldsymbol{\theta^0})\mathbf{D}^{-1}.$$
\justify
It follows that LSE, $\boldsymbol{\hat{\theta}}$ and ALSE, $\boldsymbol{\tilde{\theta}}$ of $\boldsymbol{\theta^0}$ of model~(\ref{eq:5}) are asymptotically equivalent in distribution.\\
Therefore, asymptotic distribution of $\boldsymbol{\tilde{\theta}}$ is same as that of $\boldsymbol{\hat{\theta}}$. \\ \qed
\section*{Appendix C}\label{appendix:C}
\justify
The following Lemmas are required to prove the Theorem~\ref{theorem:4}: \\
\begin{lemma}\label{lemma:6}
Consider the following set $S_{c} = \{(\alpha,\beta): |\alpha - \alpha_1^0| > c\ or \ |\beta - \beta_1^0| > c\}$ for any c $>$ 0. If for some $c$ $>$0, $$\limsup \sup_{S_c} \frac{1}{n} (I_1(\alpha,\beta) - I_1(\alpha_1^0,\beta_1^0)) < 0 \ \ a.s., $$  then $(\tilde{\alpha_1},\tilde{\beta_1})$ is a strongly consistent estimator of  $(\alpha_1^0,\beta_1^0)$. Here $I_1(\alpha, \beta)$ is as defined in~(\ref{eq:10}).
\end{lemma}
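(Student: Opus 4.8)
The plan is to adapt, essentially verbatim, the contradiction-via-subsequence argument used in the proof of Lemma~\ref{lemma:3}, since Lemma~\ref{lemma:6} is the exact analogue of Lemma~\ref{lemma:3} for the first component of the $p$-component model. I would begin by making the dependence on $n$ explicit, writing $\boldsymbol{\tilde{\xi}_n} = (\tilde{\alpha}_{1,n}, \tilde{\beta}_{1,n})$ for the ALSE of $\boldsymbol{\xi_1^0} = (\alpha_1^0, \beta_1^0)$ obtained at the first step and $I_{1,n}(\alpha,\beta)$ for the periodogram-like criterion in~(\ref{eq:10}), so that $\boldsymbol{\tilde{\xi}_n}$ is by definition a global maximizer of $I_{1,n}$ over the parameter region.

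Next I would argue by contradiction: suppose $\boldsymbol{\tilde{\xi}_n}$ fails to converge almost surely to $\boldsymbol{\xi_1^0}$. Then on an event of positive probability there is a $c>0$ with $\limsup_{n} |\boldsymbol{\tilde{\xi}_n} - \boldsymbol{\xi_1^0}| > c$, so one can extract a (random) subsequence $\{\boldsymbol{\tilde{\xi}_{n_k}}\}$ lying entirely in $S_c$. Because $\boldsymbol{\tilde{\xi}_{n_k}}$ maximizes $I_{1,n_k}$, it satisfies $I_{1,n_k}(\boldsymbol{\tilde{\xi}_{n_k}}) \geqslant I_{1,n_k}(\boldsymbol{\xi_1^0})$, hence $\frac{1}{n_k}[I_{1,n_k}(\boldsymbol{\tilde{\xi}_{n_k}}) - I_{1,n_k}(\boldsymbol{\xi_1^0})] \geqslant 0$.

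I would then take the supremum over $S_c$, which only increases the left-hand side since $\boldsymbol{\tilde{\xi}_{n_k}} \in S_c$, and pass to $\limsup$, producing an event of positive probability on which $\limsup_{k} \sup_{S_c} \frac{1}{n_k}[I_{1,n_k}(\alpha,\beta) - I_{1,n_k}(\alpha_1^0,\beta_1^0)] \geqslant 0$. This directly contradicts the hypothesis, which asserts that the corresponding quantity is strictly negative almost surely. The contradiction forces $\boldsymbol{\tilde{\xi}_n} \xrightarrow{a.s.} \boldsymbol{\xi_1^0}$, that is, strong consistency of $(\tilde{\alpha_1}, \tilde{\beta_1})$.

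I do not anticipate a genuine obstacle, since the argument is purely a maximizer-versus-hypothesis comparison and does not exploit the structure of $I_1$ at all; that structure is needed only to \emph{verify} the hypothesis, which is carried out separately in the proof of Theorem~\ref{theorem:4}. The one point requiring a little care is that here $S_c$ is defined with an ``or'' (the complement of a box neighborhood of $\boldsymbol{\xi_1^0}$) rather than as a ball as in Lemma~\ref{lemma:3}; but since $\{|\alpha-\alpha_1^0|>c \text{ or } |\beta-\beta_1^0|>c\}$ is sandwiched between comparable balls, the subsequence extraction and the convergence conclusion are unaffected.
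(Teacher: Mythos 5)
Your proposal is correct and matches the paper exactly: the paper's proof of Lemma~\ref{lemma:6} simply states that it follows ``on the same lines as Lemma~\ref{lemma:3},'' and your argument is precisely that contradiction-via-subsequence proof (maximizer property of the ALSE, supremum over $S_c$, limsup, contradiction with the hypothesis) transplanted to $I_1$ and $\boldsymbol{\xi_1^0}$. Your additional remark that the box-shaped $S_c$ (defined with ``or'') causes no difficulty is a sensible point of care that the paper leaves implicit.
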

\begin{proof}
This proof can be obtained on the same lines as Lemma~\ref{lemma:3}.\\
\end{proof}
\begin{lemma}\label{lemma:7}
Let  $\boldsymbol{\tilde{\xi_1}}$ = $(\tilde{\alpha_1} , \tilde{\beta_1})$ and $\mathbf{D_1}$ = diag($\frac{1}{n\sqrt{n}},\frac{1}{n^2\sqrt{n}}$), then $$(\boldsymbol{\tilde{\xi_1}} - \boldsymbol{\xi_1^0})(\sqrt{n}\mathbf{D_1})^{-1} \xrightarrow{a.s.} 0.$$
\end{lemma}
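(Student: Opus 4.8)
The plan is to follow the argument of Lemma~\ref{lemma:4} almost verbatim, replacing the single-component periodogram $I(\alpha,\beta)$ by the function $I_1(\alpha,\beta)$ of~(\ref{eq:10}) and $\boldsymbol{\xi^0}$ by $\boldsymbol{\xi_1^0}=(\alpha_1^0,\beta_1^0)$, while keeping track of the extra terms that the remaining $p-1$ components of~(\ref{eq:9}) contribute to the data $y(t)$. Write $\textbf{I}_1'(\boldsymbol{\xi})$ for the $1\times 2$ gradient and $\textbf{I}_1''(\boldsymbol{\xi})$ for the $2\times 2$ Hessian of $I_1$. A multivariate Taylor expansion of $\textbf{I}_1'(\boldsymbol{\tilde{\xi_1}})$ about $\boldsymbol{\xi_1^0}$, combined with the stationarity condition $\textbf{I}_1'(\boldsymbol{\tilde{\xi_1}})=0$ (since $\boldsymbol{\tilde{\xi_1}}$ maximizes $I_1$), yields
\[
(\boldsymbol{\tilde{\xi_1}} - \boldsymbol{\xi_1^0})(\sqrt{n}\mathbf{D_1})^{-1} = \left[-\frac{1}{\sqrt{n}}\textbf{I}_1'(\boldsymbol{\xi_1^0})\mathbf{D_1}\right]\left[\mathbf{D_1}\textbf{I}_1''(\boldsymbol{\bar{\xi}})\mathbf{D_1}\right]^{-1},
\]
where $\boldsymbol{\bar{\xi}}$ lies on the segment joining $\boldsymbol{\tilde{\xi_1}}$ and $\boldsymbol{\xi_1^0}$. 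As in Lemma~\ref{lemma:4}, it then suffices to show that the first bracket tends to $\mathbf{0}$ almost surely and that the second bracket converges to an invertible limit.

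First I would treat the scaled gradient $\frac{1}{\sqrt{n}}\textbf{I}_1'(\boldsymbol{\xi_1^0})\mathbf{D_1} = \left(\frac{1}{n^2}\frac{\partial I_1(\boldsymbol{\xi_1^0})}{\partial\alpha},\ \frac{1}{n^3}\frac{\partial I_1(\boldsymbol{\xi_1^0})}{\partial\beta}\right)$. Substituting the full signal $y(t)=\sum_{k=1}^{p}(A_k^0\cos(\alpha_k^0 t+\beta_k^0 t^2)+B_k^0\sin(\alpha_k^0 t+\beta_k^0 t^2))+X(t)$ into the derivatives, the noise contributions vanish by Lemma~\ref{lemma:1}, the self-terms arising from the first component vanish after scaling exactly as in Lemma~\ref{lemma:4} by Lemma~\ref{lemma:2}, and the terms arising from components $k\geqslant 2$ are products of chirps with distinct $(\alpha_k^0,\beta_k^0)$, which under Assumption 2 are of smaller order and likewise vanish after division by the appropriate power of $n$. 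Hence the first bracket converges to $\mathbf{0}$ almost surely.

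Next, the consistency of $\boldsymbol{\tilde{\xi_1}}$ (which follows once the sufficient condition of Lemma~\ref{lemma:6} is verified, exactly as Lemma~\ref{lemma:4} invoked the consistency from Theorem~\ref{theorem:1}) gives $\boldsymbol{\bar{\xi}}\xrightarrow{a.s.}\boldsymbol{\xi_1^0}$, so by continuity $\lim_{n\to\infty}\mathbf{D_1}\textbf{I}_1''(\boldsymbol{\bar{\xi}})\mathbf{D_1} = \lim_{n\to\infty}\mathbf{D_1}\textbf{I}_1''(\boldsymbol{\xi_1^0})\mathbf{D_1}$. Evaluating each entry with Lemmas~\ref{lemma:1} and~\ref{lemma:2}, the cross-component contributions again drop out and only the first-component amplitudes survive, so the limit is $-\mathbf{S}$ with
\[
\mathbf{S}=\begin{pmatrix} \frac{{A_1^0}^2+{B_1^0}^2}{24} & \frac{{A_1^0}^2+{B_1^0}^2}{24} \\[4pt] \frac{{A_1^0}^2+{B_1^0}^2}{24} & \frac{2({A_1^0}^2+{B_1^0}^2)}{45}\end{pmatrix},
\]
which is positive definite because ${A_1^0}^2+{B_1^0}^2>0$ under Assumption 3. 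Multiplying the two factors, the product of a sequence converging to $\mathbf{0}$ and one converging to the invertible matrix $-\mathbf{S}^{-1}$ gives $(\boldsymbol{\tilde{\xi_1}} - \boldsymbol{\xi_1^0})(\sqrt{n}\mathbf{D_1})^{-1}\xrightarrow{a.s.}\mathbf{0}$.

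The main obstacle is precisely the step where the contributions of the remaining $p-1$ components to the scaled gradient and Hessian of $I_1$ at $\boldsymbol{\xi_1^0}$ must be shown to be asymptotically negligible; this rests on the orthogonality of distinct chirps guaranteed by Assumption 2, i.e.\ on controlling sums of the form $\frac{1}{n^{k+1}}\sum_{t=1}^{n} t^k \cos(\alpha_1^0 t+\beta_1^0 t^2)\cos(\alpha_j^0 t+\beta_j^0 t^2)$ for $j\geqslant 2$ and verifying that they tend to zero. Once this is granted, the remainder is a mechanical transcription of the single-component computation carried out in Lemma~\ref{lemma:4}.
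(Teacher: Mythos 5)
Your proposal follows essentially the same route as the paper's own proof: a Taylor expansion of the gradient of $I_1$ about $\boldsymbol{\xi_1^0}$ together with $\mathbf{I_1}'(\boldsymbol{\tilde{\xi_1}})=0$, showing the scaled gradient $\frac{1}{\sqrt{n}}\mathbf{I_1}'(\boldsymbol{\xi_1^0})\mathbf{D_1}\to 0$ a.s.\ via Lemmas~\ref{lemma:1} and~\ref{lemma:2}, and showing the scaled Hessian converges to the negative of the same positive definite matrix $\mathbf{S_1}$. If anything, you are more explicit than the paper about the one delicate point --- that the cross-component chirp terms in the gradient and Hessian must be shown negligible using the distinctness of $(\alpha_k^0,\beta_k^0)$ under Assumption 2 --- which the paper passes over silently with ``one can show.''
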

\begin{proof}
Let us denote $\mathbf{I_1}'(\boldsymbol{\xi_1})$ as the 1 $\times$ 2 first derivative matrix and $\mathbf{I_1}''(\boldsymbol{\xi_1})$ as the 2 $\times$ 2 second derivative matrix of $I_1(\boldsymbol{\xi_1})$. Now, using multivariate Taylor series expansion of $\mathbf{I_1}'(\boldsymbol{\tilde{\xi_1}})$ around $\boldsymbol{\xi_1^0}$, we get:
\begin{equation}\label{eq:15}
\mathbf{I_1}'(\boldsymbol{\tilde{\xi_1}}) - \mathbf{I_1}'(\boldsymbol{\xi_1^0}) = (\boldsymbol{\tilde{\xi_1}} - \boldsymbol{\xi_1^0})\mathbf{I_1}''(\boldsymbol{\bar{\xi_1}})
\end{equation}
where $\boldsymbol{\bar{\xi_1}}$ is such that $|\boldsymbol{\bar{\xi_1}} - \boldsymbol{\xi_1^0}| \leqslant |\boldsymbol{\tilde{\xi_1}} - \boldsymbol{\xi_1^0}|.$
\justify
Since $\mathbf{I_1}'(\boldsymbol{\tilde{\xi_1}})$ = 0,~(\ref{eq:15}) can be written as
$$(\boldsymbol{\tilde{\xi_1}} - \boldsymbol{\xi_1^0})D^{-1} = \left[-\mathbf{I_1}'(\boldsymbol{\xi_1^0})\mathbf{D}\right] \left[\mathbf{D}\mathbf{I_1}''(\boldsymbol{\bar{\xi_1}})\mathbf{D}\right]^{-1}. $$
\vspace{-\baselineskip}
Dividing by $\sqrt{n}$ the above expression becomes
\vspace{3pt}$$(\boldsymbol{\tilde{\xi_1}} - \boldsymbol{\xi_1^0})(\sqrt{n}\mathbf{D})^{-1} =  \left[-\frac{1}{\sqrt{n}} \mathbf{I_1}'(\boldsymbol{\xi_1^0})\mathbf{D}\right] \left[\mathbf{D}\mathbf{I_1}''(\boldsymbol{\bar{\xi_1}})\mathbf{D}\right]^{-1}.$$
Let us first consider $\frac{1}{\sqrt{n}} \mathbf{I_1}'(\boldsymbol{\xi_1^0})\mathbf{D}.$
\begin{flalign*}
\frac{1}{\sqrt{n}} \mathbf{I_1}'(\boldsymbol{\xi_1^0})\mathbf{D} 
& =
\begin{pmatrix}
\frac{\partial I_1(\alpha_1^0,\beta_1^0)}{\partial\alpha}   & \frac{\partial(I_1(\alpha_1^0,\beta_1^0))}{\partial\beta}
\end{pmatrix}
\frac{1}{\sqrt{n}}
\begin{pmatrix}
\frac{1}{n\sqrt{n}} & 0 \\
0                          & \frac{1}{n^2\sqrt{n}}
\end{pmatrix}&\\
& = \begin{pmatrix}
\frac{1}{n^2}\frac{\partial I_1(\alpha_1^0,\beta_1^0)}{\partial\alpha} & \frac{1}{n^3}\frac{\partial I_1(\alpha_1^0,\beta_1^0)}{\partial\beta}
\end{pmatrix}.&
\end{flalign*}
Using Lemmas~\ref{lemma:1} and~\ref{lemma:2}, one can show that both the elements of the above vector go to zero as n $\rightarrow$ $\infty$, almost surely.\\ \\
Thus $\frac{1}{\sqrt{n}} \mathbf{I_1}'(\boldsymbol{\xi_1^0})\mathbf{D}$ $\xrightarrow{a.s.}$ 0.\\ \\
Consider the 2 $\times$ 2 matrix $[\mathbf{D}\mathbf{I_1}''(\boldsymbol{\bar{\xi_1}})\mathbf{D}]$. Since $\boldsymbol{\tilde{\xi_1}}$ $\xrightarrow{a.s.}$ $\boldsymbol{\xi_1^0}$ and  $\mathbf{I_1}''(\boldsymbol{\xi_1})$ is a continuous function of $\xi_1$,
$$\lim_{n\rightarrow \infty} [\mathbf{D}\mathbf{I_1}''(\boldsymbol{\bar{\xi_1}})\mathbf{D}] = \lim_{n\rightarrow \infty} [\mathbf{D}\mathbf{I_1}''(\boldsymbol{\xi_1^0})\mathbf{D}].$$
\justify
Let us now look at the elements of the matrix  $\mathbf{D}\mathbf{I_1}''(\boldsymbol{\xi_1^0})\mathbf{D}.$
\begin{flalign*}
&\mathbf{D}\mathbf{I_1}''(\boldsymbol{\xi_1^0})\mathbf{D}  =
\begin{pmatrix}
\frac{1}{n\sqrt{n}} & 0 \\
0                          & \frac{1}{n^2\sqrt{n}}
\end{pmatrix}
\begin{pmatrix}
\frac{\partial^2 I_1(\alpha_1^0,\beta_1^0)}{\partial\alpha^2}   & \frac{\partial^2(I_1(\alpha_1^0,\beta_1^0))}{\partial\alpha\partial\beta} \\
\frac{\partial^2(I_1(\alpha_1^0,\beta_1^0))}{\partial\beta\partial\alpha} & \frac{\partial^2 I_1(\alpha_1^0,\beta_1^0)}{\partial\beta^2}
\end{pmatrix}
\begin{pmatrix}
\frac{1}{n\sqrt{n}} & 0 \\
0                          & \frac{1}{n^2\sqrt{n}}
\end{pmatrix} = \begin{pmatrix}
\frac{1}{n^3}\frac{\partial^2 I_1(\alpha_1^0,\beta_1^0)}{\partial\alpha^2}   & \frac{1}{n^4}\frac{\partial^2(I_1(\alpha_1^0,\beta_1^0))}{\partial\alpha\partial\beta} \\
\frac{1}{n^4}\frac{\partial^2(I_1(\alpha_1^0,\beta_1^0))}{\partial\beta\partial\alpha} & \frac{1}{n^5}\frac{\partial^2 I_1(\alpha_1^0,\beta_1^0)}{\partial\beta^2} 
\end{pmatrix}.&
\end{flalign*}
Again using Lemmas~\ref{lemma:1} and~\ref{lemma:2}, we obtain the following:
\begin{flalign*}
&\frac{1}{n^3}\frac{\partial^2 I_1(\alpha_1^0,\beta_1^0)}{\partial\alpha^2}  \xrightarrow{a.s} - \Bigg(\frac{{A_1^0}^2}{24} + \frac{{B_1^0}^2}{24}\Bigg), \quad \frac{1}{n^4}\frac{\partial^2(I_1(\alpha_1^0,\beta_1^0))}{\partial\alpha\partial\beta}  \xrightarrow{a.s} - \Bigg(\frac{{A^0}^2}{24} + \frac{{B^0}^2}{24}\Bigg),\ \textmd{and} &\\
&\frac{1}{n^5}\frac{\partial^2 I_1(\alpha_1^0,\beta_1^0)}{\partial\beta^2}   \xrightarrow{a.s}  - \Bigg(\frac{{2A_1^0}^2}{45} + \frac{{2B_1^0}^2}{45}\Bigg).&
\end{flalign*}
Thus, $\lim_{n\rightarrow \infty} [\mathbf{D}\mathbf{I_1}''(\boldsymbol{\bar{\xi_1}})\mathbf{D}] = -\mathbf{S_1},$
where $\mathbf{S_1} =
\begin{pmatrix}
\frac{{A_1^0}2 + {B_1^0}^2}{24} & \frac{{A_1^0}^2 + {B_1^0}^2}{24} \\
\frac{{A_1^0}^2 + {B_1^0}^2}{24} & \frac{2({A_1^0}^2 + {B_1^0}^2)}{45}
\end{pmatrix}$ is a positive definite matrix.
\begin{flalign*}
&\textmd{Hence, } (\boldsymbol{\tilde{\xi_1}} - \boldsymbol{\xi_1^0})(\sqrt{n}\mathbf{D})^{-1} =  \left[-\frac{1}{\sqrt{n}} \mathbf{I_1}'(\boldsymbol{\xi_1^0})\mathbf{D}\right] \left[\mathbf{D}\mathbf{I_1}''(\boldsymbol{\bar{\xi_1}})\mathbf{D}\right]^{-1} \xrightarrow{a.s.} 0.&
\end{flalign*}
\end{proof}

\justify
\emph{Proof of Theorem 4:}
First let us prove the consistency of the estimates of the non-linear parameters of the first component of the multiple component model, that is, $\alpha_1^0$ and $\beta_1^0$. For notational simplicity we assume $p = 2$. \\ \\
Thus, $y(t) = A_1^0 \cos(\alpha_1^0 t + \beta_1^0 t^2) + B_1^0 \sin(\alpha_1^0 t + \beta_1^0 t^2) +  A_2^0 \cos(\alpha_2^0 t + \beta_2^0 t^2) + B_2^0 \sin(\alpha_2^0 t + \beta_2^0 t^2) + X(t).$\\ \\
Consider:  $\frac{1}{n}\bigg(I_1(\alpha,\beta) - I_1(\alpha_1^0,\beta_1^0)\bigg)$
\begin{flalign*}
&  = \frac{1}{n^2}\Bigg[\Bigg|\sum_{t=1}^{n}y(t)e^{-i(\alpha t + \beta t^2)}\Bigg| - \Bigg|\sum_{t=1}^{n}y(t)e^{-i(\alpha_1^0 t + \beta_1^0 t^2)}\Bigg|\Bigg] & \\
& = \frac{1}{n^2}\Bigg[\Bigg\{\sum_{t=1}^{n}y(t)\cos(\alpha t + \beta t^2)\Bigg\}^2 + \Bigg\{\sum_{t=1}^{n}y(t)\sin(\alpha t + \beta t^2)\Bigg\}^2 - \Bigg\{\sum_{t=1}^{n}y(t)\cos(\alpha_1^0 t + \beta_1^0 t^2)\Bigg\}^2 & \\
& \quad - \Bigg\{\sum_{t=1}^{n}y(t)\sin(\alpha_1^0 t + \beta_1^0 t^2)\Bigg\}^2\Bigg]. &
\end{flalign*}
The set  $S_c = \{ (\alpha , \beta): |\alpha - \alpha_1^0| > c \ \textmd{or} \  |\beta - \beta_1^0|  > c \} $ can be split into two parts and written as $S_c^1 \cup S_c^2$, where \\ \\
$S_c^1 = \{(\alpha,\beta): |\alpha - \alpha_1^0| > c  \textmd{ or }  |\beta - \beta_1^0| > c, \ (\alpha,\beta) = (\alpha_2^0, \beta_2^0)\}$, and \\ \\
$S_c^2 = \{(\alpha,\beta): |\alpha - \alpha_1^0| > c \textmd{ or } |\beta - \beta_1^0| > c, \ (\alpha,\beta)  \neq (\alpha_2^0, \beta_2^0)\}$. \\
\begin{flalign*}
&\limsup_{n \rightarrow \infty} \sup_{S_c^1}\frac{1}{n}\bigg(I_1(\alpha,\beta) - I_1(\alpha_1^0,\beta_1^0)\bigg)&\\
& = \limsup_{n \rightarrow \infty}\sup_{S_c^1}\frac{1}{n^2} \Bigg[\Bigg\{\sum_{t=1}^{n}\bigg(\sum_{k=1}^2 A_k^0 \cos(\alpha_k^0 t + \beta_k^0 t^2) + B_k^0 \sin(\alpha_k^0 t + \beta_k^0 t^2) + X(t)\bigg)\cos(\alpha t + \beta t^2)\Bigg\}^2\Bigg]&\\
& + \limsup_{n \rightarrow \infty}\sup_{S_c^1} \frac{1}{n^2} \Bigg[\Bigg\{\sum_{t=1}^{n}\bigg(\sum_{k=1}^2 A_k^0 \cos(\alpha_k^0 t + \beta_k^0 t^2) + B_k^0 \sin(\alpha_k^0 t + \beta_k^0 t^2) + X(t)\bigg) \sin(\alpha t + \beta t^2)\Bigg\}^2\Bigg]&\\
& - \limsup_{n \rightarrow \infty}\sup_{S_c^1} \frac{1}{n^2} \Bigg[\Bigg\{\sum_{t=1}^{n}\bigg(\sum_{k=1}^2 A_k^0 \cos(\alpha_k^0 t + \beta_k^0 t^2) + B_k^0 \sin(\alpha_k^0 t + \beta_k^0 t^2) + X(t)\bigg) \cos(\alpha_1^0 t + \beta_1^0 t^2)\Bigg\}^2\Bigg]&\\
& - \limsup_{n \rightarrow \infty}\sup_{S_c^1}\frac{1}{n^2} \Bigg[\Bigg\{\sum_{t=1}^{n}\bigg(\sum_{k=1}^2 A_k^0 \cos(\alpha_k^0 t + \beta_k^0 t^2) + B_k^0 \sin(\alpha_k^0 t + \beta_k^0 t^2) + X(t)\bigg) \sin(\alpha_1^0 t + \beta_1^0 t^2)\Bigg\}^2\Bigg]&\\
& = \frac{1}{4}({A_2^0}^2 + {B_2^0}^2 -{A_1^0}^2 - {B_1^0}^2) < 0 \ \ a.s.  \ (\textmd{Assumption} \ 3.)&.\\
& \textmd{Similarly, } \limsup_{n \rightarrow \infty}\sup_{S_c^2}\frac{1}{n}\bigg(I_1(\alpha,\beta) - I_1(\alpha_1^0,\beta_1^0)\bigg) =  \frac{1}{4}(0 + 0 - {A_1^0}^2 - {B_1^0}^2) < 0 \ \ a.s. &
\end{flalign*}
Therefore, $\tilde{\alpha_1}$ $\xrightarrow{a.s.}$ $\alpha_1^0$ and $\tilde{\beta_1}$ $\xrightarrow{a.s.}$ $\beta_1^0$ by Lemma~\ref{lemma:6}. Now we prove the consistency of linear parameter estimators $\tilde{A_1}$ and $\tilde{B_1}$. Observe that
\begin{equation*}
\tilde{A_1} = \frac{2}{n}\sum_{t=1}^{n} y(t)\cos(\tilde{\alpha_1} t + \tilde{\beta_1} t^2) = \frac{2}{n}\sum_{t=1}^{n} \bigg(\sum_{k=1}^{p}A_k^0\cos(\alpha_k^0 t + \beta_k^0 t^2) + B_k^0\sin(\alpha_k^0 t + \beta_k^0 t^2) + X(t)\bigg)\cos(\tilde{\alpha_1} t + \tilde{\beta_1} t^2).
\end{equation*}
\setlength{\belowdisplayskip}{0pt} \setlength{\belowdisplayshortskip}{0pt}
\setlength{\abovedisplayskip}{0pt} \setlength{\abovedisplayshortskip}{0pt}
\justify
We know that, $\frac{2}{n}\sum\limits_{t=1}^{n} X(t)\cos(\tilde{\alpha_1} t + \tilde{\beta_1} t^2) \rightarrow 0$. Now expanding $\cos(\tilde{\alpha_1} t + \tilde{\beta_1} t^2)$ by multivariate Taylor series around $(\alpha^0 , \beta^0)$ and using Lemmas~\ref{lemma:7} and~\ref{lemma:2}, we get:
$\tilde{A_1}$ $\rightarrow$ $A_1^0$ a.s. and $\tilde{B_1}$ $\rightarrow$ $B_1^0$ a.s.\\ \qed \\ 
\justify
\emph{Proof of Theorem 5:}
From Theorem~\ref{theorem:4} and Lemmas~\ref{lemma:6} and~\ref{lemma:7}, we have the following: 
\begin{flalign*}
&\tilde{A_1} = A_1^0 + o(1), & \\
&\tilde{B_1} = B_1^0 + o(1), & \\
&\tilde{\alpha_1} = \alpha_1^0 + o(n^{-1}), & \\
&\tilde{\beta_1} = \beta_1^0 + o(n^{-2}). &
\end{flalign*}
\begin{flalign}\label{eq:16}
&\textmd{Thus,} \ \tilde{A_1} \cos(\tilde{\alpha_1} t + \tilde{\beta_1} t^2) + \tilde{B_1} \sin(\tilde{\alpha_1} t + \tilde{\beta_1} t^2) = A_1^0 \cos(\alpha_1^0 t + \beta_1^0 t^2) + B_1^0  \sin(\alpha_1^0 t + \beta_1^0 t^2) + o(1).&
\end{flalign}
Now let us consider the difference $\frac{1}{n}\bigg(I_2(\alpha, \beta) - I_2(\alpha_2^0,  \beta_2^0)\bigg)$ \\
\begin{equation}\begin{split}\label{eq:17}
& = \frac{1}{n^2}\Bigg[\Bigg\{\sum_{t=1}^{n}y^1(t)\cos(\alpha t + \beta t^2)\Bigg\}^2 + \Bigg\{\sum_{t=1}^{n}y^1(t)\sin(\alpha t + \beta t^2)\Bigg\}^2 - \Bigg\{\sum_{t=1}^{n}y^1(t)\cos(\alpha_2^0 t + \beta_2^0 t^2)\Bigg\}^2\\
& \quad - \Bigg\{\sum_{t=1}^{n}y^1(t)\sin(\alpha_2^0 t + \beta_2^0 t^2)\Bigg\}^2\Bigg].
\end{split} \end{equation}
Here, $y^1(t) = y(t) - \tilde{A_1} \cos(\tilde{\alpha_1} t + \tilde{\beta_1} t^2) + \tilde{B_1} \sin(\tilde{\alpha_1} t + \tilde{\beta_1} t^2)$, that is the new data obtained by removing the effect of the first component from the observed data $y(t)$. Using~(\ref{eq:16}), we have $$y^1(t) = o(1) + \sum_{k=2}^p (A_k^0 \cos(\alpha_k^0 t + \beta_k^0 t^2) + B_k^0 \sin(\alpha_k^0 t + \beta_k^0 t^2)) + X(t).$$
Substituting this in~(\ref{eq:17}), we have:
\begin{flalign*}
&\frac{1}{n}\bigg(I_2(\alpha,\beta) - I_2(\alpha_2^0,\beta_2^0)\bigg)& \\
& = \frac{1}{n^2} \Bigg[\Bigg\{\sum_{t=1}^{n}\bigg( o(1) + \sum_{k=2}^p A_k^0 \cos(\alpha_k^0 t + \beta_k^0 t^2) + B_k^0 \sin(\alpha_k^0 t + \beta_k^0 t^2) + X(t)\bigg)\cos(\alpha t + \beta t^2)\Bigg\}^2\Bigg]&\\
& + \frac{1}{n^2} \Bigg[\Bigg\{\sum_{t=1}^{n}\bigg( o(1) + \sum_{k=2}^p A_k^0 \cos(\alpha_k^0 t + \beta_k^0 t^2) + B_k^0 \sin(\alpha_k^0 t + \beta_k^0 t^2) + X(t)\bigg)\sin(\alpha t + \beta t^2)\Bigg\}^2\Bigg]&\\
&- \frac{1}{n^2} \Bigg[\Bigg\{\sum_{t=1}^{n}\bigg( o(1) + \sum_{k=2}^p A_k^0 \cos(\alpha_k^0 t + \beta_k^0 t^2) + B_k^0 \sin(\alpha_k^0 t + \beta_k^0 t^2) + X(t)\bigg)\cos(\alpha_2^0 t + \beta_2^0 t^2)\Bigg\}^2\Bigg]&\\
&- \frac{1}{n^2} \Bigg[\Bigg\{\sum_{t=1}^{n}\bigg( o(1) + \sum_{k=2}^p A_k^0 \cos(\alpha_k^0 t + \beta_k^0 t^2) + B_k^0 \sin(\alpha_k^0 t + \beta_k^0 t^2) + X(t)\bigg)\sin(\alpha_2^0 t + \beta_2^0 t^2)\Bigg\}^2\Bigg].&\\
\end{flalign*}
\justify
The set  $S_c = \{ (\alpha , \beta): |\alpha - \alpha_2^0| > c \ or \  |\beta - \beta_2^0|  > c \} $ can be split into $p$ sets and written as $S_c^1 \cup S_c^2 \cup \cdots \cup S_c^p$ where \\ \\
$S_c^1 = \{(\alpha,\beta): |\alpha - \alpha_2^0| > c  \ or\  |\beta - \beta_2^0| > c, \ (\alpha,\beta) = (\alpha_1^0, \beta_1^0)\}$, \\ \\
$S_c^2 = \{(\alpha,\beta): |\alpha - \alpha_2^0| > c  \ or\  |\beta - \beta_2^0| > c, \ (\alpha,\beta) = (\alpha_3^0, \beta_3^0)\}$, \\ \\
$\vdots$\\ \\
$S_c^{p-1} = \{(\alpha,\beta): |\alpha - \alpha_2^0| > c  \ or\  |\beta - \beta_2^0| > c, \ (\alpha,\beta) = (\alpha_p^0, \beta_p^0)\},$ \\ \\
$S_c^p= \{(\alpha,\beta): |\alpha - \alpha_2^0| > c  \ or\  |\beta - \beta_2^0| > c, \ (\alpha,\beta) \neq (\alpha_k^0, \beta_k^0),\textmd{ for  any } k = 1, \cdots, p\}.$
\begin{flalign*}
&\textmd{It can be easily seen that: } \limsup_{n \rightarrow \infty}\sup_{S_c^k}\frac{1}{n}\bigg(I_2(\alpha,\beta) - I_2(\alpha_2^0,\beta_2^0)\bigg) < 0 \ \  a.s.  \ \forall \ k = 1, \cdots, p.&
\end{flalign*}
Combining, we have $\limsup\limits_{n \rightarrow \infty}\sup\limits_{S_c}\frac{1}{n}\bigg(I_2(\alpha,\beta) - I_2(\alpha_2^0,\beta_2^0)\bigg)$ $<$ 0 $a.s.$
Therefore, $\tilde{\alpha_2}$ $\xrightarrow{a.s.}$ $\alpha_2^0$ and $\tilde{\beta_2}$ $\xrightarrow{a.s.}$ $\beta_2^0$ by Lemma~\ref{lemma:6}. Following the same argument as in Theorem~\ref{theorem:4}, we can prove the consistency of linear parameter estimators $\tilde{A_2}$ and $\tilde{B_2}$. \\ \qed

\justify
\emph{Proof of Theorem 6:} We know that the ALSEs of the linear parameters $A$ and $B$ are given by:
\begin{flalign*}
&\tilde{A} = \frac{2}{n} \sum_{t=1}^{n} y(t) \cos(\tilde{\alpha} t + \tilde{\beta} t^2), and &\\
&\tilde{B} = \frac{2}{n} \sum_{t=1}^{n} y(t) \sin(\tilde{\alpha} t + \tilde{\beta} t^2).&
\end{flalign*}
Let $k = p+1$, then 
$ \tilde{A}_{p+1} = \frac{2}{n} \sum\limits_{t=1}^{n} y^{p+1}(t) \cos(\tilde{\alpha}_{p+1} t + \tilde{\beta}_{p+1}t^2) $
where $\tilde{\alpha}_{p+1}$ and $\tilde{\beta}_{p+1}$ are obtained by maximising $I_{p+1}(\alpha,\beta)$ and $y^{p+1}(t) = y(t) - \bigg(\sum\limits_{k=1}^{p} \tilde{A}_k \cos(\tilde{\alpha}_{k} t + \tilde{\beta}_{k}t^2) + \tilde{B}_k \cos(\tilde{\alpha}_{k} t + \tilde{\beta}_{k}t^2)\bigg)$. \\
Using~(\ref{eq:16}), we have:
\begin{equation*}\begin{split}
\tilde{A}_{p+1} & = \frac{2}{n}\sum_{t=1}^{n}X(t) \cos(\tilde{\alpha}_{p+1} t + \tilde{\beta}_{p+1} t^2) + o(1),
\end{split}\end{equation*}
\begin{equation*}\begin{split}
\tilde{B}_{p+1} & = \frac{2}{n}\sum_{t=1}^{n}X(t) \sin(\tilde{\alpha}_{p+1} t + \tilde{\beta}_{p+1} t^2) + o(1).
\end{split}\end{equation*}
Using Lemma~\ref{lemma:1}, we have $\tilde{A}_{p+1} \xrightarrow{a.s.} 0 $ and $\tilde{B}_{p+1} \xrightarrow{a.s.} 0 $ \\ \qed 
\section*{Appendix D}\label{appendix:D}
\emph{Proof of Theorem 7:} First we prove that $(\boldsymbol{\tilde{\theta}_{1}} - \boldsymbol{\theta_1^0})\mathbf{D}^{-1}$ has the same asymptotic distribution as  $(\boldsymbol{\hat{\theta}_{1}} - \boldsymbol{\theta_1^0})\mathbf{D}^{-1}$. Consider $Q_1(\boldsymbol{\theta})$ as the error sum of squares, then 
\begin{flalign*}
\frac{1}{n}Q_1(\boldsymbol{\theta})& = \frac{1}{n}\sum_{t=1}^{n} \bigg(y(t) - (A\cos(\alpha t + \beta t^2) + B \sin(\alpha t + \beta t^2))\bigg)^2 &\\
& = \frac{1}{n}\sum_{t=1}^{n}y(t)^2 - \frac{2}{n}\sum_{t=1}^{n} y(t)\{A\cos(\alpha t + \beta t^2) + B \sin(\alpha t + \beta t^2)\} + \frac{1}{2}\bigg(A^2 + B^2\bigg) + o(1) &\\
& = C - \frac {1}{n}J_1(\boldsymbol{\theta}) + o(1), \textmd{ where}&
\end{flalign*}
\begin{flalign*}
&C = \frac{1}{n} \sum_{t=1}^{n}y^2(t),\textmd{ and  } \frac {1}{n}J_1(\boldsymbol{\theta}) = \frac{2}{n}\sum_{t=1}^{n} y(t)\{A\cos(\alpha t + \beta t^2) + B \sin(\alpha t + \beta t^2)\}  - \frac{A^2 + B^2}{2}. &
\end{flalign*}
Working on the similar lines as for the one component model in  \nameref{appendix:B}, one can show that: \\
\begin{flalign*}
&\lim_{n \rightarrow \infty} \mathbf{Q_1}'(\boldsymbol{\theta_1^0}) \mathbf{D} =  \lim_{n \rightarrow \infty} -\mathbf{J_1}'(\boldsymbol{\theta_1^0}) \mathbf{D} \quad \quad \textmd{and}& 
&\lim_{n \rightarrow \infty} \mathbf{D}\mathbf{Q_1}''(\boldsymbol{\theta_1^0})\mathbf{D} = \boldsymbol{\Sigma_1}^{-1} = -\lim_{n \rightarrow \infty} \mathbf{D}\mathbf{J_1}''(\boldsymbol{\theta_1^0})\mathbf{D}.&
\end{flalign*}
\justify
Since at $(\tilde{A} , \tilde{B} , \alpha , \beta)$,  $J_1(\tilde{A} , \tilde{B} , \alpha , \beta)$ =  $I_1(\alpha , \beta)$,
the estimator of $\boldsymbol{\theta_1^0}$ which maximises $J_1(\boldsymbol{\theta})$ is equivalent to $\boldsymbol{\tilde{\theta_1}}$, the ALSE of $\boldsymbol{\theta_1^0}$ which implies $\mathbf{J_1}'(\boldsymbol{\tilde{\theta_1}})$ = 0. Now expanding $\mathbf{J_1}'(\boldsymbol{\tilde{\theta_1}})$ around $\mathbf{J_1}'(\boldsymbol{\theta_1^0})$ by Taylor series, we have:
\begin{flalign*}
&(\boldsymbol{\tilde{\theta_1}} - \boldsymbol{\theta_1^0}) = -\mathbf{J_1}'(\boldsymbol{\theta_1^0})[\mathbf{J_1}''(\boldsymbol{\bar{\theta_1}})]^{-1}.&\\
\Rightarrow &(\boldsymbol{\tilde{\theta_1}} - \boldsymbol{\theta_1^0})\mathbf{D}^{-1} = -[\mathbf{J_1}'(\boldsymbol{\theta_1^0})\mathbf{D}][\mathbf{D}\mathbf{J_1}''(\boldsymbol{\bar{\theta_1}})\mathbf{D}]^{-1}.&
\end{flalign*}
Also, $\lim\limits_{n \rightarrow \infty}[\mathbf{D}\mathbf{J_1}''(\boldsymbol{\bar{\theta_1}})\mathbf{D}] = \lim\limits_{n \rightarrow \infty}[\mathbf{D}\mathbf{J_1}''(\boldsymbol{\theta_1^0})\mathbf{D}].$
\begin{flalign*}
&\textmd{Therefore }\lim_{n \rightarrow \infty}(\boldsymbol{\tilde{\theta_1}} - \boldsymbol{\theta_1^0})\mathbf{D}^{-1}  =  \lim_{n \rightarrow \infty}(\boldsymbol{\hat{\theta_1}} - \boldsymbol{\theta_1^0})\mathbf{D}^{-1}.&
\end{flalign*}
It follows that they have the same asymptotic distribution.\\
Next we prove that $(\boldsymbol{\tilde{\theta_2}} - \boldsymbol{\theta_2^0})\mathbf{D}^{-1}$ has the same asymptotic distribution as that of $(\boldsymbol{\hat{\theta_2}} - \boldsymbol{\theta_2^0})\mathbf{D}^{-1}$.
The estimates of the parameters associated with the second component are obtained by minimising the following:\\
$$\frac{1}{n}Q_2(\boldsymbol{\theta})  = \frac{1}{n}\sum_{t=1}^{n} \bigg(y^1(t) - (A\cos(\alpha t + \beta t^2) + B \sin(\alpha t + \beta t^2))\bigg)^2. $$
\begin{flalign*}
&\textmd{Here }  y^1(t) = y(t) - \tilde{A_1}\cos(\tilde{\alpha_1}t + \tilde{\beta_1} t^2) - \tilde{B_1}\sin(\tilde{\alpha_1}t + \tilde{\beta_1} t^2).&\\
&\textmd{Also, } \frac{1}{n}Q_2(\boldsymbol{\theta}) = C_1 - \frac {1}{n}J_2(\boldsymbol{\theta}) + o(1),&\\
&\textmd{where, }
C_1 = \frac{1}{n} \sum_{t=1}^{n}y^1(t), \ and \ \ 
\frac {1}{n}J_2(\boldsymbol{\theta_2}) = \frac{2}{n}\sum_{t=1}^{n} y^1(t)\{A\cos(\alpha t + \beta t^2) + B \sin(\alpha t + \beta t^2)\}  - \frac{A^2 + B^2}{2}.&
\end{flalign*}
Proceeding in the similar way as for the first component, we compute $\frac{1}{n}\textbf{J}'_2(\boldsymbol{\theta})$ and $\frac{1}{n}\textbf{Q}'_2(\boldsymbol{\theta})$ at $\theta$ = $\boldsymbol{\theta_2^0}$ and we get:
$$\lim_{n \rightarrow \infty} \mathbf{Q_2}'(\boldsymbol{\theta_2^0}) \mathbf{D} =  \lim_{n \rightarrow \infty} -\mathbf{J_2}'(\boldsymbol{\theta_2^0}) \mathbf{D}.$$
Again at $(\tilde{A} , \tilde{B} , \alpha , \beta)$, $J_2(\tilde{A} , \tilde{B} , \alpha , \beta) = I_2(\alpha , \beta)$. Hence the estimator of $\boldsymbol{\theta_2^0}$ which maximises $J_2(\boldsymbol{\theta})$ is equivalent to $\boldsymbol{\tilde{\theta_2}}$, the ALSE of $\boldsymbol{\theta_2^0}$. \\
Thus, $\mathbf{J_2}'(\boldsymbol{\tilde{\theta_2}})$ = 0 and on expanding $\mathbf{J_2}'(\boldsymbol{\tilde{\theta_2}})$ around $\mathbf{J_2}'(\boldsymbol{\theta_2^0})$ by Taylor series expansion, we have:
\begin{flalign*}
&(\boldsymbol{\tilde{\theta_2}} - \boldsymbol{\theta_2^0}) = -\mathbf{J_2}'(\boldsymbol{\theta_2^0})[\mathbf{J_2}''(\boldsymbol{\bar{\theta_2}})]^{-1}.&\\
\Rightarrow &(\boldsymbol{\tilde{\theta_2}} - \boldsymbol{\theta_2^0})\mathbf{D}^{-1} = -[\mathbf{J_2}'(\boldsymbol{\theta_2^0})\mathbf{D}][\mathbf{D}\mathbf{J_2}''(\boldsymbol{\bar{\theta_2}})\mathbf{D}]^{-1}. &
\end{flalign*}
It can be shown that, $\lim\limits_{n \rightarrow \infty}[\mathbf{D}\mathbf{J_2}''(\boldsymbol{\bar{\theta_2}})\mathbf{D}] = \lim\limits_{n \rightarrow \infty}[\mathbf{D}\mathbf{J_2}''(\boldsymbol{\theta_2^0})\mathbf{D}].$
One can compute $\lim\limits_{n \rightarrow \infty} \mathbf{D}\mathbf{J_2}''(\boldsymbol{\theta_2^0})\mathbf{D}$ and $\lim\limits_{n \rightarrow \infty}\mathbf{D}\mathbf{Q_2}''(\boldsymbol{\theta_2^0})\mathbf{D}$, and see that $\lim\limits_{n \rightarrow \infty} \mathbf{D}\mathbf{J_2}''(\boldsymbol{\theta_2^0})\mathbf{D} = -\boldsymbol{\Sigma_2}^{-1} = -\lim\limits_{n \rightarrow \infty} \mathbf{D}\mathbf{Q_2}''(\boldsymbol{\theta_2^0})\mathbf{D}$ as defined in~(\ref{eq:12}). 
\justify
Thus $ (\boldsymbol{\tilde{\theta_2}} - \boldsymbol{\theta_2^0})\mathbf{D}^{-1}$ and $(\boldsymbol{\hat{\theta_2}} - \boldsymbol{\theta_2^0})\mathbf{D}^{-1}$ have the same asymptotic distribution. This result can be extended for all $k = 3, \cdots, p$ and the proofs follow exactly in the same manner. \\ \qed \\
\end{appendices}
\bibliographystyle{plainnat}

\end{document}